\documentclass[11pt]{article}

\usepackage[a4paper,margin=1in]{geometry}
\usepackage{authblk}

\usepackage[T1]{fontenc}
\usepackage{amsmath}
\usepackage{amssymb}
\usepackage{amsfonts}
\usepackage{amsfonts}
\newlength{\mathindent}
\usepackage{amsthm}
\usepackage{bm}

\newtheorem{theorem}{Theorem}[section]
\newtheorem{lemma}[theorem]{Lemma}
\newtheorem{proposition}[theorem]{Proposition}

\theoremstyle{definition}
\newtheorem{definition}[theorem]{Definition}
\newtheorem{example}[theorem]{Example}
\theoremstyle{remark}

\usepackage{graphicx}
\usepackage{booktabs}
\usepackage[compatibility=false]{caption}
\usepackage[matrix,arrow,cmtip,curve]{xy}
\usepackage{hyperref}
\usepackage{tikz,tikz-3dplot,tikz-cd}
\usepackage{caption}
\usepackage{subcaption}
\usepackage{paralist}
\usepackage{comment}

\usepackage{tabularx}
\usepackage{diagbox}
\usepackage{amsfonts}
\usepackage{graphicx}
\usepackage{multicol}
\usepackage[matrix,arrow,cmtip,curve]{xy}
\usepackage{amsmath}
\usepackage{tikz,tikz-3dplot,tikz-cd}
\usepackage{caption}
\usepackage{subcaption}
\usepackage{paralist}
\usepackage{comment}
\DeclareMathAlphabet{\mathbbm}{U}{bbm}{m}{n}          % blackboard bold
\newcommand*\twotwo{\textsf{2+2}}

\newcommand*\ev{\textup{\textsf{ev}}}

\newcommand*{\id}{\mathrm{id}}

\newcommand*{\Path}{\mathrm{Path}}
\newcommand*{\cell}{\mathrm{Cell}}
\newcommand*{\SG}{\mathfrak{S}} %symmetric group

\newcommand*\exec{%
  \raisebox{1pt}{%
    \begin{tikzpicture}[x=.8ex,y=1ex,-]
      \draw (0,0) -- (1,0) -- (1,1) -- (2,1);
    \end{tikzpicture}}}

\newcommand*\ibullet{\vcenter{\hbox{\tiny $\bullet$}}}

\newcommand*\squarefull{\Box_{full}}% square full
\newcommand*\Xifull{\Xi_{full}}
\newcommand*\si{\mathrm{s_P}} %source interface map
\newcommand*\ti{\mathrm{t_P}} %target interface map
\newcommand*\sev{\textsf{ev}}

\makeatother
\newcommand\pomsetwop[4]{% @R; @C; @M; content
  \vcenter{\xymatrix@1@R=#1@C=#2@M=#3{#4}}%
}
\newcommand\pomset[2][1.3]{%
  \left[%
    \pomsetwop{0ex}{#1em}{2pt}{#2}%
  \right]%
}
\newcommand\ipomset[2][1.5]{%
  \hspace*{.6em}%
  \left(%
    \hspace*{-#1em}%
    \pomsetwop{2.7ex}{#1em}{1pt}{#2}%
    \hspace*{-#1em}%
  \right)%
  \hspace*{.6em}%
}
\newcommand\bigipomset[2][1.3]{%
  \hspace*{.1em}%
  \left[%
    \hspace*{-#1em}%
    \pomsetwop{2.5ex}{#1em}{1pt}{#2}%
    \hspace*{-#1em}%
  \right]%
  \hspace*{.1em}%
}

\usetikzlibrary{calc,decorations.pathreplacing,positioning}
\RequirePackage[normalem]{ulem}

\tikzstyle{io} = [trapezium, 
trapezium stretches=true, % A later addition
trapezium left angle=70, 
trapezium right angle=110, 
minimum width=3cm, 
minimum height=1cm, text centered, 
draw=black, fill=blue!30]

\tikzstyle{process} = [rectangle, 
minimum width=3cm, 
minimum height=1cm, 
text centered, 
text width=3cm, 
draw=black, 
fill=orange!30]

\tikzstyle{decision} = [diamond, 
minimum width=3cm, 
minimum height=1cm, 
text centered, 
draw=black, 
fill=green!30]
\tikzstyle{arrow} = [thick,->,>=stealth]
\tikzset{->, auto, >=stealth', font=\small}
\tikzset{state/.style={shape=circle, draw, fill=white, initial text=,
    inner sep=.5mm, minimum size=1.5mm}}

\tikzset{accepting/.style=accepting by arrow}
\tikzset{state with output/.style={shape=rectangle split, rectangle
    split parts=2, draw, fill=white,
    initial text=, inner sep=1mm}}
    \usetikzlibrary{arrows.meta, positioning} 
\usetikzlibrary{fit, arrows, automata, shapes, calc, fadings,shapes.geometric}
\tikzstyle{startstop} = [rectangle, rounded corners, 
minimum width=3cm, 
minimum height=1cm,
text centered, 
draw=black, 
fill=red!30]
\begin{document}
\title{Forgetting Event Order in Higher-Dimensional Automata}

\author[1]{Safa Zouari}
\affil[1]{Norwegian University of Science and Technology, Gj{\o}vik, Norway\\\texttt{safa.zouari@ntnu.no}}
\date{}

\maketitle

\begin{abstract}
 Higher-dimensional automata (HDAs) provide a geometric model of true concurrency, yet their standard formulation encodes an artificial total order on events. This representational artifact causes a fundamental mismatch between the combinatorial structure of HDAs and their observable behavior, leading to logical asymmetries and complicating the application of categorical tools. In this paper, we resolve this tension by developing a semantics for HDAs that is independent of event order, based on interval ipomsets (partially ordered multisets with interfaces) that preserve only precedence and concurrency. We prove that for any HDA, the traditional ST–trace of an execution path corresponds precisely to its associated interval ipomset. On the structural side, we show that the presheaf-theoretic presentation with an unordered base and the combinatorial presentation of symmetric HDAs are categorically isomorphic. Finally, by characterizing ST- and hereditary history-preserving (hhp) bisimulation via ipomset isomorphism, we provide a unified, order-free foundation for HDA semantics. Our results resolve several critical ambiguities in the literature: they provide the necessary path-category structure to canonically apply the Open Maps framework, eliminate representational artifacts in temporal and modal logics, and bridge systematic mismatches between HDAs and other models of concurrency such as Petri nets.
\end{abstract}

\section{Introduction}
Higher-dimensional automata (HDAs) offer a geometric view of true concurrency: higher-dimensional cells represent independent events occurring simultaneously rather than through artificial interleaving. This structure makes HDAs a unifying semantic framework for concurrency, as Petri nets, event structures, and asynchronous transition systems all embed into HDAs up to hereditary history-preserving bisimulation~\cite{PrattCG,VANGLABBEEK2006265,nielsen1981petri,van1995configuration,bednarczyk1989categories,shields1985concurrent}. 

A central question in HDA theory is the abstract representation of execution behavior. Two formalisms traditionally dominate: ST-traces, which record the causal evolution of events via start/termination observations~\cite{VANGLABBEEK2006265,vG91}, and ipomset labels—partially ordered multisets with interfaces—which provide a compositional language theory~\cite{LanguageofHDA,KleeneTh,MyhillNerode,amrane2023developments}. Although both aim to capture the same observable behavior, their relationship remains subtle and, at times, contradictory. In existing work, pomset-based semantics impose an extraneous event order inherited from the HDA's combinatorial structure. This order lacks a semantic counterpart in models like Petri nets and is absent from ST-trace observations. Consequently, pomset labels frequently distinguish executions that are observationally equivalent, creating a conceptual tension in behavioral reasoning. This mismatch has tangible consequences for logic. In temporal and modal logics for HDAs~\cite{amrane2025buchielgottrakhtenbrottheoremhigherdimensionalautomata,Clement2024KampTF,logiclang,Safa}, the imposed order breaks symmetry: a formula may accept \(a\parallel b\) but reject \(b\parallel a\),
even though these executions are observationally indistinguishable.

Such asymmetries are artifacts of the representation rather than of the underlying logic. Moreover, injecting Petri-net behaviors into ordered HDA frameworks leads to systematic mismatches \cite{efficintconversion, petriandhda}, pointing to a deeper structural problem: the standard event ordered representation is fundamentally incompatible with the categorical tools used to reason about concurrency. Specifically, while the Open Maps framework \cite{JOYAL1996164} provides a standard recipe for deriving modal logics that characterize \emph{hereditary history preserving bisimulation} (hhp-bisimulation), its application to HDAs has remained ambiguous. The reliance on ST-traces lacks the necessary path-category structure, leaving the relationship between observational traces and categorical logic poorly defined.

The event order carried by the standard (ordered) presentation of HDAs is a representational artifact: it is not part of the observable execution content captured by trace-based semantics. To provide a robust foundation that is applicable to all HDAs, we leverage a fundamental result from Kahl's combinatorial theory, which proves that every HDA is hhp-bisimilar to its symmetric expansion \cite{KAHL202247}. We formally justify this approach by proving that our adopted presheaf-theoretic framework of \cite{struth2024presheaf} is categorically isomorphic to Kahl's symmetric framework, ensuring that our results are universally applicable to the entire class of all HDAs, not just the symmetric ones.

We therefore move beyond the standard ordered presentation of HDAs \cite{LanguageofHDA,KleeneTh}. By adopting Fishburn's original formulation of pomsets—defined as labeled partial orders— \cite{fishburn1970intransitive} and employing gluing composition \cite{generatingposetbeyondn}, we show that the observable content of an HDA path is precisely an
\emph{interval pomset with interfaces}, retaining only precedence and canonical interfaces.
This makes it possible to identify executions via ipomset isomorphism and yields an order-free semantics
that agrees with ST-observations and restores symmetry
(e.g.\ $(a \parallel b) \cong (b \parallel a)$). Our results are:
\begin{itemize}
    \item \textbf{Forgetting the Event Order in HDA Structure (Section \ref{sec:base cat} \& \ref{sec:HDA}):} We prove that the category of HDAs over an order-free base—following the presheaf approach of Struth et al. \cite{struth2024presheaf}—is isomorphic to the category of symmetric HDAs Kahl \cite{KAHL202247}. This result is crucial as it imports the Symmetrization Theorem of \cite{KAHL202247}—which states that every HDA is hhp-bisimilar to its symmetric version—into the presheaf framework, providing the formal basis for our order-free path semantics.
    \item \textbf{Forgetting Event Order in Path Semantics (Section \ref{sec: obs content}):} We introduce a labeling functor that assigns a canonical interval ipomset to each HDA path. This construction is intrinsic to the execution and eliminates representational artifacts without losing concurrent information.
    \item \textbf{Trace Correspondence (Section \ref{sec: relation}):} We establish a formal bridge between trace and ipomset-based observations, making ipomsets the faithful semantic counterpart to ST-traces.
    \item \textbf{Behavioral \& Logical Characterization (Section \ref{sec: Bis for HDA}):} We characterize ST- and hhp-bisimulation via ipomset isomorphism. Crucially, we show that our order-free semantics provides the necessary structure to apply the Open-Maps Framework canonically. This resolves previous ambiguities in the trace-based approach and ensures that the resulting modal logics correctly characterize hhp-bisimulation without representational artifacts.
\end{itemize}

\section{Base categories}\label{sec:base cat}
This section introduces the base categories underlying our variants of
higher-dimensional automata. They differ in how they represent
\emph{event ordering} and \emph{symmetry}.
The category $\square$ captures ordered concurrency through labeled event lists,
while $\Xi$ omits the event order, retaining only labels.
The generator based version $\Xi_g$ extends $\square$ by freely adding
permutation morphisms. We prove that $\Xi_g \cong \Xi$,
showing that adding event symmetries is equivalent to removing the event order.
After establishing this equivalence, we identify $\Xi_g$ with $\Xi$.
\subsection{The labeled precube category \texorpdfstring{$\square$}{[square]}}
Let $\Sigma$ be a given set of \emph{actions}.
\begin{definition}(\cite{struth2024presheaf})[Concurrency list]\label{def: conclist}
A \emph{concurrency list} (or \emph{conclist}) is a tuple $(U,\dashrightarrow,\lambda)$ where $U$ is a finite set equipped with a strict (i.e., irreflexive) total order $\dashrightarrow$, called the event order.
The function $\lambda:U\to\Sigma$ is a labeling map.

\end{definition}
The set $U$ models the concurrent local events active in a cell of an HDA, whereas the event order $\dashrightarrow$ can be seen as their index order.
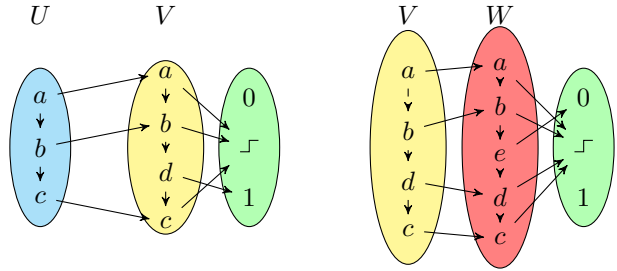
\begin{figure}[ht]
    \centering
    \begin{tikzpicture}[scale=1.1, every node/.style={font=\small}]
        % Draw set A
        \node[draw, ellipse, fill=cyan!30, minimum width=.8cm, minimum height=2.1cm, label=left:] (A) at (-1.5,0) {};
        \node at (-1.5, 0.6) (a) {$a$};
        \node at (-1.5, 0) (b) {$b$};
        \node at (-1.5, -0.6) (c) {$c$};
        \node at (-1.5, 1.6) (U) {$U$};
        \draw[->, dashed, bend left=0] (a) to (b);
        \draw[->, dashed, bend left=0] (b) to (c);

        % Draw set B
        \node[draw, ellipse, fill=yellow!50, minimum width=1cm, minimum height=2.3cm, label=right:] (B) at (0,0) {};
        \node at (0, 1.2-.6+.3) (1) {$a$};
        \node at (0, 0.4-.2+.1) (2) {$b$};
        \node at (0, -0.4+.1) (3) {$d$};
        \node at (0, -1.2+.6-.3) (4) {$c$};
        \node at (0, 1.6) (V) {$V$};
\draw[->, dashed, bend left=0] (1) to (2);
\draw[->, dashed, bend left=0] (2) to (3);
\draw[->, dashed, bend left=0] (3) to (4);
    % Draw set C
        \node[draw, ellipse, fill=green!30, minimum width=.8cm, minimum height=2.1cm, label=left:] (A) at (1,0) {};
        \node at (1, 0.6) (zero) {$0$};
        \node at (1, 0) (exec) {$\exec$};
        \node at (1, -0.6) (one) {$1$};

        %%%%%%%for the map g,zetaa
           % Draw set A
        \node[draw, ellipse, fill=yellow!50, minimum width=1cm, minimum height=3.1cm, label=right:] (Bp) at (1.5+1.4,0) {};
        %\node at (1.5+1, 0.6) (ap) {$a$};
        %\node at (1.5+1, 0) (bp) {$b$};
        %\node at (1.5+1, -0.6) (cp) {$c$};
        %\node at (1.5+1, 1.6) (Up) {$V$};
        \node at (1.5+1.4, 1.2-.3) (aa) {$a$};
        \node at (1.5+1.4, 0.4-.2) (bb) {$b$};
        \node at (1.5+1.4, -0.4+.2-.2) (cc) {$d$};
        \node at (1.5+1.4, -1.2+.2) (dd) {$c$};
        \node at (1.5+1.4, 1.6) (VV) {$V$};
\draw[->, dashed, bend left=0] (aa) to (bb);
\draw[->, dashed, bend left=0] (bb) to (cc);
\draw[->, dashed, bend left=0] (cc) to (dd);
        % Draw set B
        \node[draw, ellipse, fill=red!50, minimum width=1cm, minimum height=3.2cm, label=right:] (Bp) at (2+1+1,0) {};
        \node at (2+1+1, 1) (111) {$a$};
        \node at (2+1+1, 0.5) (222) {$b$};
        \node at (2+1+1, -0.4+.3) (eee) {$e$};
        \node at (2+1+1, -0.4-.2) (333) {$d$};
        \node at (2+1+1, -1.2+.1) (444) {$c$};
        \node at (2+1+1, 1.6) (VV) {$W$};

\draw[->, dashed, bend left=0] (111) to (222);
\draw[->, dashed, bend left=0] (222) to (eee);
\draw[->, dashed, bend left=0] (eee) to (333);
\draw[->, dashed, bend left=0] (333) to (444);
    % Draw set C
        \node[draw, ellipse, fill=green!30, minimum width=.8cm, minimum height=2.1cm, label=left:] (Ap) at (4+1,0) {};
        \node at (4+1, 0.6) (zerop) {$0$};
        \node at (4+1, 0) (execp) {$\exec$};
        \node at (4+1, -0.6) (onep) {$1$};

        % Draw arrows
        \draw[->] (a) -- (1);
        \draw[->] (b) -- (2);
       % \draw[->] (c) -- (3);
        \draw[->] (c) -- (4);
  \draw[->] (1) -- (exec);
   %\draw[->] (1) -- (2);
  \draw[->] (2) -- (exec);
  \draw[->] (4) -- (exec);
  \draw[->] (3) -- (one);
  %%%%%%%%%%%%%%%%%%%%
  %%%%%%%%%%%%%%%%%%%%arrows for g, zeta
        \draw[->] (aa) -- (111);
        \draw[->] (bb) -- (222);
       % \draw[->] (c) -- (3);
        \draw[->] (cc) -- (333);
        \draw[->] (dd) -- (444);
        \draw[->] (eee) -- (zerop);
  \draw[->] (111) -- (execp);
   %\draw[->] (1) -- (2);
  \draw[->] (222) -- (execp);
  \draw[->] (444) -- (execp);
  \draw[->] (333) -- (execp);
    \end{tikzpicture}
    \caption{Example of conclist maps $(f,\varepsilon):U \to V$ (on the left) and $(g,\zeta):V \to W$ (on the right).}
    \label{fig: exam of conclist maps}
\end{figure}
\begin{figure}[ht]
    \centering
    \begin{tikzpicture}[scale=1.1, every node/.style={font=\small}]

        % Draw set A
        \node[draw, ellipse, fill=cyan!30, minimum width=.8cm, minimum height=2.5cm, label=left:] (A) at (-1.5,0) {};
        \node at (-1.5, 0.6) (a) {$a$};
        \node at (-1.5, 0) (b) {$b$};
        \node at (-1.5, -0.6) (c) {$c$};
        \node at (-1.5, 1.6) (U) {$U$};
\draw[->, dashed, bend left=0] (a) to (b);
\draw[->, dashed, bend left=0] (b) to (c);
%\draw[->, dashed, bend left=0] (3) to (4);
        % Draw set B
        \node[draw, ellipse, fill=red!50, minimum width=1cm, minimum height=2.9cm, label=right:] (Bp) at (2-1+.5,0) {};
        \node at (2-1+.5, 1.2-.2) (111) {$a$};
        \node at (2-1+.5, 0.5) (222) {$b$};
        \node at (2-1+.5, -0.4+.4) (eee) {$e$};
        \node at (2-1+.5, -0.4-.1) (333) {$d$};
        \node at (2-1+.5, -1.2+.2) (444) {$c$};
        \node at (2-1+.5, 1.6) (VV) {$W$};
\draw[->, dashed, bend left=0] (111) to (222);
\draw[->, dashed, bend left=0] (222) to (eee);
\draw[->, dashed, bend left=0] (eee) to (333);
\draw[->, dashed, bend left=0] (333) to (444);
    % Draw set C
        \node[draw, ellipse, fill=green!30, minimum width=.8cm, minimum height=1.9cm, label=left:] (Ap) at (4+1,0) {};
        \node at (4+1, 0.6) (zerop) {$0$};
        \node at (4+1, 0) (execp) {$\exec$};
        \node at (4+1, -0.6) (onep) {$1$};

        % Draw arrows
        \draw[->] (a) -- (111);
        \draw[->] (b) -- (222);
  %%%%%%%%%%%%%%%%%%%%
  %%%%%%%%%%%%%%%%%%%%arrows for g, zeta
    %    \draw[->] (aa) -- (111);
      %  \draw[->] (bb) -- (222);
       % \draw[->] (c) -- (3);
        \draw[->] (c) -- (444);
      %  \draw[->] (dd) -- (444);
        \draw[->] (eee) -- (zerop);
  \draw[->] (111) -- (execp);
   %\draw[->] (1) -- (2);
  \draw[->] (222) -- (execp);
  \draw[->] (444) -- (execp);
  \draw[->] (333) -- (onep);
    \end{tikzpicture}
    \caption{Illustration of $(g,\zeta)\circ(f,\varepsilon):U \to W$, the composition of the conclist maps of Fig \ref{fig: exam of conclist maps}.}
        \label{fig: conclist composition}
\end{figure}
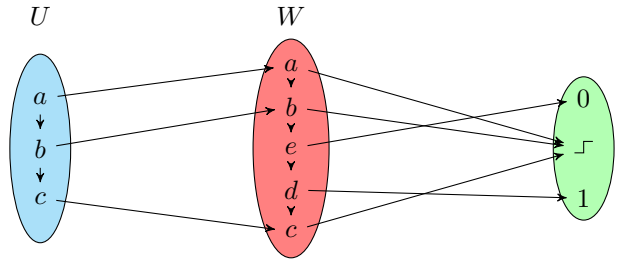

\begin{definition}(\cite{langthda})\label{def: conclist map} 
    A \emph{conclist map} from a conclist $U$ to $V$ is a pair $(f,\varepsilon)$
    such that:
    \begin{itemize}
     \item    $f:U\to V$ is a label and order-preserving function;
    \item 
        $\varepsilon: V\rightarrow \{ 0,\exec, 1 \}$ is a function such that $\varepsilon^{-1}(\exec)= f(U)$.
    \end{itemize}
    The composition of morphisms $(f,\varepsilon):S\to T$ and
    $(g,\zeta):T\to U$ is $(g,\zeta) \circ (f,\varepsilon)=(g\circ f,\eta)$, where
$\eta(u)=
\begin{cases}
\varepsilon(t) & \text{if } u=g(t)\text{ for some } t\in T,\\
\zeta(u) & \text{if } u\notin g(T).
\end{cases}$
  
    The \emph{full labeled precube category} $\squarefull$ has conclists as objects and conclist maps as morphisms.
\end{definition}
 The \emph{restricted} category $\square$ is the full subcategory on canonical conclists $(n,\lambda)$ with carrier $\{1\dashrightarrow\cdots\dashrightarrow n\}$ (and $(0)=\emptyset$). The inclusion $\square\hookrightarrow\squarefull$ is an equivalence; hence their presheaf categories are naturally equivalent \cite{LanguageofHDA}.
\paragraph{Intuition.}
Fix a conclist map $(f,\varepsilon):U\to V$ with $|U|=m$, $|V|=n$ ($m\le n$). Viewing $V$ as the list of $n$ potential concurrent events, $\varepsilon$ classifies each $v\in V$ as \emph{not yet started} ($0$), \emph{executing} ($\exec$), or \emph{terminated} ($1$). The map $f$ specifies how the $m$ active events of $U$ appear (by name and order) inside $V$; thus $f$ is determined by the inactive set $V\setminus\varepsilon^{-1}(\exec)$. See Fig \ref{fig: exam of conclist maps} and \ref{fig: conclist composition} for examples of conclist maps and their composition.
\paragraph{Isomorphisms.}
A conclist map $(f,\varepsilon):U\to V$ is an isomorphism iff $f$ is bijective; then necessarily $\varepsilon\equiv\exec$. We write $U\cong V$ for isomorphic conclists. If two conclists are isomorphic, then the isomorphism between them is unique \cite{KleeneTh}.
\paragraph{Insertion maps \texorpdfstring{$\iota_i$}{\iota\_i}.}
For canonical conclists $U=[n\!-\!1,\lambda]$ and $V=[n,\lambda']$, let $\iota_i:U\hookrightarrow V\; (1\le i\le n)$ be the unique order- and label-preserving injection whose image omits the $i$-th element of $V$. It has the following explicit expression
$$\iota_i(j)=
\begin{cases}
j & j<i,\\
j+1 & j\ge i.
\end{cases}$$
\paragraph{Canonical coface maps.} For canonical conclists $U=(n\!-\!1,\lambda)$ and $V=(n,\lambda')$ and $k\in\{0,1\}$, the coface map $d^k_{i,n}:U\to V$ $(1\le i\le n)$ is the unique conclist map $(\iota_i,\varepsilon)$ such that $\iota_i$ is the insertion map,
$\varepsilon(i)=k$, and $\varepsilon(j)=\exec\ (j\neq i)$. When $n$ is clear we write $d^k_i$.
\subsection{The symmetric labeled precube category \texorpdfstring{$\Xi$}{Xi}}
\begin{definition}[Concurrency set]\label{def: concset}
A \emph{concurrency set} (or \emph{concset}) is a pair $(U,\lambda)$ where $U$ is a finite set and $\lambda:U\to\Sigma$ is a labeling map.
That is, concsets are just conclists without an event order.
\end{definition}
A \emph{concset map} from $U$ to $V$ is a pair $(f,\varepsilon)$ such that
\begin{itemize}
  \item $f:U\to V$ is an injective, label-preserving function;
  \item $\varepsilon:V\to\{0,\exec,1\}$ satisfies $\varepsilon^{-1}(\exec)=f(U)$.

  Composition of concset maps is defined as in Definition~\ref{def: conclist map}.
\end{itemize}
Write $\Xifull$ for the full symmetric labeled precube category on all concsets and concset maps. The \emph{restricted} subcategory $\Xi$ has as objects the canonical concsets $(n,\lambda)$ with carrier $\{1,\dots,n\}$ (and $(0)=\emptyset$). As previously, the inclusion $\Xi\hookrightarrow\Xifull$ is an equivalence. 
\paragraph{Interpretation.}
A similar interpretation applies here; given a concset map $(f,\varepsilon):U\to V$ with $|U|=m$ and $|V|=n$ $(m\le n)$, the function $\varepsilon:V\to\{0,\exec,1\}$ classifies each event of $V$ as \emph{not yet started} $(0)$, \emph{executing} $(\exec)$, or \emph{terminated} $(1)$, and $f:U\hookrightarrow V$ embeds the $m$ active events by ensuring $\varepsilon^{-1}(\exec)=f(U)$. However, unlike the ordered case $\square$, the absence of an event order in $\Xi$ means that $\varepsilon$ does not need to determine $f$.

\emph{Example.} From $(2,aa)$ to $(3,aab)$, take $\varepsilon=(\exec,\exec,0)$ on the target. There are two distinct label-preserving injections $f_1,f_2$ that map the two $a$'s of the source to the two $a$'s of the target (identity vs.\ swap). Thus the same $\varepsilon$ admits multiple $f$'s. Consequently, concset isomorphisms are not unique; in fact, any permutation of equally labeled events yields another isomorphism.
\paragraph{Forgetting order.}
There is an obvious forgetful functor $s:\square \to \Xi$ that erases the event order on objects and sends a conclist map $(f,\varepsilon)$ to the same pair viewed as a concset map. Thus $\square$ is a wide subcategory of $\Xi$. The inclusion is not full (e.g.\ nontrivial permutations exist in $\Xi$ but not in $\square$). We henceforth identify $\square$ with its image under $s$ and write $\square\subseteq \Xi$.
\paragraph{Canonical coface maps.}
For canonical concsets $U=(n\!-\!1,\gamma)$ and $V=(n,\gamma')$ we define analogously $d^k_{i,n}:U\to V$ with the same $\varepsilon$ and insertion map $\iota_i$ (now without the event order).

\subsection{The generator based categories \texorpdfstring{$\Xi_g$}{Xi	extsubscript{g}} and \texorpdfstring{$\square_g$}{square	extsubscript{g}}}\label{sub: xi_g} 

This section introduces the base categories for our HDA variants, which differ in their treatment of \emph{event ordering} and \emph{symmetry}. While $\square$ represents ordered concurrency, $\Xi$ omits event order to provide a coordinate-free foundation. We introduce $\Xi_g$ by freely adding permutation morphisms to $\square$ and prove $\Xi_g \cong \Xi$. This isomorphism demonstrates that \textbf{adding symmetries is equivalent to removing event order}, allowing us to identify $\Xi_g$ with $\Xi$ henceforth.

\paragraph{Symmetric group and induced permutations on faces.}
For $n\ge 0$ let $\mathfrak{S}_n$ be the symmetric group on $\{1,\dots,n\}$; write
$\mathfrak{S}=\bigsqcup_{n\ge 0}\mathfrak{S}_n$. For $n\ge 1$, $\theta\in\mathfrak{S}_n$, and $i\in\{1,\dots,n\}$, define the
\emph{induced permutation on faces} $d_i\theta\ \in\ \mathfrak{S}_{n-1}$ as the unique permutation making the square commute:
\begin{equation}
\label{eq: Perm}
\hspace*{-\mathindent}\hfill
\begin{tikzcd}[column sep=large,row sep=large]
{[n-1]} \arrow[r, "d_i\theta"] \arrow[d, hook, "\iota_{\theta^{-1}(i)}"'] 
  & {[n-1]} \arrow[d, hook, "\iota_{i}"] \\
{[n]}   \arrow[r, "\theta"'] 
  & {[n]}
\end{tikzcd}
\quad\Longleftrightarrow\quad
\iota_i\circ d_i\theta \;=\; \theta\circ \iota_{\theta^{-1}(i)}.
\hfill
\end{equation}
Intuitively, $d_i\theta$ is obtained from $\theta$ by deleting position
$\theta^{-1}(i)$ in the domain and $i$ in the codomain and renumbering.
It can be expressed componentwise as $$d_i\theta(j)= \begin{cases} \theta(j), & j<\theta^{-1}(i),~\theta(j)< i,\\[2pt] \theta(j)-1, & j<\theta^{-1}(i),~\theta(j)> i,\\[2pt] \theta(j+1), & j\ge \theta^{-1}(i),~\theta(j+1)< i,\\[2pt] \theta(j+1)-1, & j\ge \theta^{-1}(i),~\theta(j+1)> i. \end{cases} $$

\begin{definition}\label{def: Xi_g}
The category $\Xi_g$ is presented as follows.
\begin{enumerate}
\item \textbf{Objects.}
  Canonical conclists $(n,\lambda)$ with $n\ge 0$ and a labeling map $\lambda:(n)\to\Sigma$.
  % (surjectivity is \emph{not} required)

\item \textbf{Generating morphisms.}
  \begin{enumerate}
  \item \emph{Coface maps.}
    For each $n\ge 1$, $1\le i\le n$, $k\in\{0,1\}$ and label
    $\lambda:(n)\to\Sigma$, let $d^{k}_{i}\;:\;(n-1,\,\lambda\circ\iota_{i})
      \;\to\;(n,\,\lambda),$
  \item \emph{Symmetry maps.}
    For each $n\ge 1$ and $\tau\in\SG_{n}$, let $ \tau:(n,\,\lambda)\to (n,\,\lambda\circ\tau^{-1}).$
  \end{enumerate}

\item \textbf{Relations.}
  \begin{enumerate}\setlength\itemsep{4pt}
  \item \emph{Cubical (face) identities.}
    For $i<j$, $d^{\ell}_{j}\,d^{k}_{i}\;=\;d^{k}_{i}\,d^{\ell}_{j-1}
      \;:\;(n-2,\,\lambda\circ\iota_{i}\circ\iota_{j-1})
      \to(n,\,\lambda).$ 
  \item \emph{Group identities.}\footnote{For brevity we often write simply $\id$ instead of
$\id_{(n,\lambda)}$; the intended domain and codomain are always clear
from the context.}
    $\mathrm{\id}_{(n,\lambda)}$ is the identity and
    $\tau\circ\sigma=(\tau\sigma)$. %in $\mathfrak S_{n}$.
  \item \emph{Permutation–face interchange.}
    For each $\tau\in\mathfrak S_{n}$ and $1\le i\le n$, $d^k_i\circ d_i\theta = \theta \circ d^k_{\theta^{-1}(i)}$
%    =============== 
  \end{enumerate}
\end{enumerate}
The category $\Xi_g$ is obtained by freely adding identities and
composites to these generators and quotienting by the stated relations. 
\end{definition}

We define the category $\square_g$ as the subcategory of $\Xi_g$ obtained by
forgetting about permutations, that is, by removing the symmetry generators
and the corresponding relations. Concretely, $\square_g$ is defined by
omitting Items~2(b), 3(b), and~3(c) in Definition~\ref{def: Xi_g}.
\subsection{Relation between the base categories}
The goal of this paragraph is to establish relations between precubical categories we introduced so far and that we use in later sections. The goal is to show that these categories fit into the diagram 
$$\begin{tikzcd}
    \square_{full} \ar[d]
    &
    \square \ar[l,"\supseteq" swap]\ar[r,"\cong"]\ar[d,"\subseteq"]
    &
    \square_g \ar[d]
\\
    \Xi_{full} 
    &
    \Xi \ar[l,"\supseteq" swap]\ar[r,"\cong"]
    &
    \Xi_g
\end{tikzcd}$$
We want to prove that the left horizontal functors are natural equivalences (this part is fine) and that the right horizontal functors are isomorphisms. The latter allows us to identify $\Xi$ and $\Xi_g$ and use generators-and-relations and concset maps interchangeably.

Below is the construction of the functor $F:\Xi_g\to\Xi$ and the proof that it is an isomorphism.

\begin{lemma}
    There exists a unique functor $F:\Xi_g\to\Xi$ such that
    \begin{itemize}
        \item $F((n,\lambda))=(n,\lambda)$ (the identity on objects).
        \item $F(d^k_i:(n,\lambda)\to(n-1,\lambda \circ \iota_i))=(\iota_i,\varepsilon_i^k)$, where $\varepsilon^k_i(j)=k$ if $j=i$ and $\exec$ otherwise.
        \item $F(\tau)=(\tau,const_{\exec})$.
    \end{itemize}
\end{lemma}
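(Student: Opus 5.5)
Since $\Xi_g$ is presented by generators and relations, a functor out of it is determined by its values on objects and generating morphisms. It exists precisely when those values respect the defining relations. Uniqueness is therefore immediate: any functor with the three listed properties agrees with $F$ on all generators, hence on all composites, and the morphisms of $\Xi_g$ are equivalence classes of such composites. The work is existence, in three steps: check that the proposed images are well-typed morphisms of $\Xi$, check the group identities, and check the two mixed relations (cubical identities and permutation--face interchange).

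\emph{Well-typedness and the simple relations.} $(\iota_i,\varepsilon^k_i)$ is a concset map $(n-1,\lambda\circ\iota_i)\to(n,\lambda)$. Indeed, $\iota_i$ is injective, it is label preserving by the choice of the source labelling, and $\varepsilon_i^{k\,-1}(\exec)=[n]\setminus\{i\}=\iota_i([n-1])$. Similarly, $(\tau,const_\exec)$ is a concset map $(n,\lambda)\to(n,\lambda\circ\tau^{-1})$, since $(\lambda\circ\tau^{-1})(\tau(j))=\lambda(j)$ and $\tau$ is surjective. For the group identities, the composition rule of Definition~\ref{def: conclist map} gives $(\tau,const_\exec)\circ(\sigma,const_\exec)=(\tau\sigma,\eta)$. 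Here $\eta$ takes only values from $const_\exec$, because $\tau$ is surjective and so the second case of $\eta$ never occurs. Hence $\eta=const_\exec$. The identity permutation likewise goes to the identity concset map.

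\emph{The mixed relations.} Here both sides are concset maps with the same source and target, so I will compare the underlying functions and then the $\varepsilon$-components. For the cubical identity with $i<j$, the functions satisfy $\iota_j\iota_i=\iota_i\iota_{j-1}$; both are the increasing injection missing $\{i,j\}$. Unwinding the composition rule, both $\varepsilon$-components give $k$ at $i$, $\ell$ at $j$, and $\exec$ elsewhere. On the left, the inner face map $d^k_i$ omits position $i$, which $\iota_j$ leaves fixed. On the right, the inner face map omits position $j-1$, which $\iota_i$ sends to $j$.

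For the interchange relation $d^k_i\circ d_i\theta=\theta\circ d^k_{\theta^{-1}(i)}$, the underlying functions agree by the defining square \eqref{eq: Perm}. For the $\varepsilon$-components, on the left a point $u$ has value $k$ exactly when $u=i$, since $d_i\theta$ is a bijection and contributes only $\exec$. On the right, $u=\theta(t)$ always, and $u$ has value $k$ exactly when $t=\theta^{-1}(i)$, that is, $u=i$. Labels on domains and codomains match, because $\lambda\circ\iota_i\circ d_i\theta=\lambda\circ\theta\circ\iota_{\theta^{-1}(i)}$ after the appropriate relabelings.

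I expect the main obstacle to be bookkeeping rather than conceptual. The composition formula for $\eta$ has to be unwound carefully, and the labelling maps attached to the sources of the generators must be shown to match on both sides of each relation. In particular, the typing of the interchange relation in Definition~\ref{def: Xi_g} has to be read off consistently with the convention $\tau:(n,\lambda)\to(n,\lambda\circ\tau^{-1})$. Once all generators and relations are verified, the universal property of the presentation yields the unique functor $F$.
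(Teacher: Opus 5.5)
Your proposal is correct and follows the same route as the paper: uniqueness comes from the generators-and-relations presentation, and existence comes from checking that the assigned images respect the cubical identities, the group identities, and the permutation--face interchange law. You also check well-typedness, including the label bookkeeping, and you carry out explicitly the $\varepsilon$-component computations that the paper dismisses as ``by inspection,'' so your version is the more complete write-up of the same argument.
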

\begin{proof}
Since $\Xi_g$ is presented by generators (face maps and permutations) and
relations, it suffices to check that the assignment $F$ preserves the
defining relations.

\emph{Cubical relations.}
For $i<j$, the relation
$d_j^{\ell}d_i^{k}=d_i^{k}d_{j-1}^{\ell}$ holds in $\Xi_g$.
Applying $F$ yields $(\iota_j,\varepsilon_j^{\ell})\circ(\iota_i,\varepsilon_i^{k})
\text{ and }
(\iota_i,\varepsilon_i^{k})\circ(\iota_{j-1},\varepsilon_{j-1}^{\ell}),$ which coincide since $\iota_j\circ\iota_i=\iota_i\circ\iota_{j-1}$ and the
resulting $\varepsilon$-components agree pointwise by inspection.

\emph{Group relations,}
for $\sigma,\tau\in S_n$, functoriality follows immediately from $(\sigma,\mathrm{const}_{\exec})\circ(\tau,\mathrm{const}_{\exec})
=(\sigma\circ\tau,\mathrm{const}_{\exec}),$ and identities are preserved.

\emph{Permutation–face interchange.}
For $\theta\in S_n$ and $i$, both composites
$F(\theta)\circ F(d_{\theta^{-1}(i)}^{\,k})$ and
$F(d_i^{\,k})\circ F(d_i\theta)$ have first component
$\theta\circ\iota_{\theta^{-1}(i)}=\iota_i\circ d_i\theta$.
A direct application of the composition rule in $\Xi$ shows that their
$\varepsilon$-components coincide, taking value $k$ at $i$ and $\exec$
elsewhere. Thus $F$ preserves all defining relations and induces a well-defined functor
$F:\Xi_g\to\Xi$.

\emph{Uniqueness.}
Since $\Xi_g$ is presented by generators and relations, $F$ is uniquely
determined by its action on generators.
\end{proof}

\begin{definition}
    The \emph{canonical presentation} of a morphism $v$ of $\Xi_g$ is an equation $v=d_{i_r}^{k_r}\circ \dotsm\circ d_{i_2}^{k_2}\circ d_{i_1}^{k_1}\circ \tau$ such that $i_1<\dotsm<i_r$.
\end{definition}

\begin{lemma}
    Every morphism $v$ of $\Xi_g$ has a canonical presentation.
\end{lemma}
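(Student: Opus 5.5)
Every morphism of $\Xi_g$ is, by construction, a finite composite of generators: coface maps $d^k_i$, symmetry maps $\tau$, and identities. I will argue by induction on the number of coface generators in such a word. The approach is a rewriting argument: first push every permutation to the right end of the word, then reorder the remaining cofaces so that their indices increase.

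\textbf{Step 1: moving permutations to the right.} I use the permutation–face interchange relation read from right to left:
$$\theta\circ d^k_{j} = d^k_{\theta(j)}\circ d_{\theta(j)}\theta,$$
obtained from $d^k_i\circ d_i\theta=\theta\circ d^k_{\theta^{-1}(i)}$ by setting $i=\theta(j)$. Each occurrence of a permutation immediately to the left of a coface can therefore be replaced by a coface followed by a permutation in one lower dimension. The number of cofaces is unchanged, and the permutation moves one step to the right. Adjacent permutations merge by the group identities $\tau\circ\sigma=(\tau\sigma)$. An empty product of permutations is an identity, which is itself a symmetry map $\id\in\SG_n$. Induction on the position of the rightmost coface that has a permutation to its left shows that
$$v=d^{k_r}_{j_r}\circ\dotsm\circ d^{k_1}_{j_1}\circ\tau$$
for some $\tau\in\SG_m$ and arbitrary indices $j_1,\dots,j_r$.

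\textbf{Step 2: sorting the cofaces.} Reading the word from right to left, I want $i_1<\dots<i_r$. Suppose $d^{\ell}_{a}\circ d^{k}_{b}$ appears with $d^k_b$ applied first and $a\le b$. The cubical identity $d^\ell_j d^k_i=d^k_i d^\ell_{j-1}$ for $i<j$, used with $i=a$ and $j=b+1$, gives
$$d^\ell_a\, d^k_b = d^k_{b+1}\, d^\ell_a.$$
After this swap the pair reads $a<b+1$ in the correct order, and the number of cofaces is unchanged. I then run a bubble sort. Each swap decreases the number of inversions, because the indices shift only by $+1$ on the element that moves left past a smaller-or-equal index, so termination follows. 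To be safe I will measure inversions by a lexicographic measure, or else do an insertion-sort induction on $r$: canonically present the first $r-1$ cofaces, then insert the last one $d^{k}_{j}$ by commuting it rightward, incrementing its index each time it passes a coface of index at most its own.

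\textbf{Main obstacle.} The delicate part is the termination and bookkeeping of Step 2, since the indices shift during swaps. The equality case $a=b$ also needs care: it must produce the strict inequality $b<b+1$, which it does. Using insertion into an already sorted list handles this cleanly. Because the inserted coface only moves past entries with index at most its current one and its index strictly increases each time, strictness is maintained.

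I must also check that the objects match, i.e.\ that the label functions compose correctly. This is automatic, because the relations are stated between morphisms with the specified domains and codomains.
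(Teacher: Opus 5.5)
Your argument uses the same three ingredients as the paper. These are the interchange law in the form $\theta\circ d^k_j=d^k_{\theta(j)}\circ d_{\theta(j)}\theta$, the group law, and the cubical identity used to insert a coface into a sorted block. You stage them as two global passes. The paper instead runs a single induction on word length: it appends one generator at a time to a normal form $D\circ\sigma$, either merging it into $\sigma$ or transporting it through $\sigma$ and inserting it into $D$. Each step is then a finite insertion, and no separate termination argument is needed. Your bookkeeping $d^\ell_a d^k_b=d^k_{b+1}d^\ell_a$ for $a\le b$ is correct. The paper's insertion lemma says the moving index is decremented, which does not match the cubical identity. With the insertion-sort version of Step~2 your proof goes through, but a few points as written need fixing.

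The claim that each bubble-sort swap lowers the number of inversions is false. Take $d_3\circ d_1\circ d_2\colon(1)\to(4)$: read right to left the indices are $2,1,3$, one inversion. The swap $d_1\circ d_2=d_3\circ d_1$ yields $d_3\circ d_3\circ d_1$, with indices $1,3,3$ and again one inversion, because the incremented index collides with the one further out. The sum of the indices is a measure that does work. Each swap raises it by exactly one, and it is bounded by $\sum_{p=1}^{r}(m+p)$, since the $p$-th coface from the right maps $(m+p-1)$ to $(m+p)$.

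In the insertion step, the coface whose index increases is the one moving \emph{leftward} (outward) in $\circ$-notation. So sort the outer $r-1$ cofaces by induction, then push the innermost one left past every coface whose index is at most its current one. If you push the outermost coface rightward instead, it keeps its index, and the cofaces it passes (those of index at least its own) are the ones that get incremented.

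Finally, in Step~1, ``the rightmost coface with a permutation to its left'' is just the rightmost coface as soon as any permutation is misplaced, so its position does not track progress. Induct instead on the number of cofaces, as you announced, or on the number of permutation--coface pairs in the wrong order. Each interchange lowers that count by one, and merging does not raise it.
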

\begin{proof}
%    (sketch) Let $v=u_1\circ\dotsm\circ u_r$ be any presentation with generators $u_j$. We make it "more canonical" using the following operations:
 %   \begin{itemize}
  %      \item If there are two consecutive permutations, we merge them using \ref{def: Xi_g}.3.(b).
   %     \item If a permutation precedes a coface map, we swap them using \ref{def: Xi_g}.3.(c). %(which is wrong, should be $d^k_i\circ d_i\theta = \theta \circ d^k_{\theta^{-1}(i)}$ as in \eqref{eq: Perm}).
    %    \item If we have two consecutive coface maps $d^k_i\circ d^\ell_j$ and $j\geq i$, then we switch them using \ref{def: Xi_g}.(a).
   % \end{itemize}
   % ==============================
Fix any presentation $v=u_1\circ\cdots\circ u_m$ over the generators
$\{d_i^{\,k}\}\cup\{\theta\in\mathfrak S_*\}$. We construct, by induction on $m$, a factorisation $\mathrm{NF}(v)\;=\;D\circ\tau
$ with $
D=d_{i_r}^{k_r}\circ\cdots\circ d_{i_1}^{k_1},$ $ i_1<\cdots<i_r,\ \ \tau\in \mathfrak S_*,$ and show that $\mathrm{NF}(v)$ equals $v$ in $\Xi_g$ (i.e.\ is obtained by applying only the defining relations).

\smallskip\noindent\emph{Base case $m=0$.}
$\mathrm{NF}(\mathrm{id})=\mathrm{id}\circ \mathrm{id}$ is canonical.

\smallskip\noindent\emph{Inductive step.}
Write $v=w\circ g$ with $g$ a generator and suppose
$\mathrm{NF}(w)=D\circ\sigma$ is canonical, where $D=d_{i_s}^{\ell_s}\circ\cdots\circ d_{i_1}^{\ell_1}$ with $ \sigma\in\mathfrak S_*.$ We define $\mathrm{NF}(v)$ by cases on $g$.

\medskip\noindent\emph{Case 1: $g=\theta\in\mathfrak S_*$ (a permutation).}
Put $\mathrm{NF}(w\circ \theta)\;:=\;D\circ(\theta\,\sigma).$ This uses only the group law (Def.\ \ref{def: Xi_g}.3(b)) and clearly preserves canonicity of the coface block.

\emph{Case 2: $g=d_i^{\,k}$ (a coface).}
First \emph{transport $d_i^{\,k}$ through the current permutation $\sigma$ to the left} using the permutation–face interchange in the form $\sigma\circ d_i^{\,k} \;=\; d_{\sigma(i)}^{\,k}\circ d_{\sigma(i)}\sigma,$ which follows from
$d^k_j\circ d_j\theta = \theta \circ d^k_{\theta^{-1}(j)} $
by setting $\theta=\sigma$, $j=\sigma(i)$.
Hence $D\circ \sigma\circ d_i^{\,k}
\;=\;
D\circ d_{\sigma(i)}^{\,k}\circ (d_{\sigma(i)}\sigma).$ Set $\sigma':=d_{\sigma(i)}\sigma\in\mathfrak S_{*-1}$ (the induced permutation on one fewer coordinate). It remains to \emph{insert} the single coface $d_{\sigma(i)}^{\,k}$ into the sorted block $D$.

\smallskip\noindent\underline{Insertion lemma.}
Given a sorted block $D=d_{i_s}^{\ell_s}\circ\cdots\circ d_{i_1}^{\ell_1}$ with $i_1<\cdots<i_s$ and a coface $d_j^{\,k}$ on its right, there exists a finite sequence of applications of the cubical identity
\[ \qquad \qquad \qquad \qquad \qquad \qquad
d_{q}^{\lambda}\circ d_{p}^{\kappa} \;=\; d_{p}^{\kappa}\circ d_{q-1}^{\lambda}\qquad(p<q)
\tag{$\flat$}
\]
that rewrites $D\circ d_j^{\,k}$ into a sorted block $D'$ $D\circ d_j^{\,k}\;\equiv\; d_{i'_{s'}}^{\ell'_{s'}}\circ\cdots\circ d_{i'_1}^{\ell'_1}
\quad$ $i'_1<\cdots<i'_{s'},$ where each swap with a left neighbour $d_p^{\kappa}$ (with $p<j$) replaces the pair by $d_p^{\kappa}\circ d_{j-1}^{\,k}$, i.e.\ decrements the index of the moving coface by $1$. Concretely: starting from the right, compare $j$ with $i_s$; while $j\le i_t$, apply $(\flat)$ to swap with $d_{i_t}^{\ell_t}$ and update $j\gets j-1$; when $j>i_t$, stop and place $d_j^{\,k}$ immediately to the left of $d_{i_t}^{\ell_t}$. This terminates since $j$ strictly decreases on each swap.

Applying the insertion lemma with $j=\sigma(i)$ yields a sorted block $D'$ such that $D\circ d_{\sigma(i)}^{\,k}\;\equiv\;D'$ with indices strictly increasing. Define $\mathrm{NF}(w\circ d_i^{\,k})\;:=\;D'\circ\sigma'.$ By construction we used only the interchange law and the cubical identities, hence $\mathrm{NF}(w\circ d_i^{\,k})$ is equal to $w\circ d_i^{\,k}$ in $\Xi_g$, and its coface block is sorted. In both cases we obtain a canonical form for $v$, completing the induction. Therefore every morphism $v$ of $\Xi_g$ admits a presentation of the form
$d_{i_r}^{k_r}\circ\cdots\circ d_{i_1}^{k_1}\circ\tau$ with $i_1<\cdots<i_r$.
\end{proof}
\begin{lemma}
    $F$ is surjective.
\end{lemma}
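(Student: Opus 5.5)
Since $F$ is the identity on objects, surjectivity amounts to showing that every concset map $(f,\varepsilon):(m,\mu)\to(n,\lambda)$ in $\Xi$ is $F(v)$ for some morphism $v$ of $\Xi_g$. The plan is to factor $(f,\varepsilon)$ in $\Xi$ as an order-preserving part after a permutation, mirroring the canonical presentation $d_{i_r}^{k_r}\circ\cdots\circ d_{i_1}^{k_1}\circ\tau$ from the previous lemma.

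\textbf{Step 1 (the face part).} Let $I=\{i\in(n):\varepsilon(i)\neq\exec\}=(n)\setminus f((m))$, listed as $i_1<\dots<i_r$ with $r=n-m$, and put $k_t=\varepsilon(i_t)\in\{0,1\}$. Let $\iota:(m)\hookrightarrow(n)$ be the unique order-preserving injection with image $f((m))$, and give $(m)$ the label $\lambda\circ\iota$. Then $(\iota,\varepsilon)$ is a conclist map, that is, a morphism of $\square\subseteq\Xi$. I claim $(\iota,\varepsilon)=F(d_{i_r}^{k_r}\circ\cdots\circ d_{i_1}^{k_1})$.

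Note that the first coface applied must have the smallest index, since later insertions do not shift smaller positions. This is exactly why the canonical form sorts $i_1<\dots<i_r$. I will verify the claim by induction on $r$ using the composition rule of Definition~\ref{def: conclist map}. At each stage, the composite $\iota_{i_r}\circ\cdots\circ\iota_{i_1}$ (suitably reindexed) omits exactly the positions $i_1,\dots,i_r$. Moreover, the $\varepsilon$-component records $k_t$ at $i_t$, because the composition rule takes $\zeta$ outside the image and transports the earlier $\varepsilon$ values through $g$. Careful index bookkeeping here is the one fiddly point: one must check that the positions omitted at earlier stages are carried by the later insertions to the correct final positions $i_t$.

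\textbf{Step 2 (the permutation part).} Define $\tau=\iota^{-1}\circ f\in\mathfrak S_m$. This is a bijection $(m)\to(m)$ because $f$ and $\iota$ have the same image and both are injective. Since $\lambda\circ f=\mu$, we get $\lambda\circ\iota\circ\tau=\mu$, so $\tau:(m,\mu)\to(m,\mu\circ\tau^{-1})=(m,\lambda\circ\iota)$ is a symmetry generator of $\Xi_g$, with $F(\tau)=(\tau,\mathrm{const}_{\exec})$.

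\textbf{Step 3 (composition).} Compute $(\iota,\varepsilon)\circ(\tau,\mathrm{const}_{\exec})=(\iota\circ\tau,\eta)$. Here $\iota\circ\tau=f$. The composition rule gives $\eta=\exec$ on $\iota((m))=f((m))$ and $\eta=\varepsilon$ elsewhere, so $\eta=\varepsilon$ because $\varepsilon^{-1}(\exec)=f((m))$. By functoriality, $F(d_{i_r}^{k_r}\circ\cdots\circ d_{i_1}^{k_1}\circ\tau)=(f,\varepsilon)$, which proves surjectivity.

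The main obstacle is the index bookkeeping in Step 1. If it proves messy, I would instead argue that $\square\to\Xi$ is the identity on conclist maps, and that every conclist map is a composite of cofaces; this is the standard fact underlying $\square\cong\square_g$.
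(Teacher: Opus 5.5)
Your proposal is correct and follows essentially the same route as the paper: the paper also sets $I=(n)\setminus f((m))$ and writes $f=\iota_J\circ\rho\circ\tau$ with $\tau=\rho^{-1}\circ f$ (your $\iota$ is the paper's $\iota_J\circ\rho$). It then checks componentwise, by induction on $r$, that $F(d_{i_r}^{k_r}\circ\cdots\circ d_{i_1}^{k_1}\circ\tau)=(f,\varepsilon)$. The index bookkeeping you flag is easy: since $i_1<\cdots<i_r$, each later $\iota_{i_s}$ fixes every earlier omitted position $i_t<i_s$, so $k_t$ stays at $i_t$; and your explicit checks of the labels on $\tau$ and of $\eta=\varepsilon$ in Step 3 are slightly more careful than the paper's.
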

\begin{proof}
 Let $(f,\varepsilon):(m,\lambda)\to(n,\mu)$ be a morphism in $\Xi$.
Set $J:=f([m])\subseteq(n)$ and $I:=(n)\setminus J=\{i_1<\cdots<i_r\}$ ($r=n-m$).
Let $\iota_J:J\hookrightarrow(n)$ be the inclusion and let
$\rho:(m)\xrightarrow{\cong}J$ be the unique increasing bijection
($J=\{j_1<\cdots<j_m\}$, so $\rho(t)=j_t$).
Define the permutation $\tau := \rho^{-1}\circ f \in \mathfrak{S}_m,
\qquad\text{so that } f = \iota_J \circ \rho \circ \tau.$ Then $\tau$ is the unique permutation satisfying
$\tau(i)<\tau(i')$ iff $f(i)<f(i')$;
equivalently, $f\circ\tau^{-1}$ is order-preserving. Define in $\Xi_g$ $\alpha\ :=\ d_{i_r}^{k_r}\circ\cdots\circ d_{i_1}^{k_1}\circ\tau
\text{ with }k_s:=\varepsilon(i_s)\in\{0,1\}.$

\emph{First component.}
Write $\iota_I:=\iota_{i_r}\circ\cdots\circ\iota_{i_1}:(m)\hookrightarrow(n)$.
\emph{Claim.} $\iota_I=\iota_J\circ\rho$.
(Induction on $r$: composing the standard injections that skip $I$ yields the
order-preserving bijection $(m)\to J$ followed by inclusion.) This is because for each $t\in(m)$, by definition $\iota_I(t)$ is the $t$-th smallest
element of $J$, i.e. $\iota_I(t)=j_t$. Since $\iota_J$ is the inclusion of $J$,
$(\iota_J\circ\rho)(t)=\iota_J(j_t)=j_t=\iota_I(t)$, for all $t$. Hence
$\iota_I=\iota_J\circ\rho$. Hence the first component of $F(\alpha)$ equals
$$\iota_{i_r}\circ\cdots\circ\iota_{i_1}\circ\tau
=\iota_I\circ\tau
=(\iota_J\circ\rho)\circ\tau
=f \quad\text{by }(*).$$

\emph{Second component.}
Using $F(d_{i_s}^{k_s})=(\iota_{i_s},\varepsilon_{i_s}^{k_s})$ and the
composition rule in $\Xi$, a trivial induction on $r$ gives after composing
the $r$ cofaces
\[
\tilde\varepsilon(v)=
\begin{cases}
k_s & \text{if } v=i_s\in I,\\
\exec & \text{if } v\in J.
\end{cases}
\]
Composing with $F(\tau)=(\tau,\mathrm{const}_{\exec})$ does not change
$\tilde\varepsilon$ (the status $\exec$ is constant and the injection image is $J$).
Thus the second component of $F(\alpha)$ is
$v\longmapsto
\begin{cases}
\varepsilon(i_s)=k_s & \text{if } v=i_s\in I,\\
\exec & \text{if } v\in J=f([m]),
\end{cases}$ which is exactly $\varepsilon$. Therefore $F(\alpha)=(f,\varepsilon)$, so $F$ is surjective.
\end{proof}
\begin{lemma}
    $F$ is injective. 
\end{lemma}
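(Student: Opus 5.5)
Injectivity of $F$ means faithfulness, since $F$ is the identity on objects: if $v,w:(m,\lambda)\to(n,\mu)$ in $\Xi_g$ satisfy $F(v)=F(w)$, then $v=w$ in $\Xi_g$. By the canonical presentation lemma, it suffices to show that the canonical presentation of a morphism is determined by its image under $F$. Write
\[
v=d_{i_r}^{k_r}\circ\cdots\circ d_{i_1}^{k_1}\circ\tau,\qquad w=d_{i'_{r'}}^{k'_{r'}}\circ\cdots\circ d_{i'_1}^{k'_1}\circ\tau',
\]
with $i_1<\dots<i_r$ and $i'_1<\dots<i'_{r'}$. Then $r=r'=n-m$, because the domain and codomain are fixed.

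The plan is to read every ingredient of the canonical presentation off $F(v)=(f,\varepsilon)$.

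\emph{Step 1: the indices $i_s$ and the statuses $k_s$.} The computation in the proof of surjectivity of $F$ applies verbatim to any canonical presentation. The first component of $F(v)$ is $\iota_I\circ\tau$, where $\iota_I=\iota_{i_r}\circ\cdots\circ\iota_{i_1}$. Its image is the complement of $I=\{i_1,\dots,i_r\}$, so $I=(n)\setminus f((m))=\varepsilon^{-1}(\{0,1\})$. The second component is $\varepsilon$ with $\varepsilon(i_s)=k_s$. Since the $i_s$ are listed in increasing order, they, and the statuses $k_s$, are uniquely determined by $\varepsilon$. Hence $i_s=i'_s$ and $k_s=k'_s$, and the coface blocks of $v$ and $w$ coincide as formal words.

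\emph{Step 2: the permutation $\tau$.} From Step 1 we have $\iota_I\circ\tau=f=\iota_I\circ\tau'$. Since $\iota_I$ is injective, $\tau=\tau'$ as permutations of $(m)$.

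It follows that $v$ and $w$ have literally the same canonical presentation, so $v=w$ in $\Xi_g$.

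I expect the main obstacle to be rigour in Step 1 rather than any deep difficulty. One must verify that, for an arbitrary sorted block, the composite $\varepsilon$-component really places $k_s$ at position $i_s$ in the final codomain coordinates. This is delicate, because later cofaces $d_{i_t}$ with $t>s$ re-index earlier positions. Sortedness rescues the claim: since $i_t>i_s$ for $t>s$, the insertion $\iota_{i_t}$ fixes every position below $i_t$, so the value placed at position $i_s$ by $d_{i_s}$ stays at position $i_s$. The claim then follows by induction on $r$, using the composition rule of Definition~\ref{def: conclist map} (a position in $g(T)$ inherits the earlier status). I would make this induction explicit, and would also note that the labels impose no further constraint, since each generator has its domain labeling determined by its codomain labeling.
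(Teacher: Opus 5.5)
Your proposal is correct and follows the paper's proof essentially step for step. You fix canonical presentations, read off $I$ and the $k_s$ from $\varepsilon$, recover $\tau$ from $f$, and conclude that the presentations coincide. The only small differences are that the paper writes $\tau=\rho^{-1}\circ f$ via $\iota_I=\iota_J\circ\rho$ where you cancel the injective $\iota_I$, and that your explicit check that sortedness keeps $k_s$ at position $i_s$ supplies a detail the paper merely asserts.
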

\begin{proof}
 %   (sketch) Essentially we need to prove that if $\alpha$ and $\alpha'$ have different canonical presentations, then $F(\alpha)\neq F(\alpha')$. First we show that $\tau$'s must be the same, then $i_s$'s and $k_s$'s. 
    Let $\alpha,\alpha'$ be morphisms with canonical presentations $\alpha \;=\; d_{i_r}^{k_r}\circ\cdots\circ d_{i_1}^{k_1}\circ\tau,$ and $\alpha' \;=\; d_{i'_{r'}}^{k'_{r'}}\circ\cdots\circ d_{i'_1}^{k'_1}\circ\tau',$ where $i_1<\cdots<i_r$ and $i'_1<\cdots<i'_{r'}$.
Assume $F(\alpha)=F(\alpha')=(f,\varepsilon)$; we show the canonical data coincide. \emph{Step 1 (recover $I$ and $k$’s from $\varepsilon$).}
In $\Xi$ we have $\varepsilon^{-1}(\exec)=f([m])$. Hence $I \;:=\; (n)\setminus f([m]) \;=\; \{v\in(n)\mid \varepsilon(v)\neq \exec\}.$ But for a canonical word, $I=\{i_1<\cdots<i_r\}$ and $\varepsilon(i_s)=k_s$.
Thus $I$ and the tuple $(k_s)_{s=1}^r$ are uniquely determined by $(f,\varepsilon)$.
Applying the same to $\alpha'$ yields $I'=\{i'_1<\cdots<i'_{r'}\}=I$ and $k'_s=k_s$,
so $r'=r$, $i'_s=i_s$ and $k'_s=k_s$ for all $s$.

\emph{Step 2 (recover $\tau$ from $f$).}
Let $J:=f([m])$ and $\rho:(m)\xrightarrow{\cong} J$ be the increasing bijection.
By the insertion-of-holes identity $\iota_{i_r}\circ\cdots\circ\iota_{i_1}
=\iota_J\circ\rho$, the first component of $F(\alpha)$ is $f \;=\; (\iota_J\circ\rho)\circ\tau.$ Hence $\tau=\rho^{-1}\circ f$, which is uniquely determined by $f$. The same computation for $\alpha'$ gives $\tau'=\rho^{-1}\circ f=\tau$. We have shown that different canonical data cannot map to the same $(f,\varepsilon)$:
from $F(\alpha)=F(\alpha')$ we obtained $r=r'$, $i_s=i'_s$, $k_s=k'_s$, and $\tau=\tau'$.
Thus $\alpha$ and $\alpha'$ have the same canonical presentation, hence are equal in $\Xi_g$.
Therefore $F$ is injective.
\end{proof}
Once surjectivity and injectivity have been established, we obtain the following result.
\begin{proposition}\label{prop:base-iso}
The functor $F:\Xi_g\to\Xi$ is an isomorphism of categories.
\end{proposition}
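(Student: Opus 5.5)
The plan is to assemble the isomorphism directly from the three preceding lemmas rather than build an inverse by hand.

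First, I would note that $F$ is the identity on objects, so it is bijective on objects. For morphisms, an isomorphism of categories is a functor that is bijective on objects and on every hom-set. The lemmas proving that $F$ is surjective and injective give exactly this bijectivity, hom-set by hom-set. The surjectivity lemma produces, for every $(f,\varepsilon):(m,\lambda)\to(n,\mu)$ in $\Xi$, a preimage $\alpha$ with the same domain and codomain. The injectivity lemma shows that parallel morphisms $\alpha,\alpha'$ with $F(\alpha)=F(\alpha')$ are equal.

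Next, I would define the inverse $G:\Xi\to\Xi_g$ as the identity on objects. On a morphism $(f,\varepsilon)$ it takes the value $G(f,\varepsilon):=d^{k_r}_{i_r}\circ\cdots\circ d^{k_1}_{i_1}\circ\tau$, the unique preimage built in the surjectivity proof. By construction $FG=\id_\Xi$.

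Then I would check $GF=\id_{\Xi_g}$. Every $v$ in $\Xi_g$ has a canonical presentation by the canonical-presentation lemma, and $F(GF(v))=F(v)$, so injectivity gives $GF(v)=v$. Functoriality of $G$ is automatic: $F(G(u)\circ G(u'))=u\circ u'=F(G(u\circ u'))$, and injectivity of $F$ on hom-sets yields $G(u\circ u')=G(u)\circ G(u')$. Identities are handled the same way.

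The main obstacle is the injectivity step as stated. It tacitly assumes that the canonical presentation of a morphism is well-defined: that equal canonical data $(i_s,k_s,\tau)$ define equal morphisms, and that two arbitrary morphisms with the same image can both be put in canonical form. The first is trivial and the second is the canonical-presentation lemma, so the argument goes through. I would still double-check one detail in the interchange law. Its left-to-right direction must land in the correct hom-set, because $d_i\theta$ changes the labeling via $\lambda\circ\tau^{-1}$. This ensures that the word produced by the normal-form procedure is typed consistently with the original domain and codomain.
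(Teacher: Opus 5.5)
Your proposal is correct and follows the paper's route. $F$ is the identity on objects, and the surjectivity and injectivity lemmas (the latter resting on the canonical-presentation lemma) make it bijective on every hom-set; the paper takes this as immediately giving an isomorphism, and your explicit construction of $G$, with functoriality derived from injectivity of $F$, just spells out that standard fact.
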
 Consequently, the subcategories obtained by forgetting permutations are also
isomorphic, that is, $\square_g \;\cong\; \square.$ \textbf{From now on, we simply write $\Xi=\Xi_g$ and $\square=\square_g$}, and freely use
their descriptions by generators and relations.
The base categories that will be used throughout the remainder of the paper are
summarized in Table~\ref{tab:base-categories-lean}.

\begin{table}[t]
  \centering
  \small
  \renewcommand{\arraystretch}{1.2}
  \begin{tabular}{@{}l p{0.28\linewidth} p{0.55\linewidth}@{}}
    \toprule
    \textbf{Category} & \textbf{Objects} &
      \textbf{Morphisms \& generators} \\
    \midrule
    $\square$ &
      canonical conclists &
      conclist maps; generated by coface maps\\
    $\Xi$ &
      canonical concsets &
      concset maps; generated by coface maps and permutations\\
    \bottomrule
  \end{tabular}
  \caption{Base categories at a glance.}
  \label{tab:base-categories-lean}
\end{table}
\section{Precubical sets and HDA}\label{sec:HDA}
We now turn to the presheaf formulation of HDAs over these bases. We establish that the symmetric HDAs of Kahl~\cite{KAHL202247} (based on $\Xi_g$) and those of Struth and Ziemiański~\cite{struth2024presheaf} (based on $\Xi$) are categorically isomorphic. \textbf{Crucially, this equivalence allows us to leverage Kahl’s Symmetrization Theorem}---the result that every HDA is hhp-bisimilar to its symmetric expansion. This justifies our order-free semantics as a behavioral model for the \textbf{entire class of HDAs, not only the symmetric ones}.
\subsection{Precubical sets over a base}\label{subsec:precubical-over-base}

\begin{definition}[Precubical set over a base]\label{def:precubical-sets-general}
Let $\mathbb{C}\in\{\square,\;\Xi\}$ be one of the base categories introduced above.
A \emph{precubical set over $\mathbb{C}$} is a presheaf $X:\mathbb{C}^{\mathrm{op}}\to\mathbf{Set}.$ A \emph{map of precubical sets} $g:X\Rightarrow Y$ over $\mathbb C$
is a natural transformation.
We write $\widehat{\mathbb{C}}$ for the corresponding presheaf category.
\end{definition}
One can equally define presheaves over the symmetric base $\Xi_g$.
However, since we have shown that $\Xi_g$ and $\Xi$ are isomorphic,
these bases give rise to equivalent presheaf categories, as stated below.
\begin{proposition}[Equivalence of presheaf categories]\label{prop:cat-isomorphism}
There is an isomorphism of presheaf categories $\widehat{\Xi} \;\cong\; \widehat{\Xi}_g.$
\end{proposition}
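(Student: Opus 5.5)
The plan is to deduce the statement from Proposition~\ref{prop:base-iso} by a purely formal argument: an isomorphism of small categories induces an isomorphism, not merely an equivalence, of the associated presheaf categories, via precomposition. Concretely, let $F:\Xi_g\to\Xi$ be the isomorphism of Proposition~\ref{prop:base-iso} with inverse $G:\Xi\to\Xi_g$, so $GF=\id_{\Xi_g}$ and $FG=\id_{\Xi}$ on the nose. Passing to opposite categories gives $F^{\mathrm{op}}:\Xi_g^{\mathrm{op}}\to\Xi^{\mathrm{op}}$ and $G^{\mathrm{op}}$, still strictly inverse to each other. I will define
\[
F^*:\widehat{\Xi}\to\widehat{\Xi}_g,\qquad F^*(X)=X\circ F^{\mathrm{op}},\qquad F^*(g)=g F^{\mathrm{op}}\ (\text{whiskering}),
\]
and $G^*:\widehat{\Xi}_g\to\widehat{\Xi}$ analogously.

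The key steps, in order, are as follows. First, I check that $F^*$ is a functor. For a presheaf $X$ on $\Xi$, the composite $X\circ F^{\mathrm{op}}$ is a presheaf on $\Xi_g$. For a natural transformation $g:X\Rightarrow Y$, the components $(F^*g)_c:=g_{F(c)}$ are natural, because the naturality squares of $g$ at the morphisms $F(u)$ are exactly the required squares. Whiskering preserves identities and vertical composition, so $F^*$ respects these. Second, I verify that $G^*F^*=\id$ and $F^*G^*=\id$ strictly:
\[
G^*F^*(X)=X\circ F^{\mathrm{op}}\circ G^{\mathrm{op}}=X\circ (FG)^{\mathrm{op}}=X,
\]
and on morphisms $(G^*F^*g)_c=g_{FG(c)}=g_c$. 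The other composite is symmetric. Third, I make the result explicit using the concrete description of $F$: it is the identity on objects, so $(F^*X)(n,\lambda)=X(n,\lambda)$. The structure maps on the generators of $\Xi_g$ are $X(\iota_i,\varepsilon^k_i)$ for the cofaces and $X(\tau,\mathrm{const}_\exec)$ for the permutations. Thus a presheaf on $\Xi_g$ is the same data as a precubical set with face maps and symmetry actions satisfying the relations of Definition~\ref{def: Xi_g}.

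There is no serious obstacle here, since the argument is functoriality of $(-)^{\mathrm{op}}$ and of precomposition. The only point needing care is to stress that we obtain an \emph{isomorphism} and not merely an equivalence. This relies on $F$ being strictly invertible, which is the content of Proposition~\ref{prop:base-iso}, established through the surjectivity and injectivity lemmas for $F$ together with the fact that $F$ is the identity on objects. An equivalence of bases would yield only an equivalence of presheaf categories.
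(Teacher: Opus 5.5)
Your proposal is correct and takes the same route as the paper: the paper likewise just pulls back along the isomorphism $F:\Xi_g\to\Xi$ and along its inverse (the ``pushforward''), which gives mutually inverse functors $\widehat{\Xi}\rightleftarrows\widehat{\Xi_g}$. You additionally spell out the functoriality and the strict-inverse checks that the paper leaves implicit.
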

\begin{proof}
Since by Prop~\ref{prop:base-iso} $F : \Xi_g \xrightarrow{\;\cong\;} \Xi$ is an isomorphism of categories, the induced pullback and pushforward along~$F$ yield an isomorphism of presheaf categories
$\widehat{\Xi} \cong \widehat{\Xi_g}$.
\end{proof}
\paragraph{Terminology and correspondence.}
\begin{itemize}
  \item \textbf{Case $\mathbb C=\square$.}
  This yields the classical notion of a \emph{precubical set}
  in the sense of van Glabbeek~\cite{VANGLABBEEK2006265} and
  Fahrenberg et al.~\cite{LanguageofHDA,KleeneTh},
  based on ordered concurrency lists.

  \item \textbf{Case $\mathbb C=\Xi$.}
  This gives rise to the \emph{symmetric precubical sets}
  of Kahl~\cite{KAHL202247} and of Struth and Ziemiański~\cite{struth2024presheaf},
  where concurrent events are represented without an explicit order.
  Since $\Xi\simeq\Xi_g$, both constructions define the same class
  of \emph{symmetric precubical sets}.
\end{itemize}

\paragraph{Notation.}
\begin{itemize}
    \item For any coface morphism $d^k_{i,n}:U\to V$, with $|U| = n$, in the chosen base $\mathbb C\in\{\square,\Xi\}$ we set $\delta^k_{i,n}\;:=\;X[d^k_{i,n}],$ and this notation is used uniformly for all bases.
    \item Given an $n$-cell $x$ and a permutation $\tau\in\mathfrak S_n$, we write $\tau\cdot x$ for $X[\tau](x) \in X(n,\lambda\circ\tau^{-1})$ 
\end{itemize}
If \( X \) is a presheaf over a category \(\mathbb{C}\), where \(\mathbb{C}\) is either \(\square\) or \(\Xi\), and \( U \) is an object $X[U]$ 
represents the set of those $n$-cells in $X$ in which exactly the events in $U$ are active. 
Elements of \( X[U] \), form the set of cells \(\cell_X\) of \( X \). Specifically, we have:
$$\cell_X = \bigsqcup_{U \in \text{obj}(\mathbb{C})} X[U].$$
For any $x \in X[U],$ elements of $U$ are called events of $x$. We write $\ev(x)=U$.

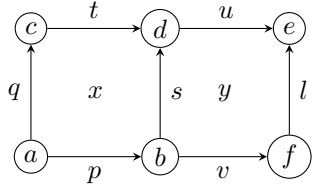
\begin{figure}[ht]
    \centering
     \setlength{\tabcolsep}{.2cm}
  \begin{tabular}{ccc}
 \begin{tikzpicture}[x=1.7cm, y=1.7cm, >=stealth]
  %--- node (state) style -------------------------------------------------------
  \tikzset{state/.style={circle, draw, fill=white, inner sep=2pt, font=\small}}

  %--- vertices -----------------------------------------------------------------
  \node[state] (a) at (0,0) {$a$};
  \node[state]                         (b) at (1,0) {$b$};
  \node[state, label=above:] (c) at (0,1) {$c$};
  \node[state, label=above:] (d) at (1,1) {$d$};
  \node[state, label=above:] (e) at (2,1) {$e$};
   \node[state, label=above:] (f) at (2,0) {$f$};

  %--- shaded square (concurrent region) ---------------------------------------
  \fill[black!15] (a) -- (b) -- (d) -- (c) -- cycle;

  %--- arrows with labels -------------------------------------------------------
  \draw[->] (a) -- node[below] {\(p\)} (b);
  \draw[->] (b) -- node[right] {\(s\)} (d);
  \draw[->] (a) -- node[left ] {\(q\)} (c);
   \draw[->] (f) -- node[right] {\(l\)} (e);
  \draw[->] (c) -- node[above] {\(t\)} (d);
  \draw[->] (d) -- node[above] {\(u\)} (e);
  \draw[->] (b) -- node[below] {\(v\)} (f);

  %--- 2-cell label -------------------------------------------------------------
  \node at ($(a)!0.5!(d)$) {\(x\)};
    \node at ($(b)!0.5!(e)$) {\(y\)};
\end{tikzpicture}

&
 \end{tabular}
  \bigskip
    \caption{Precubical set $X$ with two 2-dimensional cells}
    \label{fi: face maps}
\end{figure}
\begin{example}[Example of a precubical set] Figure \ref{fi: face maps} shows a visualization of a precubical set $X$, in which all one dimensional cells have the same label. So we ignore $\lambda$ in the conclist notation and write $(n)$. The precubical set $X$ is given by the following:
\[
X[0] = \{a,b,c,d,e,f\},\quad X[1] = \{p,q,s,t,u,v,l\},\quad X[2] = \{x,y\},
\]
and face maps:
\[
\begin{aligned}
&\delta^0_1(p) = a,\ \delta^1_1(p) = b, \quad
\delta^0_1(q) = a,\ \delta^1_1(q) = c, \quad
\delta^0_1(u) = d,\ \delta^1_1(u) = e, \\
&\delta^0_1(s) = b,\ \delta^1_1(s) = d, \quad
\delta^0_1(t) = c,\ \delta^1_1(t) = d, 
\quad ~~~
\delta^0_1(v) = b,\ \delta^1_1(v) = f,\\
&\delta^0_2(x) = q,\ \delta^1_2(x) = s, \quad
\delta^0_1(x) = p,\ \delta^1_1(x) = t, 
\\
&\delta^0_2(y) = s,\ \delta^1_2(y) = l, \quad
\delta^0_1(y) = v,\ \delta^1_1(y) = u.
\end{aligned}
\]
In the figure, $\Sigma$ has one element and the indices refer to the names of the cells.
\end{example}
% Note that a precubical set may consist of a single 2–cell together with its faces and vertices.
% This is no longer possible in the symmetric setting.
% {\color{brown} Unless the labels are the same.}

\begin{example}\label{ex:single-2d-cell}
Let \(Y\in\widehat{\square}\) be the precubical set with a single 2–cell
$x\in Y[(2,ab)]$, and no cells in higher dimensions.
Such a configuration cannot occur in a symmetric precubical set.
Indeed, the transposition
\(\tau : (2,ab)\xrightarrow{\;\cong\;}(2,ba)\)
is a morphism in~\(\Xi\), so for any \(Z\in\widehat{\Xi}\),
the induced map
\(Z[\tau]:Z(2,ab)\to Z(2,ba)\)
is a bijection.
Hence, the existence of a cell in \(Z(2,ab)\) forces the existence of a corresponding cell in \(Z(2,ba)\). This obstruction persists in the autoconcurrent case \(a=b\).
If \(x\) were the only 2–cell, symmetry would imply \(\tau\cdot x=x\) for the transposition \(\tau=(2\;1)\),
yielding $d^0_1 x = d^0_1(\tau\cdot x) = d^0_{\tau^{-1}(1)}x = d^0_2 x,$ a contradiction.
\end{example}
\begin{definition}[Higher-dimensional automaton over a base]\label{def:HDA-over-base}
Let $\mathbb C\in\{\square,\Xi\}$.
An \emph{HDA over $\mathbb C$} is a pair $\mathcal X=(X,i_X)$ where:
\begin{itemize}
  \item $X\in\widehat{\mathbb C}$ is a precubical set over~$\mathbb C$;
  \item $i_X\in X[0]$ is the initial $0$–cell.
\end{itemize}
For a natural transformation $f:X\Rightarrow Y$, we write
$f_U : X[U] \to Y[U]$ for its component at $U\in\mathbb C$.
Given HDAs $\mathcal X=(X,i_X)$ and $\mathcal Y=(Y,i_Y)$ over $\mathbb C$, a \emph{morphism}
$\mathcal X\to\mathcal Y$ is a natural transformation $f:X\Rightarrow Y$ such that
its $0$-component preserves the chosen initial cell, i.e.\ $f_0(i_X)=i_Y$.

HDAs over $\square$ coincide with the classical (ordered) HDAs,
while HDAs over $\Xi$ are the symmetric HDAs.
\end{definition}
Final cells are omitted here without loss of generality:
HDAs need not have designated final states of these can be added as an external structure when modeling successful termination, but are irrelevant for the properties studied here.
\begin{definition}[Symmetriser]\label{def:symmetriser}(\cite{KAHL202247})
Let $X\in\widehat{\square}$ be a precubical set.
The \emph{free symmetric precubical set} generated by~$X$
is the presheaf $SX\in\widehat{\Xi}$ defined as follows:
\begin{itemize}
  \item \emph{On objects.} For each canonical conclist $(n,\lambda)$,
  $$  (SX)(n,\lambda)
    \;=\;
    \{\,(\theta,x)\mid \theta\in\mathfrak S_n,\;
       x\in X(n,\lambda\circ\theta)\,\}.$$
  \item \emph{On cofaces.} For
  $d_i^{\,k}:(n\!-\!1,\lambda\circ\iota_i)\to(n,\lambda)$, $$ (SX)[d_i^{\,k}](\theta,x)\;=\;\bigl(d_i\theta,\;X[d^{\,k}_{\theta^{-1}(i)}](x)\bigr),$$
  where $d_i\theta\in\mathfrak S_{n-1}$ is the induced permutation on faces defined in~(\eqref{eq: Perm}).
  \item \emph{On permutations.} For $\tau\in\mathfrak S_n$,
     $$(SX)[\tau](\theta,x)
     \;=\;
     (\tau\theta,\;x)
     \;:\;
     (SX)(n,\lambda)\longrightarrow (SX)(n,\lambda\circ\tau^{-1}).$$
\end{itemize}
The above assignments respect the cubical identities and the
permutation–face interchange law of~$\Xi$,
so they extend uniquely to a functor $SX:\Xi^{\mathrm{op}}\to\mathbf{Set}$. We call $SX$ the \emph{symmetrisation} of~$X$.
For an HDA $\mathcal X=(X,i_X)$ over $\square$,
the \emph{free symmetric HDA} is $$ S\mathcal X := (SX,\,(\id.i_X))$$

\end{definition}
For each $n$–cell $x\in X[n,\lambda]$ and $\tau\in\mathfrak S_n$,
the element $(\theta,x)\in (SX)(n,\lambda\circ\theta^{-1})$
represents the same $n$ concurrent events as $x$, but with their linear order
permuted by~$\theta$.
In the symmetric interpretation, this element corresponds to the cell $\tau.x$.
Hence, the symmetrisation $SX$ contains all cells of $X$ together with
their symmetric variants—one for each permutation of their $n$ events.
Consequently, every $n$–cell of $X$ gives rise to $n!$ distinct copies in $SX$. %See Figure \ref{fig:hda and its symmetric hda} for an example.
\paragraph{Terminology.}
Since our interest lies in symmetrisations, we use the following terminology.
For a precubical set $X$, we call $SX$ its \emph{symmetrised precubical set};
objects of this form will be referred to as \emph{sprecubical sets}.
For an HDA $\mathcal X=(X,i_X)$, we write $\mathcal X := (SX,\,(\id.i_X))$ for its
\emph{symmetric HDA} (or \emph{sHDA}).

\paragraph{Remark}
There also exists a functor
$s^{*}:\widehat{\Xi}\to\widehat{\square}$
obtained by precomposition with the inclusion
$s:\square\hookrightarrow\Xi$, that is, $s^{*}(X) := X\circ s.$

This functor keeps the same underlying sets of cells and the same
coface maps, but discards all symmetry morphisms of~$\Xi$.  
Consequently, it also forgets the identifications between a cell and
its permuted copies that were induced by those symmetries in the first place.
For a symmetric precubical set $Y$, the ordered precubical set $s^{*}Y$
thus contains all cubes of $Y$, but treats previously equivalent
permuted cells as independent ones.
For instance, if $Y=SX$ as in Example~\ref{ex:single-2d-cell},
then $Y$ contains a $2$–cell in shape $(2,ab)$ together with its
permuted companion in shape $(2,ba)$. After applying $s^{*}$, both
cells remain present, but the symmetry morphism between them is no
longer available; in $s^{*}Y$ they are interpreted as two independent
ordered squares.
Applying the symmetriser again to $s^{*}Y$ freely adds new symmetric
copies of each of these cells, so $y$ and $(12).y$ each generate their
own pair of permuted variants.
In general, every $n$–cell of $Y$ gives rise to $n!$ copies after the
first symmetrisation, $(n!)!$ copies after the second, and so on,
resulting in a factorial growth under iteration. 
\section{Pomsets}\label{sec:pomsets}
Partially ordered multisets (\emph{pomsets}) have long been used to model true concurrency and have played a central role in theoretical and applied concurrency research~\cite{GISCHER1988199,grabowski1981partial,Pratt1986ModelingCW}. Earlier work on higher-dimensional automata (HDAs) extended pomsets by adding an event order on incomparable events and by using interfaces to distinguish the beginning and end of a run~\cite{generatingposetbeyondn,LanguageofHDA}. In this paper we drop the event order: unlike~\cite{LanguageofHDA}, we describe the observable content of HDA paths using pomsets that carry only the precedence order, while following the interface construction and gluing composition of~\cite{generatingposetbeyondn}.
 \begin{definition}[Pomset]
    A \emph{partially ordered multiset (pomset)} is a tuple $(P, <_P,\lambda_P )$ where $P$ is a finite set, $\lambda_{P}: P \rightarrow \Sigma$ is a labeling function over an alphabet $\Sigma$, $<_{P}$ is a strict partial order on $P$ called \emph{precedence order}.
    \end{definition}
   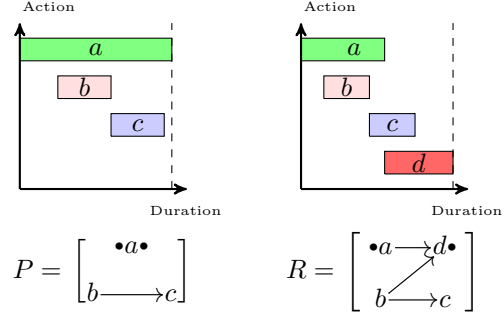
\begin{figure}%[tbp]
  \centering
  \begin{tikzpicture}[x=1cm]
    \def\possh{-1.3}
    \begin{scope}[shift={(0.0,0)}]
      \def\hw{0.3}
      \filldraw[fill=green!50!white,-](0,1.2)--(2,1.2)--(2,1.2+\hw)--(0,1.2+\hw);
      \filldraw[fill=pink!50!white,-](.5,0.7)--(1.2,0.7)--(1.2,0.7+\hw)--(.5,0.7+\hw)--(.5,0.7);
      \filldraw[fill=blue!20!white,-](1.2,0.2)--(1.9,0.2)--(1.9,0.2+\hw)--(1.2,0.2+\hw)--(1.2,0.2);
      \draw[thick,->](0,-.5)--(0,1.7);
      \draw[thick,->](0,-.5)--(2.2,-.5);
      \draw[thin,-,dashed](2,-.5)--(2,1.7);
       \node at (0.4,1.8) { $^{{}^{\text{Action}}}$};
       \node at (2.2,-.9) { $^{{}^{\text{Duration}}}$};
      \node at (1,1.2+\hw*0.5) {$a$};
      \node at (0.85,0.7+\hw*0.5) {$b$};
      \node at (1.55,0.2+\hw*0.5) {$c$};
      \node at (1.1,-1.8+\hw*0.7)  {$P=\bigipomset[.4]{ & \ibullet a \ibullet  \\ ~ b \ar[rr] && c ~ }$}; %{$ \!\!\ipomset{ & \ibullet a \ar@{.>}[d]  \ar[r]  &  b \\  & c  &}$}; %
    \end{scope}
   
    \begin{scope}[shift={(3.7,0)}]
       \def\hw{0.3}
       \filldraw[fill=green!50!white,-](0,1.2)--(1.1,1.2)--(1.1,1.2+\hw)--(0,1.2+\hw);
     \filldraw[fill=pink!50!white,-](.3,0.7)--(0.9,0.7)--(0.9,0.7+\hw)--(.3,0.7+\hw)--(.3,0.7);
     \filldraw[fill=blue!20!white,-](0.9,0.2)--(1.5,0.2)--(1.5,0.2+\hw)--(0.9,0.2+\hw)--(0.9,0.2);
       \filldraw[fill=red!60!white,-](1.1,-0.3)--(2,-0.3)--(2,-0.3+\hw)--(1.1,-0.3+\hw)--(1.1,-0.3);
      \draw[thick,->](0,-.5)--(0,1.7);
      \draw[thick,->](0,-.5)--(2.2,-.5);
      \draw[thin,-,dashed](2,-.5)--(2,1.7);
       \node at (0.4,1.8) { $^{{}^{\text{Action}}}$};
       \node at (2.2,-.9) { $^{{}^{\text{Duration}}}$};
      \node at (.7,1.2+\hw*0.5) {$a$};
     \node at (0.6,0.7+\hw*0.5) {$b$};
      \node at (1.2,0.2+\hw*0.5) {$c$};
      \node at (1.55,-0.3+\hw*0.5) {$d$};
        \node at (1.1,-1.8+\hw*0.7){$R=\bigipomset{& \ibullet a \ar[r] & d \ibullet  & \\ & b \ar[ur] \ar[r] & c & }$}; 
    \end{scope}
  \end{tikzpicture}
  \caption{Interval ipomsets (below) with their corresponding interval representations (above). An event with a dot on the left (resp.\@ on the right) is an element of a source (resp.\@ target) interface. Full arrows indicate precedence order.}
  \label{fi:interval rep}
\end{figure}
To model concurrency using pomsets, elements of a pomset $P$ represent \emph{events}, and $x <_P y$ signifies that event $x$ must occur before event $y$.  Two distinct events $x,y \in P$ are \emph{concurrent}, written $x \parallel y$, precisely when neither $x <_P y$ nor $y <_P x$.  An event $x \in P$ is \emph{minimal} if no event precedes it and \emph{maximal} if it precedes no event; we denote the sets of minimal and maximal events by $P_{\min}$ and $P_{\max}$, respectively. A subset \(Q\subseteq P\) is an \emph{antichain} if any two distinct elements of \(Q\) are concurrent. An antichain is \emph{maximal} if it is not properly contained in any larger antichain. In a pomset, antichains correspond precisely to concsets.   
\begin{definition}[ipomset ]
Let \(U=(m,\lambda)\) and \(V=(l,\gamma)\) be canonical concsets. A \emph{ pomset with interfaces (ipomset)} consists of a pomset \(P\) and two
\emph{injective, label-preserving} functions $U \xrightarrow{\;\si\;} P \;\xleftarrow{\;\ti\;} V,$ such that
\(\si(U) \subseteq P_{\min}\) called the source interface and
\(\ti(V) \subseteq P_{\max}\) called the target interface.
We denote it by $ \mathbf{P
}=(\si,\,P,\,\ti):\; U \to V$. 
\end{definition}
The source and target interfaces of an ipomset are antichains, hence concsets.
In this paper we work exclusively with interfaces whose domains are
\emph{canonical concsets} and whose maps are canonical, namely either the
identity or the standard injection
\(\iota_i\colon(n{-}1)\hookrightarrow(n)\).
Although concsets in general admit multiple label-preserving bijections
(see Example~\ref{exp: diff gluing results}), all interfaces
appearing here are uniquely determined, and no further choices arise.

As we drop the event order, we follow the gluing construction for pomsets with interfaces from \cite{generatingposetbeyondn}. This gluing operation is associative, admits a unit, but is not commutative; commutativity is neither expected nor needed here, since gluing models sequential concatenation of paths.
\begin{definition}[Gluing composition of ipomsets ]\label{def:ispomset-glue} Let
   $(s_1,P_1,t_1):$ $ (n,\nu)\to(m,\mu)$ and \((s_2,P_2,t_2)\colon (m,\mu)\to(k,\kappa)\) be ipomsets. Their \emph{gluing composition} is the ipomset $ (s_1,\;P_1 * P_2,\;t_2)\;:\;(n,\nu)\;\longrightarrow\;(k,\kappa),$ where the carrier is $P_1 * P_2 :=
   \bigl(\; (P_1\sqcup P_2)\,/\,{t_1(i)=s_2(i)},\; \le, \lambda_1\cup\lambda_2 \bigr),$ the disjoint union of \(P_1\) and \(P_2\)
        with the interface elements identified:
        \(t_1(i)\equiv s_2(i)\) for all \(i\in[m]\), the precedence order is $ \le\;=\;   \bigl(\,\le_{1}\;\cup\;\le_{2}\;\cup\;        (P_1\!\setminus\!t_1[m])\times    (P_2\!\setminus\!s_2[m])\bigr)$ and the labeling function is inherited component-wise:
        \(\lambda_1\) on \(P_1\) and \(\lambda_2\) on \(P_2\).    
We regard \(P_1\) and \(P_2\) as sub-pomsets of \(P_1*P_2\) via the canonical injections.
\end{definition}
Although different choices of interface identifications may in general lead to non-isomorphic gluing results (see Example~\ref{exp: diff gluing results}), such situations will not arise in this paper. Indeed, all ipomsets we consider have interfaces whose domain and codomain are canonical concsets, so the gluing operation is uniquely determined.
Moreover, every interface map $f$ is strictly order preserving in the sense that for all $i,j$, $i <_{\mathbb{N}} j \;\Longleftrightarrow\; f(i) <_{\mathbb{N}} f(j),$ where $<_{\mathbb{N}}$ denotes the usual strict order on natural numbers.
In particular, interface identifications preserve $<_{\mathbb{N}}$, ruling out any ambiguity in the gluing construction.
\begin{example}\label{exp: diff gluing results}
Changing the identification of interface events may lead to non-isomorphic gluing results.
For instance, consider the following two gluings of discrete ipomsets.

\noindent
\emph{Order-preserving identification.}
Identifying the interfaces via the order-preserving bijection
$t_1(i)=s_2(i)$ for $i=1,2$ yields
\[
\!\!\ipomset{& a \ar[r]  &  b=t_1(1)  \ibullet  &  \\&& b=t_1(2) \ibullet &}
\;*\;
\!\!\ipomset{ & \ibullet b=s_2(1)    \ar[r]  &c & \\& \ibullet b=s_2(2)  & &}
\;=\;
\!\!\ipomset{& a \ar[r]  &  b  \ar[r]  & c &\\&& b  & &}.
\]
This is the situation that arises throughout this paper.

\noindent
\emph{Order-reversing identification.}
If instead the interface is identified via the permutation
$t_1(1)=s_2(2)$ and $t_1(2)=s_2(1)$, one obtains
\[
\!\!\ipomset{& a \ar[r]  &  b=t_1(1)  \ibullet  &  \\&& b=t_1(2) \ibullet &}
\;*\;
\!\!\ipomset{ & \ibullet b=s_2(2)    \ar[r]  &c & \\& \ibullet b=s_2(1)  & &}
\;=\;
\!\!\ipomset{& a \ar[r] \ar[dr]  &  b   & \\&b \ar[r] & c  & }.
\]
Such a gluing is admissible in general for ipomsets, but it violates
order preservation of the interfaces and therefore does not occur under
the assumptions imposed in this paper.
\end{example}

\begin{definition}[Interval Ipomset \cite{fishburn1985interval}]
An \emph{interval ipomset} is an ipomset \(P\) such that for any \(x, y, z, w \in P\), if \(x <_P z\) and \(y <_P w\), then \(x <_P w\) or \(y <_P z\). In other words, it does not contain an induced subpomset of the form:
$\twotwo = \pomset{\bullet \ar[r] & \bullet \\ \bullet \ar[r] & \bullet}$.

\end{definition}
\begin{proposition}(\cite{LanguageofHDA,janicki1993structure})\label{prop:decomposition-into-discrete}
Let \(P\) be an ipomset.  
Then \(P\) is an interval ipomset if and only if \(P\) can be expressed as a
finite gluing of discrete ipomsets.
\end{proposition}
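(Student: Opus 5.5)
Both directions rest on the classical characterisation of interval orders: a poset is an interval order (2+2-free) iff it has an interval representation $x\mapsto[b_x,e_x]$ with $x<y \iff e_x<b_y$ (Fishburn), or equivalently iff the strict down-sets $\{z : z< x\}$ are linearly ordered by inclusion. The plan is to use the down-set formulation for the combinatorics and to keep track of the interfaces throughout.

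\emph{($\Leftarrow$) Gluings of discrete ipomsets are interval.} Argue by induction on the number of factors. A discrete ipomset (empty precedence) is trivially $\twotwo$-free. Suppose $P=P_1*\dots*P_r$ and assign to each event $x$ the index $b_x$ of the first factor containing it and $e_x$ of the last. By Definition~\ref{def:ispomset-glue}, an event persists from one factor to the next exactly when it is identified through an interface, so the factors containing $x$ form a contiguous interval $[b_x,e_x]$. Unfolding the glued order then gives $x<y \iff e_x<b_y$: within one discrete factor nothing is related, and the cross term $(P_1\setminus t_1)\times(P_2\setminus s_2)$ relates precisely those events that have ended to those not yet begun. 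This is an interval representation, so $\twotwo$ cannot occur. The interface conditions (sources minimal, targets maximal) are inherited directly from the gluing definition.

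\emph{($\Rightarrow$) Interval ipomsets decompose.} Given an interval ipomset $(\si,P,\ti)$, choose an interval representation with distinct integer endpoints. By adjusting it, arrange that source-interface events start before every other event and target-interface events end after every other event; this is possible because sources are minimal and targets are maximal. Sweep the endpoints from left to right, and at each point in time record the set of currently active events; this set is an antichain, hence a concset. Consecutive sweep points differ by one start or one end. Between them, place a discrete ipomset whose carrier is the union of the two active sets. Its source interface is the active set just before, and its target interface is the active set just after. An event that is starting is in the carrier but not in the source, and an event that is ending is not in the target. Gluing these starter/terminator pieces and applying the order formula of Definition~\ref{def:ispomset-glue} recovers exactly $x<y \iff e_x<b_y$, hence recovers $<_P$. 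The first and last pieces carry $\si$ and $\ti$. To fit the paper's convention of canonical interfaces, enumerate each active set in a fixed order, so that the interface maps become identities or insertions $\iota_i$ and the gluing is unambiguous.

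\emph{Main obstacle.} The hard step is constructing the interval representation, that is, the classical Fishburn direction: $\twotwo$-freeness implies that the down-sets are totally ordered, and hence that endpoints exist. I would prove this directly. If $D(x)=\{z:z<x\}$ and $D(y)$ were incomparable under inclusion, there would be $z<x$ with $z\not<y$ and $w<y$ with $w\not<x$, giving an induced $\twotwo$ on $z,x,w,y$. Then define $b_x$ as the rank of $D(x)$ in the chain of down-sets and $e_x$ via the chain of up-sets, and check that $x<y \iff e_x<b_y$. A secondary technical point is checking that the interfaces glue with order-preserving identifications, as required in the discussion following Example~\ref{exp: diff gluing results}.
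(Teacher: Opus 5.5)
The paper gives no proof of this proposition; it simply cites \cite{LanguageofHDA,janicki1993structure}. Your argument is correct and is the standard interval-representation proof behind those references. For ($\Leftarrow$), the first/last-factor indices $[b_x,e_x]$ give an interval representation of any gluing of discrete ipomsets. For ($\Rightarrow$), sweeping Fishburn's representation endpoint by endpoint cuts $P$ into starters and terminators, and the cross term of Definition~\ref{def:ispomset-glue} restores exactly $x<y\iff e_x<b_y$.

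A few details should be made explicit when you write it out.

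\emph{Where the sweep starts and stops.} Start it strictly after the last (moved) source start and stop it strictly before the first (moved) target end. Then the first active set is exactly $\si(U)$ and the last is exactly $\ti(V)$. If you swept the source starts themselves, you would generate starters with empty source interface, and the composite would lose its interface.

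\emph{The endpoints.} It is simpler to use the down-set chain alone. Let $D_1\subsetneq\cdots\subsetneq D_k$ be the distinct sets $D(x)=\{z : z<x\}$. Put $b_x$ equal to the index of $D(x)$ in this chain, and $e_x:=\max\{j : x\notin D_j\}$. Then $b_x\le e_x$ holds by irreflexivity, and $x<y\iff x\in D_{b_y}\iff e_x<b_y$ is immediate from nestedness. Defining $e_x$ from the up-set chain also works, but it needs an extra lemma matching the two chains.

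\emph{Distinct endpoints.} To get distinct endpoints, at each integer place all starts just before all ends, using distinct small offsets. This keeps every strict inequality $e_x<b_y$, keeps overlaps as overlaps, and makes $b_x<e_x$.

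\emph{Interface maps.} A single fixed enumeration of $P$ need not make both of the given maps $\si$ and $\ti$ identities or insertions. So keep $\si$ and $\ti$ as the outer source and target maps of the first and last pieces. Use the fixed enumeration only for the internal interfaces, where it makes all identifications index-wise and order-preserving, as the paper's convention requires. In the degenerate case with no sweep steps, every event is both a source and a target, so $P$ is already discrete and is its own one-piece decomposition.
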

In this work, we focus solely on interval ipomsets: \textbf{all ipomsets are assumed to be interval even if not stated explicitly.}  
\begin{definition}[Isomorphism of ipomsets]\label{def:isomorphism-ispomsets}
Let \(\mathbf{P}=(s_P,P,t_P)\colon U\to V\) and
\(\mathbf{Q}=(s_Q,Q,t_Q)\colon U_Q\to V_Q\) be ipomsets.
An \emph{isomorphism} \(\mathbf{P}\cong\mathbf{Q}\) is a bijection
\(f\colon P\to Q\) such that $\lambda_P = \lambda_Q\circ f,
x<_P y \iff f(x)<_Q f(y),$
and \(f\) restricts to bijections \(f_U\colon U\to U_Q\) and
\(f_V\colon V\to V_Q\) satisfying
\(s_Q\circ f_U = f\circ s_P\) and \(t_Q\circ f_V = f\circ t_P\). Equivalently, the following diagram commutes:
$$\begin{tikzcd}[row sep=.5cm, column sep=1.8cm]
U \arrow[r, "s_P"] \arrow[d, "f_U" swap]
  & P \arrow[d, "f"]
  & V \arrow[l, "t_P" swap] \arrow[d, "f_V"] \\[0.2cm]
U_Q \arrow[r, "s_Q" swap]
  & Q
  & V_Q \arrow[l, "t_Q"]
\end{tikzcd}$$
\end{definition}
There is at most one isomorphism between two ipomsets in the presence of event order~\cite{LanguageofHDA}.
This uniqueness fails in our case.

\section{Paths in HDAs and sHDAs}\label{sec: path and their labels}
We recall the notion of paths in HDAs and sHDAs and the standard relations
between them, fixing notation for later use.
We also make explicit the relationship between paths and their symmetric
liftings.
\begin{definition}\label{def:path}   
 A \emph{path} of length $n$ in a precubical set $X$ is a sequence $
\alpha=\left(x_{0}, \varphi_{1}, x_{1}, \varphi_{2}, \ldots, \varphi_{n}, x_{n}\right),$
where $x_{j} \in X\left[U_{j}\right]$ are cells, and for all $j$, either
\begin{itemize}
  \item $\varphi_{j}=d_{i_{j}}^{0} \in \square\left(U_{j-1}, U_{j}\right)$ a source map and $x_{j-1}=\delta_{i_{j}}^{0}\left(x_{j}\right)$ (up-step), or
  \item $\varphi_{j}=d_{i_{j}}^{1} \in \square\left(U_{j}, U_{j-1}\right), \delta_{i_{j}}^{1}\left(x_{j-1}\right)=x_{j}$ (down-step).
\end{itemize}
A path in a sprecubical set $Y$ is a path in the underlying precubical set $s^*Y$.
\end{definition}
%===================================
\begin{lemma}[Paths in a symmetrised precubical set]\label{lem:paths-in-SX}
Let $X\in\widehat{\square}$ and consider its symmetrisation $SX\in\widehat{\Xi}$.
A \emph{path} of length $n$ in the sprecubical set $SX$ is equivalently a sequence
$$\alpha = \bigl((\tau_{0}.x_{0}),\,\varphi_{1},\,(\tau_{1}.x_{1}),\ldots,
\,\varphi_{n},\,(\tau_{n}.x_{n})\bigr),$$
where each $x_j\in X[U_j]$, each $\tau_j\in\mathfrak S_{|U_j| }$ is a permutation,
and each $\varphi_j=d^{k_j}_{p_j}$ is a coface map in $\Xi$, such that for every
$j=1,\ldots,n$:
\begin{itemize}
\item[(i)] (\emph{step condition}) either
\begin{itemize}
  \item $k_j=0$ and $(\tau_{j-1}.x_{j-1}) \;=\; (SX)[d^{0}_{p_j}](\tau_j.x_j),$ or
  \item $k_j=1$ and $(\tau_{j}.x_{j}) \;=\; (SX)[d^{1}_{p_j}](\tau_{j-1}.x_{j-1}).$
\end{itemize}
\item[(ii)] (\emph{permutation coherence}) the permutations satisfy
$$\tau_{j-1} = d_{p_j}\tau_j  \text{ if } k_j = 0, \qquad \text{ and }\qquad  \tau_{j}   = d_{p_j}\tau_{j-1}  \text{ if } k_j = 1.$$

\end{itemize}
\end{lemma}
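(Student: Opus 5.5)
We unfold Definition~\ref{def:path} for the presheaf $s^*(SX)$ and rewrite each cell and each step condition through the explicit description of $SX$ in Definition~\ref{def:symmetriser}. By Definition~\ref{def:path}, a path in $SX$ is a path in $s^*(SX)$. This is a sequence of cells $y_j\in (SX)(U_j)$ joined by coface maps $\varphi_j=d^{k_j}_{p_j}$ of $\square\subseteq\Xi$, with the up/down step conditions. Since $s^*$ keeps the same cells and the same coface actions, these conditions are literally the step conditions (i) for $(SX)[d^{k}_{p}]$. The only work is therefore to explain why each $y_j$ can be written as $\tau_j.x_j$ with $x_j\in X[U_j']$ (for the appropriate relabelled shape), and why the permutations must satisfy the coherence (ii).

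\emph{Cells.} By definition, $(SX)(n,\lambda)$ consists of pairs $(\theta,x)$ with $x\in X(n,\lambda\circ\theta)$. Using the permutation action $(SX)[\tau](\id,x)=(\tau,x)$, every such pair equals $\theta.(\id,x)$. Identifying $x$ with $(\id,x)$, we may write $y_j=(\tau_j.x_j)$ uniquely, with $\tau_j=\theta$ and $x_j=x$. This gives a bijection between the sequences of cells of $SX$ and the sequences of pairs $(\tau_j,x_j)$.

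\emph{Steps.} Consider an up-step $k_j=0$. The condition $y_{j-1}=(SX)[d^0_{p_j}](y_j)$ reads
\[
(\tau_{j-1},x_{j-1})=\bigl(d_{p_j}\tau_j,\;X[d^0_{\tau_j^{-1}(p_j)}](x_j)\bigr).
\]
Comparing first components gives exactly the coherence $\tau_{j-1}=d_{p_j}\tau_j$. Comparing second components says that $x_{j-1}$ is the $\tau_j^{-1}(p_j)$-th lower face of $x_j$ in $X$. The down-step $k_j=1$ is handled symmetrically, giving $\tau_j=d_{p_j}\tau_{j-1}$. Hence (i) implies (ii). Conversely, (i) together with (ii) is just a restatement of the original step condition. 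This establishes the claimed equivalence in both directions, and shows moreover that $(x_j)$ forms a path in $X$ with reindexed cofaces $d^{k_j}_{\tau^{-1}(p_j)}$.

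\emph{Main obstacle.} The delicate step is bookkeeping rather than ideas. We must check that the labels match: the shape of $x_j$ is $\lambda\circ\tau_j$, while the shape of $y_j$ is $\lambda$. We must also check that the face of a permuted cell lands in the correct shape. Both follow from the permutation–face interchange $d^k_i\circ d_i\theta=\theta\circ d^k_{\theta^{-1}(i)}$, which the lemma establishing $F$ showed is respected. We also need the fact, stated in Definition~\ref{def:symmetriser}, that $SX$ is a well-defined presheaf; we use this to avoid reproving the identity $\tau.(\id,x)=(\tau,x)$ in other orders of composition.
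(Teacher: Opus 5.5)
Your proposal is correct and follows essentially the same route as the paper: unfold the path definition in $s^*(SX)$, then compare the two components of $(SX)[d^k_p](\tau.x)=\bigl(d_p\tau,\,X[d^k_{\tau^{-1}(p)}](x)\bigr)$ to read off the coherence condition (ii) and the face relation in $X$. Your extra points are details the paper leaves implicit: that each cell is uniquely a pair $(\tau_j,x_j)=\tau_j.(\id,x_j)$, and the shape bookkeeping, which follows directly from the defining equation $\iota_p\circ d_p\tau=\tau\circ\iota_{\tau^{-1}(p)}$ rather than from the lemma constructing $F$.
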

\begin{proof}
By definition, a path in the sprecubical set $SX\in\widehat{\Xi}$ means a path
in the underlying precubical set $s^*(SX)\in\widehat{\square}$, i.e.\ a sequence
of cells related by face maps in the sense of Definition~\ref{def:path}. Unfolding the presheaf action of $SX$ on a coface map $d^k_{p}$, we have $$(SX)[d^k_{p}](\tau.x)
=
\bigl(d_{p}\tau.\;X[d^k_{\tau^{-1}(p)}](x)\bigr),$$
where $d_{p}\tau$ denotes the induced permutation on the corresponding face.
Hence, for an up-step ($k=0$) the condition
\(
(\tau_{-}.x_{-})=(SX)[d^0_{p}](\tau_{+}.x_{+})
\)
is equivalent to the pair of equalities $\tau_{-}=d_{p}\tau_{+}
$ and $
x_{-}=X[d^0_{\tau_{+}^{-1}(p)}](x_{+}),$
and for a down-step ($k=1$) the condition
\(
(\tau_{+}.x_{+})=(SX)[d^1_{p}](\tau_{-}.x_{-})
\)
is equivalent to $\tau_{+}=d_{p}\tau_{-}
$ and $
x_{+}=X[d^1_{\tau_{-}^{-1}(p)}](x_{-}).$
Applying these equivalences stepwise along the sequence yields exactly the
up-step/down-step clauses in~(i), and the permutation coherence equations in~(ii).
Conversely, if the equalities in~(i) (equivalently, the expanded conditions above)
hold at every step, then each consecutive triple forms a valid face step in
$s^*(SX)$, hence the whole sequence is a path in $SX$.
\end{proof}

\paragraph{Relation between paths in $X$ and in $SX$.}
Let $X$ be a precubical set and let $\alpha=(x_0, d^{k_1}_{i_1}, x_1, \ldots, d^{k_m}_{i_m}, x_m)\in\Path_X$ be a path in $X$.
A \emph{lifting} of $\alpha$ to the sprecubical set $SX$ is a path of the form
\[
\beta=
\bigl(
(\tau_0.x_0),\,
d^{k_1}_{p_1},\,
(\tau_1.x_1),\,
\ldots,\,
d^{k_m}_{p_m},\,
(\tau_m.x_m)
\bigr)\in\Path_{SX},
\]
where the permutations $\tau_0,\ldots,\tau_m$ and indices $p_1,\ldots,p_m$
satisfy
\[
p_j=\tau_{j-k_j}(i_j)
\quad\text{and}\quad
\begin{cases}
\tau_{j-1}=d_{p_j}\tau_j & \text{if } k_j=0,\\[1mm]
\tau_j=d_{p_j}\tau_{j-1} & \text{if } k_j=1,
\end{cases}\forall j=1,\ldots,m.
\]
Among all liftings of $\alpha$, the \emph{canonical lifting}
$$S\alpha=
\bigl(
(\id.x_0),\,d^{k_1}_{i_1},\,(\id.x_1),\,\ldots,\,
d^{k_m}_{i_m},\,(\id.x_m)
\bigr)$$
is obtained by choosing $\tau_j=\id$ for all $j$.

Conversely, Lemma~\ref{lem:paths-in-SX} shows that every path
$\beta\in\Path_{SX}$ is of this form for a unique underlying path
$\alpha\in\Path_X$, obtained by setting $i_j=\tau_{j-k_j}^{-1}(p_j)$. 
\paragraph{Path notation and operations.}
Let $\alpha=(x_{0},\varphi_{1},\ldots,\varphi_{n},x_{n})$ be a path in a
precubical set $X$ or in a sprecubical set $SX$.

A \emph{path in an HDA} $\mathcal{X}=(X,i_X)$ is a path in the
underlying precubical set $X$ whose first cell is the initial cell $i_X$.
Similarly, a path in the symmetric HDA $S\mathcal{X}$ is a path in $SX$
whose first cell is $(\id.i_X)$.
We write $\Path_X \text{ and } \Path_{SX}$ for the sets of all paths in $X$ and $SX$, respectively, and
$$\Path_{\mathcal{X}}\subseteq\Path_X
\quad\text{and}\quad
\Path_{S\mathcal{X}}\subseteq\Path_{SX}$$
for the corresponding sets of paths starting at the initial cell. This distinction matters for the semantics below.
The ipomset label are defined for \emph{all}
combinatorial paths in $\Path_X$ and $\Path_{SX}$, independently of the
initial cell.
By contrast, the ST--trace is defined only for executions, that is, for
paths in $\Path_{\mathcal{X}}$ and $\Path_{S\mathcal{X}}$.

If $\alpha=(x_{0},\varphi_{1},\ldots,\varphi_{n},x_{n})
\quad\text{and}\quad
\beta=(y_{0},\psi_{1},\ldots,\psi_{m},y_{m})$ are paths in $X$ or in $SX$ with $x_n=y_0$, their
\emph{concatenation} is the path
$$\alpha*\beta=
\bigl(x_{0},\varphi_{1},\ldots,\varphi_{n},x_{n},
\psi_{1},y_{1},\ldots,\psi_{m},y_{m}\bigr).$$
In this case, we say that $\alpha$ is a \emph{restriction} of $\alpha*\beta$.
\begin{example}[Paths and liftings in the symmetrised square]
\label{ex:path-liftings-picture}
Consider the HDA $\mathcal X=(X,i_X)$ depicted on the left of
Figure~\ref{fig:hda and its symmetric hda}, consisting of a single $2$--cell $x \in X(2,ab),$ together with its faces.
The symmetrised HDA $S\mathcal X$ is shown on the right of the figure and
contains two $2$--cells, $x \in (SX)(2,ab)
\text{ and }
x' \in (SX)(2,ba),$
with $\delta^k_1(x)=\delta^k_2(x')$ for $k=0,1$. Consider the path in $X$ \linebreak $\alpha
=
\bigl(
\delta^0_1(x),\;
d^0_2,\;
x,\;
d^1_1,\;
\delta^1_2(x)
\bigr),$ which enters the square $x$ along the $b$--edge and exits it along the
$a$--edge. In the symmetrised precubical set $SX$, the path $\alpha$ admits exactly
two liftings.
The canonical lifting passes through the cell $x\in(SX)([2,ab])$ and is
given by $S\alpha
=
\bigl(
\delta^0_1(x),\;
d^0_2,\;
x,\;
d^1_1,\;
\delta^1_2(x)
\bigr).$ The second lifting passes through the symmetric copy
$x'\in(SX)(2,ba)$.
Since the event order is reversed, the face indices are exchanged,
yielding the path $\alpha'
=
\bigl(
\delta^0_2(x'),\;
d^0_1,\;
x',\;
d^1_2,\;
\delta^1_1(x')
\bigr).$ %Thus the two liftings of $\alpha$ correspond exactly to the two possible
%event orders $(2,ab)$ and $(2,ba)$ represented in the symmetrised HDA.
\end{example}

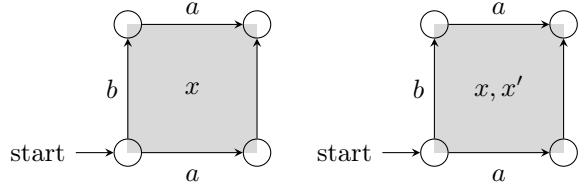
\begin{figure}[h!]
  \centering
  \setlength{\tabcolsep}{.2cm}
  \begin{tabular}{ccc}
  
  \begin{tikzpicture}[x=1.7cm, y=1.7cm, >=stealth]
  %--- node (state) style -------------------------------------------------------
  
       \path[fill=black!15] (0,0) to (1,0) to (1,1) to (0,1);
      \node[state, initial, minimum size=8pt] (00) at (0,0) {};
      \node[state, minimum size=8pt] (10) at (1,0) {};
      %\node[state] (20) at (2,0) {};
      \node[state, minimum size=8pt] (01) at (0,1) {};
      \node[state, minimum size=8pt] (11) at (1,1) {};
    %  \node[state] (21) at (2,1) {};
      %\node[state, accepting] (21) at (2,1) {};
      \path (00) edge node[below] {$\vphantom{d}a$} (10);
     % \path (11) edge node[above] {$\vphantom{d}b$} (21);
      %\path (10) edge node[below] {$d$} (20);
      \path (01) edge node[above] {$a$} (11);
      %\path (11) edge node[above] {$d$} (21);
      \path (00) edge node[left] {$b$} (01);
      \path (10) edge node[right] {} (11);
      \node at ($(0,0)!0.5!(1,1)$) {\(x\)};
\end{tikzpicture}
    
    &
  \begin{tikzpicture}[x=1.7cm, y=1.7cm, >=stealth]
  %--- node (state) style -------------------------------------------------------
  
       \path[fill=black!15] (0,0) to (1,0) to (1,1) to (0,1);
      \node[state, initial, minimum size=8pt] (00) at (0,0) {};
      \node[state, minimum size=8pt] (10) at (1,0) {};
      %\node[state] (20) at (2,0) {};
      \node[state, minimum size=8pt] (01) at (0,1) {};
      \node[state, minimum size=8pt] (11) at (1,1) {};
    %  \node[state] (21) at (2,1) {};
      %\node[state, accepting] (21) at (2,1) {};
      \path (00) edge node[below] {$\vphantom{d}a$} (10);
   %   \path (11) edge node[above] {$\vphantom{d}b$} (21);
      %\path (10) edge node[below] {$d$} (20);
      \path (01) edge node[above] {$a$} (11);
      %\path (11) edge node[above] {$d$} (21);
      \path (00) edge node[left] {$b$} (01);
      \path (10) edge node[right] {} (11);
      \node at ($(0,0)!0.5!(1,1)$) {\(x,x'\)};
     % \draw[->, very thick, blue] (0,0) --  (.5,0)--  (.5,0.5);
 %     \path (10) edge (11);
      %\path (20) edge node[right] {$a$} (21);
        %\draw[-, very thick, red] (0,0) -- (1,0) -- (1,.5) -- (1.5,.5) -- (1.5,1)  -- (2,1);
% \node at (.3,.6) {$\vphantom{b} \mathbf{y}_U $};
%   \node[font=\footnotesize] at (1.4,.2) {\textcolor{orange}{}};
    %   \draw[-,  very thick, orange] (0,0) --  (2,0) ;
    %   \node[] at (1.7,0.2) {\textcolor{orange}{$\alpha_2$}};
       %  \node[state, fill=orange, minimum size=0.2mm] at (1,.5) {};
         %\node[font=\footnotesize] at (.3,.2) {\textcolor{blue}{}};
       %    \draw[-, very thick, blue] (0,0) -- (1,1) ;
  
\end{tikzpicture}
    
  \end{tabular}
  \bigskip
  \caption{HDA $\mathcal{X}$ on the left with its symmetriser $S\mathcal{X}$ on the right. }
  \label{fig:hda and its symmetric hda}
\end{figure}
The following is an example in 3 dimensions. It illustrates how the symmetriser affects paths and how
permutations must act coherently along a path in~$SX$.

\begin{example}\label{ex:path_symmetrization}
Consider the precubical set \(X\) represented in Figure~\ref{fig:3D HDA}, where \(x,z\in X[3]\) the two 3-cells sharing the 2-cell \(y\), and consider $\alpha=(x,\ d^{1}_{3},\ y,\ d^{0}_{3},\ z)\in \Path_X,$ which first terminates the event \(c\) and then initiates the event \(d\). There are $3!=6$ distinct liftings of $\alpha$ in $SX$. These include:  
%Applying the symmetriser yields the canonical symmetric copy \linebreak $S\alpha=((\mathrm{id}\!\cdot x),\, d^{1}_{3},\, (\mathrm{id}\!\cdot y),\, d^{0}_{3},\, (\mathrm{id}\!\cdot z)),$ and also other symmetric copies obtained by applying permutations.
%In a symmetric HDA, each face step must satisfy the compatibility condition
%\(\theta_{-}=d_{i}\theta_{+}\), so a permutation must act
%\emph{uniformly} along the entire path.
%Hence, there is no valid path in \(SX\) that goes from
%\((\mathrm{id}\!\cdot x)\) to \((\mathrm{id}\!\cdot z)\) through
%\((12)\!\cdot y\); the intermediate 2-cell \((12)\!\cdot y\) forces the
%endpoints to be \((12)\!\cdot x\) and \((12)\!\cdot z\).
%The following symmetric copies of \(\alpha\) are induced:
\begin{enumerate}
    \item $ \beta
      =\bigl((12)\!\cdot x,\ d^{1}_{3},\ (12)\!\cdot y,\ d^{0}_{3},\ (12)\!\cdot z\bigr),$ the symmetric lifting corresponding to the transposition \((12)\) acting on $\ev(x)$.
   % and every step satisfies the required symmetry--face compatibility \(\theta_- = d_3\theta_+ = (12)\).
    \item $ \beta'
      =\bigl(\sigma\!\cdot x,\ d^{1}_{1},\ (12)\!\cdot y,\ d^{0}_{1},\ \sigma\!\cdot z\bigr),$ where \(\sigma=(123)\), the reduced permutation is
    \(d_{1}\sigma=(12)\). Here the intermediate 2-cell is \((d_1\sigma)\!\cdot y=(12)\!\cdot y\), and the face
    maps \(d^{1}_{1}\) and \(d^{0}_{1}\) correspond to \(d^{1}_{3}\) and \(d^{0}_{3}\) on the underlying path \(\alpha\), as enforced by the equivariance squares (via the permutation $\sigma$ and the equation $\theta_- = d_i\theta_+$).
\end{enumerate}
\end{example}
We recall the notion of adjacency of paths, as introduced by
van Glabbeek \cite{VANGLABBEEK2006265}. Adjacency and the derived notion of congruence apply uniformly to paths in a
precubical set $X$ and in its symmetric counterpart $SX$ \cite{KAHL202247}.
\begin{definition}[Adjacency of paths]\label{def:adjacency} Two paths $\alpha$ and $\alpha'$ are said to be \emph{adjacent},
written $\alpha \leftrightsquigarrow \alpha'$, if one can be obtained
from the other by a single local replacement of one of the following
forms, for indices $i<j$:

\begin{minipage}[t]{0.48\linewidth}
\begin{enumerate}
  \setlength{\itemsep}{0pt}
  \setlength{\parskip}{0pt}
  \setlength{\parsep}{0pt}
  \item[(1)] $(d_i^0,\;x_\ell,\;d_j^0)\;\rightsquigarrow\;(d^0_{j-1},\;x'_\ell,\;d_i^0)$,
  \item[(2)] $(d_j^1,\;x_\ell,\;d_i^1)\;\rightsquigarrow\;(d_i^1,\;x'_\ell,\;d_{j-1}^1)$,
\end{enumerate}
\end{minipage}\hfill
\begin{minipage}[t]{0.48\linewidth}
\begin{enumerate}
  \setlength{\itemsep}{0pt}
  \setlength{\parskip}{0pt}
  \setlength{\parsep}{0pt}
  \item[(3)] $(d_i^0,\;x_\ell,\;d_j^1)\;\rightsquigarrow\;(d^1_{j-1},\;x'_\ell,\;d_i^0)$,
  \item[(4)] $(d_j^0,\;x_\ell,\;d_i^1)\;\rightsquigarrow\;(d_i^1,\;x'_\ell,\;d_{j-1}^0)$.
\end{enumerate}
\end{minipage}
%Adjacency is a \emph{symmetric} relation: if $\alpha \leftrightsquigarrow \alpha'$
%then $\alpha' \leftrightsquigarrow \alpha$.
\end{definition}

For any $\alpha$ of length $m$ and $1\le\ell<m$, there exists a unique
path $\alpha^{(\ell)}$ obtained from $\alpha$ by applying the
corresponding adjacency replacement to the segment
$(\varphi_\ell, x_\ell, \varphi_{\ell+1})$ of $\alpha$
\cite{VANGLABBEEK2006265}.
We write $\alpha \overset{\ell}{\leftrightsquigarrow} \alpha^{(\ell)}$.

Adjacency rules (1) and (2) exchange two faces of the same polarity and are
reversible, whereas rules (3) and (4) interchange a start with a
termination and are directed.
The reversible rules generate a symmetric equivalence relation on paths,
called \emph{congruence}, which is the notion relevant for our purposes.
The directed rules induce subsumption; see \cite{KleeneTh}.
\begin{definition}[Congruence of paths]\label{def:path-congruence}
\emph{Congruence} $\simeq$ is the relation on paths generated by, for indices $i<j$:
\begin{itemize}
        \item $(d_i^0,\;x_\ell,\;d_j^0)\;\simeq\;(d^0_{j-1},\;x'_\ell,\;d_i^0)$;
        \item $(d_j^1,\;x_\ell,\;d_i^1)\;\simeq\;(d_i^1,\;x'_\ell,\;d_{j-1}^1)$;
      \item $\gamma * \alpha * \delta \simeq \gamma * \beta * \delta$ whenever $\alpha \simeq \beta$.
    \end{itemize}
    where in each case $x'_\ell$ is the unique cell determined by the
adjacency replacement in Definition~\ref{def:adjacency}.
\end{definition}
\section{Observable content}\label{sec: obs content}
This section collects the notions used to describe the observable behaviour
of paths. We recall the ST--trace semantics and introduce an event order free pomset-based notion of
path labels. This formulation and its extension to sHDAs are new. We then study how observable content behaves under symmetric liftings.
\subsection{ST-trace}
Fix a down--step $(x_\ell,d^1_{i_{\ell+1}},x_{\ell+1})$ occurring in a path $$\alpha=(x_0,\varphi_1,x_1,\ldots,\varphi_\ell,x_\ell,d^1_{i_{\ell+1}},x_{\ell+1},
\varphi_{\ell+2},\ldots,x_m)$$
in a (s)HDA.
By \cite{VANGLABBEEK2006265,KAHL202247}, there exists a unique index
$k\le \ell$ such that $$\alpha \overset{\ell}{\leftrightsquigarrow} \alpha^{(\ell)} 
\overset{\ell-1}{\leftrightsquigarrow} \alpha^{(\ell-1)} 
\overset{\ell-2}{\leftrightsquigarrow} \cdots 
\overset{k+1}{\leftrightsquigarrow} \alpha^{(k+1)} 
\overset{k}{\not\leftrightsquigarrow} \alpha^{(k)},$$

where each step is obtained by applying the unique adjacency replacement
at the indicated position.

This means that the down--step can be moved successively towards the
beginning of the path by adjacency replacements in the sense of
Definition~\ref{def:adjacency}, until it sits immediately after
its matching start step and cannot be moved further.

More explicitly, the down--step $d^1_{i_{\ell+1}}$ can be commuted one
position to the left precisely when the adjacent segment
$(\varphi_\ell,x_\ell,d^1_{i_{\ell+1}})$ matches one of the adjacency
patterns of Definition~\ref{def:adjacency}, that is, when the
indices of the two consecutive face maps satisfy the corresponding
inequality condition $i<j$ required there.

The process stops exactly at position $k$ because this condition fails:
in $\alpha^{(k+1)}$ the adjacent segment has the form
$(d^0_{i_k},x_{k+1},d^1_{i_k})$, where the indices coincide and no
adjacency replacement is applicable.
We then write $\mathsf{start}(i_{\ell+1})=k$.

\begin{definition}[ST–trace \cite{VANGLABBEEK2006265}]\label{def:split-ST-trace}
Let 
$
\alpha=(x_{0},
d^{\,k_{1}}_{i_{1}},x_{1},
\dots,
d^{\,k_{n}}_{i_{n}},x_{n})\in \Path_{S\mathcal{X}}$, and let $\lambda(i_j)$ denote the label of the
event whose start or termination is represented by the face map
$d^{k_j}_{i_j}$.
For each step define
$$\sigma_j^{ST} :=
\begin{cases}
  + & \text{if } k_j = 0,\\
  start(i_j) & \text{if } k_j = 1.
\end{cases}$$

Then $\textit{ST–trace}(\alpha)= (\lambda(i_1)^{\sigma^{ST}_1},\ldots,\lambda(i_n)^{\sigma^{ST}_n}).$
\end{definition}
The \emph{ST–trace} records not only when each action starts ($+$) and
terminates, but also links each termination to the position of its
corresponding start via the index $start(i_j)$.  In this way, the
ST–trace encodes the causal pairing between the beginning and ending of
individual actions, capturing their overlap and nesting
(see~\cite{VANGLABBEEK2006265} for the original formulation).

\medskip
\subsection{Observable content as pomsets}
Recall that for a cell $x$ in a (s)precubical set, $\ev(x)$ is a canonical
conclist (concset).
In this section, we systematically forget the event order and regard
$\ev(x)$ only as its underlying concset.
We now extend $\ev$ from cells to paths.
\begin{definition}\label{def: path pomset}[label of a path]
    Let $X$ be a precubical set and let $\alpha \in \Path_{X}$. The label of $\alpha$ is the ipomset $\ev(\alpha)$, computed recursively
 \begin{enumerate}
  \item \label{en: one cell} If $\alpha=(x)$ has length 0, then we set $\ev(\alpha)=(\mathrm{id}_{\ev(x)},\ev(x),\mathrm{id}_{\ev(x)}): \ev(x) \rightarrow \ev(x);$% $\ev(\alpha)= \ev(x) \stackrel{id}{\rightarrow} \ev(x)\stackrel{id}{\leftarrow}\ev(x);$
  \item %If $\alpha=\left(y \nearrow^{A}  x \right)$ or equivalently or equivalently $\delta^0_i(x)=y$, where $A \subseteq \ev(x)$ 
  If $\alpha=\left(y ,d_i^0, x \right)$, then $\ev(\alpha)$ is 
$(\iota_i,\ev(x),\mathrm{id}_{\ev(x)}): \ev(y) \rightarrow \ev(x);$ 
  \item \label{en: terminator } 
  If  $\alpha=\left(x , d_i^1, y \right)$, then $\ev(\alpha)$ is $(\mathrm{id}_{\ev(x)},\ev(x),\iota_i): \ev(x) \rightarrow \ev(y);$ %$\ev(x) \stackrel{id}{\rightarrow} \ev(x) \stackrel{\iota_{i}}{\leftarrow} \ev(y) ,$
  \item \label{en : label gluing} If $\alpha=\beta_{1} * \cdots * \beta_{m}$ is a concatenation of steps $\beta_{i}$, then $\ev(\alpha)=\ev\left(\beta_{1}\right) * \cdots * \ev\left(\beta_{m}\right).$
  % where $\iota_{i}:(n-1) \hookrightarrow (n)$ is the order-preserving
    injection skipping $i$.
\end{enumerate}
 Since paths in a symmetric precubical set $SX$ are formally paths in the
underlying precubical set $s^{*}(SX)$, the interval--pomset label
$\ev(\beta)$ is also well defined for every $\beta\in\Path_{SX}$.
\end{definition}
By construction, ipomset labels are built stepwise along paths; in particular, the label of any prefix embeds canonically into the label of the full path.

The label of any path $\alpha$ in a (s)precubical set is a finite gluing of discrete ipomsets. Consequently, by Proposition \ref{prop:decomposition-into-discrete}, $\ev(\alpha)$ is an interval ipomset for any path $\alpha$. This is why we restricted our focus on only interval ipomsets.
\paragraph{Comparison with ordered labels.}
In the literature, interval--pomset labels of paths are typically defined
using conclists, thereby retaining an explicit order on events; this
ordering is used to control the gluing operation
\cite{LanguageofHDA}.
In contrast, we systematically forget event order and work with the
underlying concsets.
The gluing operation remains well defined in our setting because all
interfaces involved are canonical, and interface maps are uniquely
determined.

Our formulation allows us to identify behaviours that differ only
by a permutation of concurrent events.
In particular, the interval pomsets $(a \parallel b)$ and
$(b \parallel a)$ are isomorphic in our setting, whereas they are
distinguished in ordered approaches, where isomorphisms must preserve event
order. This resolves a mismatch between pomset semantics and the intended treatment
of concurrency that appears in parts of the literature.

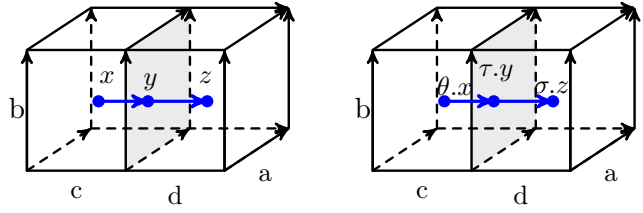
\begin{figure}
    \centering
     \begin{tikzpicture}[
  x={(1*1.3cm,0cm)},                 % projection of X axis
  y={(0.65*1.3cm,0.35*1.6cm)},           % projection of Y axis (depth)
  z={(0cm,1*1.6cm)},                 % projection of Z axis (height)
  line cap=round,line join=round,
  solid/.style={line width=0.9pt},
  hidden/.style={line width=0.9pt,dashed},
  face/.style={fill=black!8,draw=none},
  ghost/.style={draw=black!50,line width=0.6pt,opacity=0.7}
]
\def\hw{0.2}
% Box size: 2 (x) by 1 (y) by 1 (z)  ==> two attached unit cubes
\begin{scope}

    % --- 3D coordinates (front y=0; back y=1) ---
\coordinate (A) at (0,0,0);   % front-bottom-left
\coordinate (B) at (2,0,0);   % front-bottom-right
\coordinate (C) at (2,0,1);   % front-top-right
\coordinate (D) at (0,0,1);   % front-top-left

\coordinate (E) at (0,1,0);   % back-bottom-left
\coordinate (F) at (2,1,0);   % back-bottom-right
\coordinate (G) at (2,1,1);   % back-top-right
\coordinate (H) at (0,1,1);   % back-top-left

% Midpoints used for the shared-face and dots
\coordinate (Mfront) at (1,0,0);     % middle front bottom
\coordinate (MfrontTop) at (1,0,1);  % middle front top

%_________________
 \fill[face] (Mfront) -- (MfrontTop) -- (1,1,1) -- (1,1,0) -- cycle;
% ---------- Hidden edges (draw first) ----------
\draw[hidden] (A) -- (E);      % left bottom depth
\draw[hidden] (Mfront) -- ++(0,1,0); % middle bottom depth
\draw[hidden] (B) -- (F);      % right bottom depth
\draw[hidden] (E) -- (F);    
% back verticals that are hidden
\draw[hidden] (0,1,0) -- (0,1,1);     % (E)--(H)
\draw[hidden] (1,1,0) -- (1,1,1);     % interior back vertical
\draw[solid] (MfrontTop) -- (1,1,1);  %middle uper line
% ---------- Visible edges (outline) ----------
% front rectangle
\draw[solid] (A)--(B)--(C)--(D)--cycle;

% shared front vertical between the two cubes
\draw[solid] (Mfront) -- (MfrontTop);

% top rim (back edge visible)
\draw[solid] (D) -- (H) -- (G) ;
\draw[solid] (C) -- (G) ;
% right face outline
\draw[solid] (B) -- (F) -- (G);  % note: part of (B)--(F) hidden is already dashed; the
                                 % solid stroke on top produces the visible right edge

% left front vertical
\draw[solid] (A) -- (D);

% ---------- Labels “1” ----------
\node[scale=1.1] at (-0.1,0,.52) {b};       % left vertical
\node[below=2pt] at (0.5,0,0) {c};          % first cube length
\node[below=2pt] at (1.5,0,0) {d};   
\node[below=2pt] at (2.15,.4,0) {a}; 
\draw[solid] (B) -- (C) ;
%\node[below=2pt] at (1.5,0,0) {1};  
%%%%% The paths elements
\node[below=2pt] at (1.81,0,.75+\hw) {$z$};  
\node[below=2pt] at (.81,0,.75+\hw) {$x$};  
%\node[below=2pt] at (1.25,0,.91+\hw) {$y$};  
\node[below=2pt] at (1.25,0,.73+\hw) {$y$}; 
  \draw[->, very thick, blue] (0.4,0.5,.4) -- (.9,0.5,.4);
   \draw[->, very thick, blue] (.9,0.5,.4) -- (1.5,0.5,.4);
  \fill[blue] (.9,0.5,.4)circle[radius=2.3pt];
   \fill[blue] (1.5,0.5,.4)circle[radius=2.3pt];
    \fill[blue] (0.4,0.5,.4)circle[radius=2.3pt];
  
\end{scope}
\begin{scope}[shift={(3.5,0,0)}]
    % --- 3D coordinates (front y=0; back y=1) ---
\coordinate (A) at (0,0,0);   % front-bottom-left
\coordinate (B) at (2,0,0);   % front-bottom-right
\coordinate (C) at (2,0,1);   % front-top-right
\coordinate (D) at (0,0,1);   % front-top-left

\coordinate (E) at (0,1,0);   % back-bottom-left
\coordinate (F) at (2,1,0);   % back-bottom-right
\coordinate (G) at (2,1,1);   % back-top-right
\coordinate (H) at (0,1,1);   % back-top-left

% Midpoints used for the shared-face and dots
\coordinate (Mfront) at (1,0,0);     % middle front bottom
\coordinate (MfrontTop) at (1,0,1);  % middle front top

%_________________
 \fill[face] (Mfront) -- (MfrontTop) -- (1,1,1) -- (1,1,0) -- cycle;
% ---------- Hidden edges (draw first) ----------
\draw[hidden] (A) -- (E);      % left bottom depth
\draw[hidden] (Mfront) -- ++(0,1,0); % middle bottom depth
\draw[hidden] (B) -- (F);      % right bottom depth
\draw[hidden] (E) -- (F);    
% back verticals that are hidden
\draw[hidden] (0,1,0) -- (0,1,1);     % (E)--(H)
\draw[hidden] (1,1,0) -- (1,1,1);     % interior back vertical
\draw[solid] (MfrontTop) -- (1,1,1);  %middle uper line
% ---------- Visible edges (outline) ----------
% front rectangle
\draw[solid] (A)--(B)--(C)--(D)--cycle;

% shared front vertical between the two cubes
\draw[solid] (Mfront) -- (MfrontTop);

% top rim (back edge visible)
\draw[solid] (D) -- (H) -- (G) ;
\draw[solid] (C) -- (G) ;
\draw[solid] (B) -- (C) ;
% right face outline
\draw[solid] (B) -- (F) -- (G);  % note: part of (B)--(F) hidden is already dashed; the
                                 % solid stroke on top produces the visible right edge

% left front vertical
\draw[solid] (A) -- (D);

% ---------- Labels “1” ----------
\node[scale=1.1] at (-0.1,0,.52) {b};       % left vertical
\node[below=2pt] at (0.5,0,0) {c};          % first cube length
\node[below=2pt] at (1.5,0,0) {d};   
\node[below=2pt] at (2.15,.4,0) {a}; 
\node[below=2pt] at (1.81,0,.67+\hw) {$\sigma.z$};  
\node[below=2pt] at (.83,0,.71+\hw) {$\theta.x$};  
\node[below=2pt] at (1.25,0,.81+\hw) {$\tau.y$};  
%\node[below=2pt] at (1.25,0,.71+\hw) {$((12).y)$}; 
  \draw[->, very thick, blue] (0.4,0.5,.4) -- (.9,0.5,.4);
   \draw[->, very thick, blue] (.9,0.5,.4) -- (1.5,0.5,.4);
  \fill[blue] (.9,0.5,.4)circle[radius=2.3pt];
   \fill[blue] (1.5,0.5,.4)circle[radius=2.3pt];
    \fill[blue] (0.4,0.5,.4)circle[radius=2.3pt];
  
  %\fill[blue] (.9,0.5,.4) circle[radius=1.9pt];
\end{scope}
        % second cube length
 %\fill[face] (M0) -- (N0) -- (N1) -- (M1) -- cycle;
% ---------- Red points ----------
  % midpoint of top front edge of right cube
%\fill[red] (2,0,0.5) circle[radius=1.6pt];  % midpoint of right front vertical edge
% \caption{Paths and their labels in two-dimensional HDAs }
  %\label{fig:3D HDA}
\end{tikzpicture}
    \caption{On the left hand side, a precubical set \(X\) consisting of two 3-dimensional cubes \(x \in X(3,abc)\) and \(z \in X(3,abd)\) attached along a common 2-dimensional cell $y\in X(2,ab)$.
On the right hand side, Its symmetrization \(SX\), where each cube generates six 3-cells of the form
\(\theta\!\cdot\!x\) and \(\sigma\!\cdot\!z\) for some \(\theta,\sigma\in \SG_3\),
all sharing the same boundary faces $\tau.y$ and $\id.y$, where $\tau \in \SG_2$.}
    \label{fig:3D HDA}
\end{figure}
%\end{definition}
%\begin{enumerate}
 % \item \label{sen: one cell} If $\alpha=(x)$ has length 0, then $\ev(\alpha)= \ev(x) \stackrel{id}{\rightarrow} \ev(x)\stackrel{id}{\leftarrow}\ev(x)$.
 % \item If $\alpha=\left(y ,d_i^0, x \right)$, then $\ev(\alpha)= \ev(y) \stackrel{\iota_{i}}{\rightarrow} \tau(\ev(x)) \stackrel{id}{\leftarrow} \tau(\ev(x)) $ for some $\tau \in \SG$.\footnote{This I do not understand.}
 % \item \label{sen: terminator } If $\alpha=\left(x ,d_i^1,  y \right)$, then $\ev(\alpha)=\tau(\ev(x)) \stackrel{id}{\rightarrow} \tau(\ev(x)) \stackrel{\iota_{i}}{\leftarrow} \ev(y) $ for some $\tau \in \SG$.
 % \item \label{sen : label gluing} If $\alpha=\beta_{1} * \cdots * \beta_{n}$ is a concatenation of steps $\beta_{i}$, then $\ev(\alpha)=\ev\left(\beta_{1}\right) * \cdots * \ev\left(\beta_{n}\right).$
%\end{enumerate}
%Note that the first three cases give rise to discrete ipomsets as labels. In particular, the case of an up step $\left(y \nearrow^{A}  x \right)$ (resp. down step $(x \searrow_B y)$) gives rise to a starter of events $A$ (resp. a terminator of events $B$). %Although the observable content of a path has been expressed as ipomsets in \cite{LanguageofHDA}, our work extends this idea by defining the observable content of a path in a symmetrized HDA as an ispomset.
\begin{example}
    Consider the HDA $\mathcal{X}$ and its sHDA $S\mathcal{X}$ of Figure \ref{fig:3D HDA}.
    \begin{itemize}
        \item The label of $\alpha=(x,d^{1}_{3},y,d^{0}_{3},z)$ is the ipomset
        \begin{equation*}
        \begin{aligned}
        \ev(\alpha)
          &= (\mathrm{id}_{\ev(x)},\,\ev(x),\,\iota_{3})\mathrel{*}(\iota_{3},\,\ev(z),\,\mathrm{id}_{\ev(z)}):\ev(x)\to\ev(z)\\
          &= \left(\vcenter{\hbox{%
          \begin{tikzcd}[row sep=0.05cm, column sep=0.21cm, cells={nodes={inner sep=.5pt}}]
               \ibullet a \ibullet \\
               \ibullet b \ibullet \\
               \ibullet \vphantom{d}c \phantom{\ibullet}
          \end{tikzcd}}}\right)
          \mathrel{*}
          \left(\vcenter{\hbox{%
          \begin{tikzcd}[row sep=0.05cm, column sep=0.21cm, cells={nodes={inner sep=.5pt}}]
               \ibullet a \ibullet \\
               \ibullet b \ibullet \\
               \phantom{\ibullet} d \ibullet
          \end{tikzcd}}}\right)
          \\
          &= \left(\vcenter{\hbox{%
          \begin{tikzcd}[row sep=0.09cm, column sep=0.21cm, cells={nodes={inner sep=.5pt}}]
              & \ibullet a \ibullet & \\
              & \ibullet b \ibullet & \\
              \ibullet c\ar[rr] & & d \ibullet
          \end{tikzcd}}}\right),
        \end{aligned}
        \end{equation*}
        which corresponds to first terminating the event~$c$ and then starting~$d$ through the common face labelled~$ab$.
        \item The label of $\beta = \bigl((\theta\!\cdot x),\; d^{1}_{3},\; (d_3\theta\!\cdot y),\; d^{0}_{3},\; (\theta\!\cdot z)\bigr)$, where $\theta=(12)$, is
        \begin{equation*}
        \begin{aligned}
        \ev(\beta)
          &= \left(\vcenter{\hbox{%
          \begin{tikzcd}[row sep=0.05cm, column sep=0.21cm, cells={nodes={inner sep=.5pt}}]
               \ibullet b \ibullet \\
               \ibullet a \ibullet \\
               \ibullet \vphantom{d}c \phantom{\ibullet}
          \end{tikzcd}}}\right)
          \mathrel{*}
          \left(\vcenter{\hbox{%
          \begin{tikzcd}[row sep=0.05cm, column sep=0.21cm, cells={nodes={inner sep=.5pt}}]
               \ibullet b \ibullet \\
               \ibullet a \ibullet \\
               \phantom{\ibullet} d \ibullet
          \end{tikzcd}}}\right)\\
          &= \left(\vcenter{\hbox{%
          \begin{tikzcd}[row sep=0.09cm, column sep=0.21cm, cells={nodes={inner sep=.5pt}}]
              & \ibullet b \ibullet & \\
              & \ibullet a \ibullet & \\
              \ibullet c\ar[rr] & & d \ibullet
          \end{tikzcd}}}\right).
        \end{aligned}
        \end{equation*}

        \item The label of
        $\beta' = \bigl(\sigma\!\cdot x,\ d^{1}_{1},\ (12)\!\cdot y,\ d^{0}_{1},\ \sigma\!\cdot z\bigr)$, where $\sigma=(123)$, is
        \begin{equation*}
        \begin{aligned}
        \ev(\beta')
          &= \left(\vcenter{\hbox{%
          \begin{tikzcd}[row sep=0.05cm, column sep=0.21cm, cells={nodes={inner sep=.5pt}}]
               \ibullet \vphantom{d}c \phantom{\ibullet} \\
               \ibullet a \ibullet \\
               \ibullet b \ibullet
          \end{tikzcd}}}\right)
          \mathrel{*}
          \left(\vcenter{\hbox{%
          \begin{tikzcd}[row sep=0.05cm, column sep=0.21cm, cells={nodes={inner sep=.5pt}}]
               \phantom{\ibullet} d \ibullet \\
               \ibullet a \ibullet \\
               \ibullet b \ibullet
          \end{tikzcd}}}\right)\\
          &= \left(\vcenter{\hbox{%
          \begin{tikzcd}[row sep=0.09cm, column sep=0.21cm, cells={nodes={inner sep=.5pt}}]
              \ibullet c\ar[rr] & & d \ibullet \\
              & \ibullet a \ibullet & \\
              & \ibullet b \ibullet &
          \end{tikzcd}}}\right).
        \end{aligned}
        \end{equation*}
    \end{itemize}

\end{example}
%\begin{remark}[Autoconcurrency]\label{rem:autoconcurrency}
\paragraph{Remark on autoconcurrency.}
All constructions above apply equally in the presence of autoconcurrency.
Distinct concurrent events carrying the same label remain distinguished by
their structural positions in the underlying concsets.
Hence, path formation and interval--pomset labeling are unaffected by
autoconcurrency.
%\end{remark}
%The next lemma captures the two fundamental behaviours of labels under
%adjacency classes: subsumption steps strictly increase causal latitude
%(monotonicity), whereas same-polarity swaps preserve labels (invariance).
\subsection{Compatibility of symmetrisation with Observable content.}
We relate the observable content of a path in a precubical set $X$ to that of
its liftings to the symmetric precubical set $SX$.
In the following, we consider separately the behaviour of ST--traces and of
interval--pomset labels under symmetrisation.
\paragraph{ST--semantics.}
Lifting a path to the symmetric setting preserves its ST--trace.
In particular, the matching between start and termination events is
invariant under symmetrisation.
This follows from Theorem~6.1 of~\cite{KAHL202247}, which shows that every
HDA is hhp-bisimilar to its symmetric counterpart; hhp-bisimulation will
be introduced in Section~\ref{sec: Bis for HDA}.
\begin{proposition}
\label{lem:ST-invariant-under-S}
Let $\mathcal{X}=(X,i_X,F_X)$ be an HDA, and let $\alpha\in\Path_{\mathcal{X}}$.
If $\alpha'$ is a lifting of $\alpha$ in $S\mathcal{X}$, then $\mathrm{ST\text{-}trace}(\alpha')
\;=\;
\mathrm{ST\text{-}trace}(\alpha).$
\end{proposition}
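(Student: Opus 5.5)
The plan is to compare the two ST--traces letter by letter. Write $\alpha=(x_0,d^{k_1}_{i_1},x_1,\ldots,d^{k_n}_{i_n},x_n)$ and let $\alpha'=((\tau_0.x_0),d^{k_1}_{p_1},\ldots,d^{k_n}_{p_n},(\tau_n.x_n))$ be a lifting, so that $p_j=\tau_{j-k_j}(i_j)$ and the permutation coherence equations of Lemma~\ref{lem:paths-in-SX} hold. The step types coincide, since $\alpha'$ uses the same superscripts $k_j$. Hence the $+$ entries agree, and it remains to show two things: the labels agree position by position, and the $\mathsf{start}$ index of each down-step agrees.

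\emph{Labels.} Consider the step $j$. The event in $(SX)(n,\lambda)$ at position $p_j$ of $\tau.x$ corresponds, by the definition of the symmetriser, to position $\tau^{-1}(p_j)=i_j$ of $x\in X(n,\lambda\circ\tau)$. Its label is therefore $(\lambda\circ\tau)(i_j)=\lambda(p_j)$, and so $\lambda(i_j)$ in $\alpha$ equals $\lambda(p_j)$ in $\alpha'$. This is a direct unwinding of Definition~\ref{def:symmetriser}.

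\emph{Start indices.} This is the main step. I would show that adjacency commutes with lifting. Suppose $\alpha\overset{\ell}{\leftrightsquigarrow}\alpha^{(\ell)}$ via one of the rules of Definition~\ref{def:adjacency}. Then, given a lifting $\alpha'$ of $\alpha$, the rewrite at the same position $\ell$ applies to $\alpha'$, and it produces a lifting of $\alpha^{(\ell)}$ that uses the same permutations $\tau_j$ for $j\neq\ell$. To prove this, translate the indices through $\tau_{\ell-1},\tau_\ell,\tau_{\ell+1}$ using the interchange law $d^k_i\circ d_i\theta=\theta\circ d^k_{\theta^{-1}(i)}$ of $\Xi$. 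The two consecutive face maps in $\alpha$ act on distinct events of the higher cell, and their images under the relevant permutation are again distinct indices. The only subtlety is that the inequality $i<j$ may flip after permutation. When it flips, the rewrite in $\alpha'$ is the same rule read in the other direction, and rules (1) and (2) are symmetric in this sense. For rules (3) and (4), the lifted segment again has the required shape after the indices are reordered. Because the two paths share the same underlying geometry, the intermediate cell $x'_\ell$ in $SX$ is forced to be $\tau'.x'_\ell$ for the appropriate induced permutation. Conversely, the identity $d^k_i\circ d_i\theta=\theta\circ d^k_{\theta^{-1}(i)}$ applied to $\alpha'$ gives the reverse implication, so adjacency is reflected as well as preserved.

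\emph{Conclusion.} The stopping condition is that the segment has the form $(d^0_{i},x,d^1_{i})$ with equal indices, meaning the same event is started and terminated. This condition is invariant under lifting, because $p=\tau(i)$ is computed with the same permutation on both face maps around the common cell. So the chain of moves defining $\mathsf{start}(i_{\ell+1})$ in $\alpha$ lifts to a chain of the same length in $\alpha'$, and it halts at the same $k$. This gives equality of all $\sigma^{ST}_j$, and hence of the traces.

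The obstacle I expect is exactly the case analysis for rules (3) and (4) when the permutation reverses the relative order of $i$ and $j$. If that case analysis becomes unwieldy, the fallback is the route the paper itself suggests: invoke Theorem~6.1 of~\cite{KAHL202247}. That theorem states that the projection $SX\to X$, $(\tau,x)\mapsto x$, underlies an hhp-bisimulation, and hhp-bisimulations relate paths with equal ST--traces by \cite{VANGLABBEEK2006265}. The lifting $\alpha'$ projects to $\alpha$, which would conclude the proof.
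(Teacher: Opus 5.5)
Your proof is correct, but it takes a different route from the paper. The paper gives no combinatorial argument. It derives the statement in one sentence from Theorem~6.1 of \cite{KAHL202247} (every HDA is hhp-bisimilar to its symmetrisation), which is your fallback.

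Your main argument works directly with the paper's adjacency-based definition of $\mathsf{start}$. It rests on two facts:
\begin{itemize}
\item the $\ell$-th adjacency move commutes with lifting, with the new intermediate cell forced by $\iota_p\circ d_p\theta=\theta\circ\iota_{\theta^{-1}(p)}$;
\item the stopping pattern $(d^0_i,x,d^1_i)$ is invariant, because both indices around the common cell are transported by the same permutation.
\end{itemize}
The case you worried about is harmless. If that permutation reverses the order of $a\neq b$ in a segment $(d^0_a,x,d^1_b)$, the lifted segment matches rule (4) instead of rule (3), or conversely. Both rules turn a start followed by a termination into a termination followed by a start, and lifting leaves polarities unchanged.

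Your route buys self-containment and a reusable lemma. That lemma is essentially the compatibility of congruence with lifting that the paper later imports from \cite[Prop.~5.2]{KAHL202247} in the proof of Proposition~\ref{prop: from pomset to st}.

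It also avoids a weak point of the citation route. Bare hhp-bisimilarity of $\mathcal X$ and $S\mathcal X$ does not say that $\alpha$ is related to an \emph{arbitrary} lifting $\alpha'$. One needs the specific relation ``$\alpha'$ lifts $\alpha$'' to be the bisimulation. Under the paper's own definition, clause~2 for that relation is exactly the present proposition, so the citation is only as good as the precise form of Kahl's theorem.

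Your fallback names that relation, which is more precise than the paper. However, $(\tau,x)\mapsto x$ is not a precubical map, since it does not commute with faces (their indices get permuted). So the ``projection'' must be read as this lifting relation on paths, not as a morphism.

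The paper's route is shorter and packages ST-invariance with the remaining hhp clauses. The price is that it rests entirely on an external result whose exact formulation it does not spell out.
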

Thus symmetrisation preserves the causal pairing of events: although
$SX$ contains multiple symmetric copies of each cell, these permutations
affect only the naming of events, never the temporal structure of the
execution.
\paragraph{Pomset semantics.}
A similar compatibility holds for the ipomset interpretation of paths.
The label of a symmetric lifting of a path is obtained by permuting the ipomset label of the original path by the symmetric action.
\begin{proposition}\label{prop:glue and s commute  conseq}
    For any path $\alpha$ in a precubical set $X$ and $\beta \in SX$ lifting of $\alpha$, we have $$\sev(\alpha) \cong  \sev(\beta).$$
\end{proposition}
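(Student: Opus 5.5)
We construct an explicit isomorphism of ipomsets by induction on the length $m$ of $\alpha=(x_0,d^{k_1}_{i_1},x_1,\ldots,d^{k_m}_{i_m},x_m)$. Let $\beta=((\tau_0.x_0),d^{k_1}_{p_1},\ldots,d^{k_m}_{p_m},(\tau_m.x_m))$ be the given lifting. The relations from the paragraph on liftings are $p_j=\tau_{j-k_j}(i_j)$, with $\tau_{j-1}=d_{p_j}\tau_j$ for up-steps and $\tau_j=d_{p_j}\tau_{j-1}$ for down-steps. The induction statement is strengthened so that it can be glued. We claim there is an isomorphism $f_\alpha\colon\ev(\alpha)\to\ev(\beta)$ of underlying pomsets whose restriction to the target interface is the permutation $\tau_m$, i.e.\ $f_\alpha\circ t_\alpha=t_\beta\circ\tau_m$ on $\ev(x_m)=(n_m,\lambda)$. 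Similarly, on the source interface it is $\tau_0$. We read $\tau.x$ as a cell with events $(n,\lambda\circ\tau^{-1})$, so $\tau\colon(n,\lambda)\to(n,\lambda\circ\tau^{-1})$ is label preserving. Thus, in the sense of Definition~\ref{def:isomorphism-ispomsets}, $f_\alpha$ is an ipomset isomorphism with $f_U=\tau_0$ and $f_V=\tau_m$. Here we use that the definition of isomorphism allows nontrivial bijections on interfaces, since we work with concsets.

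\emph{Base and single steps.} For $m=0$, the label is the discrete ipomset on $\ev(x_0)$, and $\tau_0$ is the required isomorphism to $\ev(\tau_0.x_0)$. For an up-step $(y,d^0_i,x)$ lifted to $(\tau'.y,d^0_p,\tau.x)$ with $p=\tau(i)$ and $\tau'=d_p\tau$, both labels are discrete, and we take $f=\tau$ on the carrier $\ev(x)$. The target interfaces are identities, so they match with $f_V=\tau$. On the source side we need $\tau\circ\iota_i=\iota_p\circ d_p\tau$, which is exactly the defining square \eqref{eq: Perm}, since $\tau^{-1}(p)=i$. Down-steps are handled by the same square, with the roles of source and target swapped. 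Labels are preserved because the labelling of $\tau.x$ is $\lambda\circ\tau^{-1}$.

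\emph{Gluing.} Write $\alpha=\alpha'*\gamma$ with $\gamma$ a single step, and similarly $\beta=\beta'*\gamma'$. By induction we have $f_{\alpha'}$ with target component $\tau_{m-1}$, and by the single-step case we have $f_\gamma$ with source component $\tau_{m-1}$. Define $f$ on $\ev(\alpha')*\ev(\gamma)$ as $f_{\alpha'}$ on the first part and $f_\gamma$ on the second. This is well defined on the identified interface points $t_1(i)\equiv s_2(i)$ exactly because both maps restrict to $\tau_{m-1}$ there, and so they send the identified pair to $t'_1(\tau_{m-1}(i))\equiv s'_2(\tau_{m-1}(i))$. It is a bijection because the gluing quotient identifies corresponding sets on both sides. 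It preserves the order $\le_1\cup\le_2\cup(P_1\setminus t_1)\times(P_2\setminus s_2)$ because each piece is preserved, and the non-interface parts correspond under $f$. Its source and target components are $\tau_0$ and $\tau_m$, as required. Dropping the extra information gives $\ev(\alpha)\cong\ev(\beta)$.

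\emph{Main obstacle.} The delicate point is the gluing step. The paper insists that interface identifications be by the \emph{identity}, which is order preserving, whereas our isomorphism acts by the nontrivial permutation $\tau_{m-1}$ on the glued interface. Example~\ref{exp: diff gluing results} shows that gluing along a permutation can change the result. What rescues us is that the \emph{same} permutation appears on both sides of the interface, on the target of $\alpha'$ and on the source of $\gamma$. This is guaranteed by the coherence condition (ii) of Lemma~\ref{lem:paths-in-SX}, so the identifications are transported consistently. Verifying this compatibility, together with the face-permutation square, is the heart of the proof.
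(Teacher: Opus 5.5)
Your proof is correct and follows the paper's argument: induction on the path length, with single steps handled by letting $\tau$ act on the carrier (checked against the square \eqref{eq: Perm}), and the gluing step justified by the permutation coherence of Lemma~\ref{lem:paths-in-SX}. Your strengthened hypothesis, that the isomorphism restricts to $\tau_0$ on the source interface and to $\tau_m$ on the target interface, is exactly what makes precise the paper's brief claim that $f_1$ and $f_2$ coincide on the glued interface.
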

\begin{proof}
We employ induction on $m$ the length of $\beta$ (the same as the length of $\alpha$).
\begin{itemize}
    \item If $\beta:=(\theta.x)$ thus $\alpha:=(x)$, then $$\sev(\beta)\;=\;(\id,\ev(\theta.x),\ \id):\sev(x)  \cong \;(\id,\ \ev(x),\ \id)= \sev(\alpha).$$
\item%\textbf{(Up step / starter)} 
 If $\beta=((\theta_{j-1}.y),\ d^0_{i_j},\ (\theta_j.x))
\quad\text{thus}\quad \alpha=(y,d^0_{\theta_j^{-1}(i_j)},x))
$, then $\ev(\beta)\;:=\;\bigl(\iota_{i_j}\ \ev(\theta_j.x),\ \id \bigr).$
Further $\ev(\alpha)\;:=\;\bigl(\iota_{\theta_j^{-1}(i_j)}, \ev(x),\ \id\ \bigr):\sev(y) \to \sev(x)\cong \sev(\beta)$
\item In the case of down step, we proceed similarly.
\item If $\beta=\beta_1*\beta_2$, then $\alpha=\alpha_1*\alpha_2$ such that $\beta_{\ell}$ is a lifting of $\alpha_{\ell}$ for $\ell=\{ 0,1\}$. By induction hypothesis, there exist isomorphisms $f_1:\sev(\alpha_1)\xrightarrow{\cong}\sev(\beta_1) \text{ and } 
f_2:\sev(\alpha_2)\xrightarrow{\cong}\sev(\beta_2).$ By the equivarience condition in Lemma  \ref{lem:paths-in-SX}, and Def.\ref{def: path pomset} of $\ev$,
$f_1$ and $f_2$ coincide on the interface along which the gluing is
performed. Hence the union $f := f_1 \cup f_2$ is a well-defined bijection on the carrier of the glued pomset $\sev(\alpha_1)*\sev(\alpha_2)$.
Moreover, $f$ acts on the external interfaces. Therefore $f$ is an isomorphism i.e. $\sev(\alpha)\cong\sev(\beta)$. \qedhere
\end{itemize} 
\end{proof}
These results show that symmetrisation is compatible with both
observable semantics: ST--traces are preserved, while interval--pomset
labels are preserved up to isomorphism. 
\section{Relation between Pomset Labels and ST–Traces}\label{sec: relation}
This section makes precise the relationship between ST--traces and
interval--pomset labels.
Although both semantics are widely used to describe the behaviour of paths,
their exact correspondence depends on subtle structural conditions that are
often left implicit.
We first show how equality of ST--traces is reflected at the level of pomset
labels, and then analyse the converse direction.
\begin{lemma}\label{lemma:pomset and st trace} 
  Let $\alpha \in \Path_{\mathcal{X}}$ and $\beta \in \Path_{\mathcal{Y}}$, where $\mathcal{X}$ and $\mathcal{Y}$ are (s)HDAs. If $\alpha$ and $\beta$ have the same ST-trace, then $\sev(\alpha) \cong \sev(\beta)$.
\end{lemma}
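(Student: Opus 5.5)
The strategy is to show that the ST-trace of a path determines its ipomset label up to isomorphism. Concretely, I will reconstruct from $\textit{ST--trace}(\alpha)$ an abstract ipomset $P(\sigma)$, depending only on the trace $\sigma$, and prove $\sev(\alpha)\cong P(\sigma)$ for every execution $\alpha$ with that trace. The lemma then follows by transitivity: $\sev(\alpha)\cong P(\sigma)\cong\sev(\beta)$.

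Since both $\alpha$ and $\beta$ start at an initial $0$-cell, $\ev(x_0)=\emptyset$. The source interface is therefore empty, and every event of the label is created by some up-step. Let $\sigma=(a_1^{\sigma_1},\dots,a_n^{\sigma_n})$.
\begin{itemize}
\item The carrier of $P(\sigma)$ is the set $E=\{j\mid \sigma_j=+\}$ of start positions, labelled by $a_j$.
\item A termination at position $l$ with $\sigma_l=\mathsf{start}(i_l)=k$ closes the event started at position $k$. The ST-trace records exactly this pairing, so we obtain a partial function $\mathrm{end}:E\to\{1,\dots,n\}$.
\item Set $j<_{P(\sigma)} j'$ iff $\mathrm{end}(j)$ is defined and $\mathrm{end}(j)<j'$.
\item The target interface consists of the events with $\mathrm{end}$ undefined.
\end{itemize}
This is the usual interval order of the ``intervals'' $[j,\mathrm{end}(j)]$, so $P(\sigma)$ is an interval ipomset.

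The core step is an induction on the length $n$ of $\alpha$, using the recursive clause (4) of Definition~\ref{def: path pomset}: $\sev(\alpha*\gamma)=\sev(\alpha)*\sev(\gamma)$ for a single step $\gamma$. The inductive invariant will be a bijection
\[
\phi_m:\ \text{carrier of }\sev(\alpha_{\le m})\to E_{\le m}
\]
which is an ipomset isomorphism onto $P(\sigma_{\le m})$. On the target interface, $\phi_m$ must send the event sitting at coordinate $p$ of $\ev(x_m)$ to the start position of the unique start step that created it. The two step types then go as follows.
\begin{itemize}
\item \emph{Up-step} $d^0_i$: the gluing adds one new element $(\iota_i$-complement$)$ that is preceded by all non-interface elements of $\sev(\alpha_{\le m})$, that is, by all already terminated events. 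This is exactly the new element $m+1$ of $P$ with its predecessors.
\item \emph{Down-step} $d^1_i$: the gluing only removes coordinate $i$ from the target interface. On the $P$ side, this defines $\mathrm{end}$ of the corresponding event as $m+1$.
\end{itemize}

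The main obstacle is to check that in the down-step case the event removed from the interface, namely the one at coordinate $i$ of $\ev(x_m)$, is precisely the event $\phi_m^{-1}(\mathsf{start}(i))$. In other words, the coordinate bookkeeping along the path has to agree with van Glabbeek's adjacency-based definition of $\mathsf{start}$. I will prove this as a sublemma by induction on $m-k$:
\begin{itemize}
\item Moving the down-step leftward through one adjacency replacement of Definition~\ref{def:adjacency} shifts its index by exactly the same $\iota$-renumbering that the gluing applies to interface coordinates.
\item The process halts at the segment $(d^0_{i_k},x,d^1_{i_k})$, where both indices coincide.
\end{itemize}
Together these identify the coordinate $i$ with the event started at step $k$. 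The same argument works in sHDAs, since paths there are paths in $s^*Y$, and Proposition~\ref{prop:glue and s commute  conseq} allows passing between a path and its liftings if needed.

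Finally, the isomorphism $\sev(\alpha)\cong P(\sigma)$ is canonical but not order-preserving with respect to event order, which is permitted by Definition~\ref{def:isomorphism-ispomsets}. Composing with the analogous isomorphism for $\beta$ yields $\sev(\alpha)\cong\sev(\beta)$.
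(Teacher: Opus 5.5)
Your proof is correct, but it is organised differently from the paper's. The paper runs a single induction directly between the two paths, extending an isomorphism $f_m\colon\sev(\alpha_{\le m})\cong\sev(\beta_{\le m})$ one step at a time. A common symbol $a^+$ adds a fresh event on both sides. For a termination symbol, the paper simply asserts that the interface bijection $f^V_m$ ``respects the ST bookkeeping'' and therefore matches the two terminated events. You instead factor through a normal form $P(\sigma)$ determined by the trace alone, namely the interval order of the intervals $[j,\mathrm{end}(j)]$. You then compare each path with it separately, via a canonical $\phi_m$ fixed by creation positions.

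Your route proves slightly more: the label becomes an explicit function of the trace, and its interval character is visible directly. More importantly, it isolates the real content of the termination case. An arbitrary ipomset isomorphism need not send the terminated event to the terminated event, since it may swap two open events with equal labels. So the paper's inductive step is only sound with two additions:
\begin{itemize}
\item the strengthened invariant that the event created at position $k$ in $\alpha$ goes to the event created at position $k$ in $\beta$;
\item the fact that $d^1_i$ removes exactly the interface coordinate created at position $\mathsf{start}(i)$.
\end{itemize}
That fact is precisely your sublemma, which the paper takes for granted. Your argument for it is the right one: each leftward adjacency move re-indexes the down-step by the same $\iota$-renumbering that the gluing applies to interface coordinates, and the process halts at a segment $(d^0_c,x,d^1_c)$.

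One cosmetic point: the paper's interfaces are maps from canonical concsets, so present the target interface of $P(\sigma)$ as such a map (e.g.\ list the open events by start position). Ipomset isomorphism only constrains the images of the interfaces, so nothing else changes.
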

\begin{proof}
Write $\alpha=(\alpha_0,\ldots,\alpha_n)$ and $\beta=(\beta_0,\ldots,\beta_n)$ for paths of the same length,
and let $\alpha_{\le m}$ and $\beta_{\le m}$ be their prefixes of length $m$.
We prove by induction on $m$ that $\sev(\alpha_{\le m})\cong \sev(\beta_{\le m})$ as ipomsets (Def.~\ref{def:isomorphism-ispomsets}).
\textbf{Base case $m=0$.}
Both prefixes are the initial cell. Hence $\sev(\alpha_{\le 0})$ and $\sev(\beta_{\le 0})$
are the empty ipomset, so the isomorphism is trivial.
\textbf{Induction step.}
Assume $\sev(\alpha_{\le m})\cong \sev(\beta_{\le m})$ and consider the last steps
$\alpha_{m}\to\alpha_{m+1}$ and $\beta_{m}\to\beta_{m+1}$.
Set $\pi_m:=\sev(\alpha_{\le m})=(s_m,P_m,t_m):U_m\to V_m,
$ and $
\rho_m:=\sev(\beta_{\le m})=(s'_m,Q_m,t'_m):U'_m\to V'_m.$ By the induction hypothesis, there exists an isomorphism (Def.~4.15) $(f_m,f^U_m,f^V_m):\pi_m\cong \rho_m,$ that is, $f_m:P_m\to Q_m$ is a label-preserving bijection reflecting and preserving precedence,
and $f^U_m:U_m\to U'_m$, $f^V_m:V_m\to V'_m$ are the induced interface bijections making the
interface squares commute.
$$\begin{tikzcd}[row sep=.5cm, column sep=2cm]
U_m \arrow[r, "s_m"] \arrow[d, "f^U_m" swap]
  & P_m \arrow[d, "f_m"]
  & V_m \arrow[l, "t_m" swap] \arrow[d, "f^V_m"] \\[0.2cm]
U'_m \arrow[r, "s'_m" swap]
  & Q_m
  & V'_m \arrow[l, "t'_m"]
\end{tikzcd}$$

Since ${\rm ST\text{-}trace}(\alpha)={\rm ST\text{-}trace}(\beta)$, the $(m{+}1)$-st trace symbol
for $\alpha$ and $\beta$ coincides. We distinguish two cases.

\smallskip\noindent
\emph{Case 1: the common trace symbol is $a^+$.}
Then both last steps are \emph{starts} of an $a$-labelled event.
By definition of $\sev$, we have $\sev(\alpha_{\le m+1}) \;=\; \pi_m * E^+_a,$ $
\sev(\beta_{\le m+1}) \;=\; \rho_m * {E^+_a}',$ where $E^+_a$ and ${E^+_a}'$ are the elementary ipomsets that add one fresh $a$-event,
glued along the current target interfaces $V_m$ and $V'_m$, respectively.
There is an evident isomorphism $E^+_a\cong {E^+_a}'$ whose induced bijection on the gluing
interface is precisely $f^V_m:V_m\to V'_m$ (it maps the unique new $a$-event to the unique new $a$-event).
by the stepwise definition of path labels, the carrier bijection
$f_m$ extends to an isomorphism $\sev(\alpha_{\le m+1}) \;\cong\; \sev(\beta_{\le m+1}).$
$$\begin{tikzcd}[row sep=.5cm, column sep=2cm]
P_m \arrow[r, "i_P"] \arrow[d, "f_m" swap]
  & P_{m+1} \arrow[d, dashed, "f_{m+1}"] \\
Q_m \arrow[r, "i'_Q" swap]
  & Q_{m+1}
\end{tikzcd}
\qquad
\begin{tikzcd}[row sep=.5cm, column sep=2cm]
R_m^+ \arrow[r, "i_R"] \arrow[d, "g_m" swap]
  & P_{m+1} \arrow[d, dashed, "f_{m+1}"] \\
R_m'^+ \arrow[r, "i'_R" swap]
  & Q_{m+1}
\end{tikzcd}$$

\emph{Case 2: the common trace symbol is $a^-_k$.}
Then both last steps are \emph{terminations} of the $k$-th currently-open $a$-event
(in the standard ST bookkeeping).
Let $e\in V_m$ be the interface element of $\pi_m$ corresponding to that $k$-th open $a$-event,
and let $e'\in V'_m$ be the corresponding element of $\rho_m$.
Because the prefixes have the same ST-trace and $(f_m,f^U_m,f^V_m)$ is an ipomset isomorphism,
the induced bijection $f^V_m$ respects this bookkeeping, hence $f^V_m(e)=e'$. By definition of $\sev$, we have $\sev(\alpha_{\le m+1}) \;=\; \pi_m * E^-_{a,e},
$ $
\sev(\beta_{\le m+1}) \;=\; \rho_m * {E^-_{a,e'}}',$ where $E^-_{a,e}$ (resp.\ ${E^-_{a,e'}}'$) is the elementary ipomset that terminates the distinguished
interface element $e$ (resp.\ $e'$), glued along $V_m$ (resp.\ $V'_m$).
There is an isomorphism $E^-_{a,e}\cong {E^-_{a,e'}}'$ whose induced bijection on the gluing interface
is $f^V_m$ and which maps the distinguished element $e$ to $e'$.
$$\begin{tikzcd}[row sep=.5cm, column sep=2cm]
V_m \arrow[r, "s^-"] \arrow[d, "f^V_m" swap]
  & R_m^- \arrow[d, "g_m"] \\
V'_m \arrow[r, "s'^-" swap]
  & R_m'^-
\end{tikzcd}$$
By the stepwise definition of path labels, we obtain $\sev(\alpha_{\le m+1}) \;\cong\; \sev(\beta_{\le m+1}).$ Thus in either case the isomorphism extends from length $m$ to length $m{+}1$.
By induction, $\sev(\alpha)\cong \sev(\beta)$.
\end{proof}

One might expect the converse to hold, namely that
$\sev(\alpha)\cong \sev(\beta)$ implies equality of ST--traces.
This is not true in general.
Ipomset semantics abstract away from the precise temporal order of starts
and terminations of concurrent events and therefore identify paths that are merely congruent.
For instance, the ST--traces
$a^{+} b^{+} a^{1} b^{2}$ and $b^{+} a^{+} a^{2} b^{1}$
are distinct but induce the same interval ipomset
$(a \parallel b)$. Nevertheless, ipomset isomorphism retains enough information to recover
temporal behaviour up to congruence.
Achieving this requires additional structure, which we develop next.
\begin{definition}
    Let $\alpha=(x_{0}, d_{i_1}^{k_1}, x_{1}, d_{i_2}^{k_2}, \ldots, d_{i_m}^{k_m}, x_{m})$ and $\beta=(y_{0}, d_{r_1}^{k_1}, y_{1}, d_{r_2}^{k_2}, \ldots, d_{r_m}^{k_m}, y_{m})$. We say that $\alpha$ and $\beta$ have matching events if $\ev(x_{j-1},d^{k_j}_{i_j}, x_j)=\ev(y_{j-1},d^{k_j}_{r_j},y_j)$. We write $\alpha\equiv \beta$.
\end{definition}
\begin{lemma}\label{lemma: new lemma3 ev then ST}
    Let $\mathcal{X}$ and $\mathcal{ Y}$ be (s)HDAs, and let
$\alpha\in\Path_{\mathcal{ X}}$ and $\beta\in\Path_{\mathcal{ Y}}$. If $\alpha \equiv \beta$, then ST-trace$(\alpha)=$ST-trace$(\beta)$.
\end{lemma}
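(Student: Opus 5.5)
The plan is to argue step by step along the two paths, comparing the ST-trace symbols position by position. By the definition of $\equiv$, the two paths have the same length $m$ and the same polarity $k_j$ at every step. They also have identical elementary step labels $\ev(x_{j-1},d^{k_j}_{i_j},x_j)=\ev(y_{j-1},d^{k_j}_{r_j},y_j)$, taken as ipomsets on canonical concsets with canonical interface maps (the identity and the insertion $\iota_{i_j}$, resp.\ $\iota_{r_j}$).

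\emph{First extraction: data read off each elementary label.} From a step label I will read off three things. The cell $\ev(x_j)$ (resp.\ $\ev(x_{j-1})$) is the carrier. The polarity is determined by which interface is the proper insertion. The missing index is the unique element of the carrier not in the image of that insertion. Hence equality of step labels yields $i_j=r_j$ and equality of the labels $\lambda(i_j)=\lambda(r_j)$ of the started or terminated event. This already settles every symbol of the form $a^+$: both traces carry $\lambda(i_j)^+$ at position $j$.

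\emph{Second step: the termination symbols.} For a down-step at position $\ell+1$ I need $\mathsf{start}(i_{\ell+1})$ to coincide for $\alpha$ and $\beta$. I will use the characterisation recalled before Definition~\ref{def:split-ST-trace}: the index $k$ is found by repeatedly applying the unique adjacency replacement of Definition~\ref{def:adjacency} at positions $\ell,\ell-1,\dots$ until the segment has the form $(d^0_{i},x,d^1_{i})$. Whether a replacement at a given position applies, and which face indices result, depends only on the polarities and indices of the two consecutive face maps together with the inequality $i<j$. The intermediate cell $x'_\ell$ is irrelevant for the indices. I will therefore prove, by induction on the number of replacements performed, that $\alpha^{(t)}$ and $\beta^{(t)}$ carry the same sequence of pairs $(k_j,i_j)$. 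The base case is the first step. In the inductive step, the rewriting rules (1)--(4) act on index pairs by the same formula in both paths. It follows that the stopping position is the same, so $\mathsf{start}$ agrees.

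\emph{Main obstacle.} The subtle point is to justify that the stopping criterion, and the existence of the replacement in each path, is purely index-driven. Van Glabbeek's result, for both HDAs and sHDAs (via \cite{KAHL202247}), guarantees that the intermediate cell $x'_\ell$ exists uniquely whenever the index pattern fits, because it is built from the cubical identities. Given that, the index sequence evolves identically in both paths. For sHDAs I will either apply this directly or reduce it to the underlying precubical set $s^*Y$, as in Definition~\ref{def:path}. Assembling the position-wise equalities then gives $\mathrm{ST\text{-}trace}(\alpha)=\mathrm{ST\text{-}trace}(\beta)$.
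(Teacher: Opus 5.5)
Your proposal is correct and follows the same route as the paper. Matching events force identical face-map sequences and identical labelled cell types, and the ST-trace is a function of that data alone. You spell out the two points the paper treats as immediate: recovering $i_j=r_j$ from the insertion maps in the interfaces, and checking via the cubical identities that the $\mathsf{start}$ computation by adjacency replacements depends only on the indices.
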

\begin{proof}
By definition of $\equiv$, the paths $\alpha$ and $\beta$ have the same sequence of face maps.
That is, we can write $\alpha = (x_0, d_{i_1}^{k_1}, x_1, \dots, d_{i_{m+1}}^{k_{m+1}},$ $ x_{m+1}),
$ and $
\beta = (y_0, d_{i_1}^{k_1}, y_1, \dots, d_{i_{m+1}}^{k_{m+1}}, y_{m+1}),$ for the same indices $(i_j,k_j)$. Since the ST-trace construction depends only on the sequence of face maps
$(d_{i_1}^{k_1},\dots,d_{i_{m+1}}^{k_{m+1}})$ and not on the intermediate cells,
it follows immediately that ${\rm ST\text{-}trace}(\alpha)={\rm ST\text{-}trace}(\beta)$.
\end{proof}
The notion of matching events captures exactly the stepwise information
needed to determine the ST--trace. However, matching events alone is insufficient to relate arbitrary paths.
To realise ipomset equivalence stepwise, we therefore pass to symmetric
HDAs, where all permutations of concurrent events are explicit. 
\begin{lemma}\label{lemma: Sev and SP}
Let $X$ be a precubical set, and let $\alpha\in\Path_X$ and
$P$ an ipomset. If $\sev(\alpha)\cong P$, then there exists $\alpha'\in\Path_{SX}$ lifting of $\alpha$ such that $\sev(\alpha')=P$
\end{lemma}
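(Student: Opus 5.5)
The plan is to read the statement as follows: given an isomorphism $f:\sev(\alpha)\xrightarrow{\cong}P$, choose the permutations $\tau_0,\dots,\tau_m$ of a lifting so that the resulting label coincides with $P$ on the nose. It is therefore natural to view $P$ as an isomorphic copy of $\sev(\alpha)$ whose concrete interface data (canonical concsets, canonical maps $\iota_i$ or identities) differ only by relabellings of the positions of concurrent events. Pulling $P$ back along $f$, each event of $\sev(\alpha)$ receives, at every stage $j$, a position in the canonical concset $\ev(x_j)$. Comparing this with the position it occupies in $P$ defines a permutation $\tau_j\in\mathfrak S_{|U_j|}$.

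First I will make this precise using the stepwise structure of Definition~\ref{def: path pomset}. Write $\sev(\alpha)=\sev(\beta_1)*\cdots*\sev(\beta_m)$ over the steps of $\alpha$. The intermediate interfaces are canonical concsets $\ev(x_j)$ sitting inside the glued carrier, so $f$ transports each intermediate antichain $\ev(x_j)\subseteq\sev(\alpha)$ to an antichain of $P$. Since $P$ is an interval ipomset, Proposition~\ref{prop:decomposition-into-discrete} decomposes it as a gluing of discrete ipomsets. I will choose this decomposition so that its cut antichains are exactly the images $f(\ev(x_j))$, each enumerated canonically as in $P$. Comparing the two enumerations of the same antichain gives $\tau_j$, with $\tau_0$ fixed by the source interface of $P$ and $\tau_m$ by its target interface.

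Second, I will check that $(\tau_j)$ together with $p_j:=\tau_{j-k_j}(i_j)$ satisfies the lifting conditions of Lemma~\ref{lem:paths-in-SX}:
\[
\tau_{j-1}=d_{p_j}\tau_j \ (k_j=0),\qquad \tau_j=d_{p_j}\tau_{j-1} \ (k_j=1).
\]
For an up-step, the enumeration of $\ev(x_{j-1})$ is the enumeration of $\ev(x_j)$ with the new event deleted. By the defining square \eqref{eq: Perm} of induced face permutations, this is exactly $d_{p_j}\tau_j$; down-steps are symmetric. Hence $\alpha'=((\tau_0.x_0),d^{k_1}_{p_1},\dots,(\tau_m.x_m))$ is a lifting of $\alpha$.

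Third, I will prove $\sev(\alpha')=P$ by induction on the prefix length, following Proposition~\ref{prop:glue and s commute  conseq}. For a single step, $\sev$ of the lifted step is the discrete ipomset on $\ev(\tau_j.x_j)$, with interface map $\iota_{p_j}$, and this is by construction the $j$-th discrete piece of $P$. For gluings, both sides are glued along identical canonical interfaces, so the composites agree.

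The main obstacle is the meaning of the equality $\sev(\alpha')=P$, since pomsets are only defined up to carrier. My first attempt is to transport the carrier of $P$ along $f$ and show the resulting data coincide with the glued data. The delicate point is that the discrete decomposition of $P$ must match the step sequence of $\alpha$, including the order of same-polarity steps, which $P$ alone does not determine. I will therefore not decompose $P$ independently; instead I will define the cuts of $P$ as the $f$-images of the cuts of $\sev(\alpha)$. This is exactly why $f$ is used, and not merely the isomorphism type.
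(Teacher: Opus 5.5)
The overall idea (choose all $\tau_j$ at once, then verify the lifting conditions and compare labels stepwise) is a reasonable reorganisation of the paper's argument. The paper inducts on the length of $\alpha$, splits $\alpha=\alpha_1*\alpha_2$, sets $P_i=f(\alpha_i)$ and glues the liftings obtained from the induction hypothesis. There is, however, a genuine gap where you define the intermediate $\tau_j$.

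\textbf{The gap.} $P$ only enumerates its source and target interfaces. An intermediate antichain $f(\ev(x_j))$ has no ``position in $P$''. Taking the cuts to be $f$-images fixes the cuts but not their enumerations. Your second step then assumes what must be constructed, namely that consecutive enumerations differ by deleting one entry.

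By Lemma~\ref{lem:paths-in-SX}(ii) and the square \eqref{eq: Perm}, every lifting satisfies $\tau_j\circ\iota_{i_j}=\iota_{p_j}\circ\tau_{j-1}$ at an up-step, and $\tau_{j-1}\circ\iota_{i_j}=\iota_{p_j}\circ\tau_j$ at a down-step. Hence the relative order of two events is constant over their common lifetime. Down-steps force $\tau_j$; up-steps leave free only the position $p_j$ of the new event. So $\tau_0$, which is forced by $f_U$, together with these choices determines $\tau_m$. The real content of the lemma is that one can arrange $\tau_m$ to be the permutation forced by $f_V$, and you never show this.

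\textbf{It fails for arbitrary $P$.} Take $\alpha=(x)$ with $\ev(x)=(2,ab)$, $a\neq b$. Let $P$ be discrete on $\{p,q\}$ with $s_P\colon(2,ab)\to P$, $1\mapsto p$, $2\mapsto q$, and $t_P\colon(2,ba)\to P$, $1\mapsto q$, $2\mapsto p$. Then $\sev(\alpha)\cong P$ via $f_U=\id$, $f_V=(1\,2)$. But every lifting $(\sigma.x)$ has equal source and target interfaces, and your recipe would demand $\tau_0=\id$ and $\tau_0=\tau_m=(1\,2)$ at once.

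\textbf{What is missing.} You need a compatibility condition: the orders induced by $f_U$ and $f_V$ must agree on events that stay alive along all of $\alpha$. It holds when $P=\sev(\beta)$ for a path $\beta$, which is the only use of the lemma (Lemma~\ref{lemma: Sev and Sev}). It also holds under the paper's convention that gluing uses only the order-preserving maps $\id$ and $\iota_i$. You must state this condition and use it.

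\textbf{Completing the construction.} Given compatibility, build the lifting greedily:
\begin{itemize}
\item Take $\tau_0$ from $f_U$.
\item At a down-step, $\tau_j=d_{p_j}\tau_{j-1}$ is forced.
\item At an up-step, any $p$ gives a valid $\tau_j$: set $\tau_j(i_j)=p$ and $\tau_j\circ\iota_{i_j}=\iota_p\circ\tau_{j-1}$. Choose $p$ so that the new event sits where $f_V$ puts it relative to the currently alive target-interface events. This is possible because those events already appear in the $f_V$-order.
\end{itemize}
By induction, the current order on alive target-interface events always agrees with $f_V$. At stage $m$ every alive event is a target event, so $\tau_m$ is the permutation dictated by $f_V$. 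Your third step then goes through. Equivalently, compose $f$ with the natural isomorphism $\sev(\alpha)\cong\sev(\alpha')$ of Proposition~\ref{prop:glue and s commute  conseq}, whose interface components are $\tau_0$ and $\tau_m$.

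Following the paper would not rescue you. Its concatenation step hides the same problem: the two liftings from the induction hypothesis must meet in the same cell $\tau_j.x_j$, which is exactly this interpolation problem.
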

\begin{proof}
    Fix $\alpha'\in \Path_{SX}$ lifting of $\alpha$. We employ induction on $m$, the length of $\alpha$.
    \begin{itemize}
        \item If $\alpha=(x)$ there exists a unique permutation $\sigma$ such that $\sigma(\sev(x))=P$.
Define $\alpha':=(\sigma.x)$. Hence $\ev(\alpha') = P$, establishing the base case.
        \item If $\alpha=\bigl(y,\,d^0_i,\,x\bigr)$. On the one hand
$\sev(\alpha)=\bigl(\iota_i,\ \sev(x),\id\bigr): \sev(y) \to \sev(x).$
%\]
Since $\sev(\alpha)\cong P$ by an isomorphism $f$ (defined uniquely by permutation $\theta \in \SG$), we have the following diagram
$$\begin{tikzcd}[row sep=.5cm, column sep=2cm]
U_m \arrow[r, "\iota_i"] \arrow[d, "f^U" swap]
  & \sev(x) \arrow[d, "f"]
  & \sev(x) \arrow[l, "\id" swap] \arrow[d, "f"] \\[0.2cm]
U \arrow[r, "\iota_{f(i)}" swap]
  & V
  & V \arrow[l, "\id"]
\end{tikzcd}$$
Define $\alpha^{\prime}=\bigl((d_i\theta. y),d^0_{\theta(i)},(\theta.x)\bigl)$. By definition, $\sev(\alpha)=P$.
\item If $\alpha=\bigl(x,\,d^1_i,\,x\bigr)$, we proceed similarly by taking $\alpha^{\prime}=\bigl((\theta.x),d^0_{\theta(i)},(d_i\theta. y)\bigl)$
\item If $\alpha=\alpha_{1}*\alpha_{2}$ is a concatenation where $\alpha_1$ and $\alpha_2$ are shorter than $\alpha$ and $P\cong \ev(\alpha)$ via an isomorphism $f$, write $P_i$ for $i=1,2$ for the ipomsets such that $P_i=f(\alpha_i)$. By the induction hypothesis, there exist $\alpha_1^{\prime}$ and $\alpha_2^{\prime}$ lifting of $\alpha_1$ and $\alpha_2$ such that $\ev(\alpha_i^{\prime})=P_i$. Take $\alpha^{\prime}=\alpha_{1}^{\prime}*\alpha_{2}^{\prime}$, so that $\sev(\alpha^{\prime})=\sev(\alpha_{1}^{\prime}*\alpha_{2}^{\prime})=P_1*P_2=P$.
   \end{itemize}
\end{proof}
\begin{lemma}\label{lemma: Sev and Sev}
Let $X$ and $Y$ be precubical sets, and let $\alpha\in\Path_X$ and
$\beta\in\Path_Y$. If $\sev(\alpha)\cong \sev(\beta)$, then there exist $\alpha'\in\Path_{SX}$ lifting of $\alpha$ and $\beta'\in\Path_{SY}$ lifting of $\beta$ such that $\ev(\alpha')=\sev(\beta')$.
\end{lemma}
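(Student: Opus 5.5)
The plan is to reduce the statement to Lemma~\ref{lemma: Sev and SP} by fixing a common target ipomset. Set $P:=\sev(\beta)$, which is (up to the choice of carrier) literally the label of some path; more conveniently, I would first choose a canonical lifting on one side and then transport along the isomorphism on the other side.

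\textbf{Step 1: lift $\beta$.} Take the canonical lifting $\beta':=S\beta\in\Path_{SY}$, obtained by choosing all permutations equal to $\id$. By Proposition~\ref{prop:glue and s commute  conseq} we have $\sev(\beta)\cong\sev(\beta')$. In fact, for the canonical lifting the face indices and the cell event sets coincide with those of $\beta$. Hence the stepwise construction of Definition~\ref{def: path pomset} yields the same discrete pieces glued along the same canonical interfaces, so $\sev(\beta')=\sev(\beta)$ on the nose. This is a short induction on length and is needed only for cleanliness; an isomorphism would suffice.

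\textbf{Step 2: lift $\alpha$ onto the fixed target.} Put $P:=\sev(\beta')$. Composing the hypothesis $\sev(\alpha)\cong\sev(\beta)$ with the isomorphism from Step~1 gives $\sev(\alpha)\cong P$, since ipomset isomorphisms compose: the composite bijection preserves labels and precedence, and the interface squares paste. Lemma~\ref{lemma: Sev and SP} then provides a lifting $\alpha'\in\Path_{SX}$ of $\alpha$ with $\sev(\alpha')=P=\sev(\beta')$, which is exactly the claim.

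\textbf{Main obstacle.} The only delicate point is the meaning of ``$=$'' in the conclusion. Ipomset labels are defined only up to the choice of carrier, and the interfaces are canonical concsets, so equality must be read as equality of the interface data after identifying carriers along the given isomorphism. I would make this explicit by noting that Lemma~\ref{lemma: Sev and SP} produces equality in precisely this sense, namely that the isomorphism restricts to identities on the canonical interfaces $U,V$.

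I would also check that the isomorphism $\sev(\alpha)\cong P$ has equal path lengths on both sides, so that the inductive decomposition $\alpha=\alpha_1*\alpha_2$ used inside Lemma~\ref{lemma: Sev and SP} is compatible with $P$. This holds because $\beta'$ has the same shape as $\beta$, and the splitting of $P$ is induced by the isomorphism rather than by $\beta'$ itself.
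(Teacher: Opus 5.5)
Your proposal is correct and matches the paper's proof, which simply applies Lemma~\ref{lemma: Sev and SP} with $P=\sev(\beta)$; your explicit choice $\beta'=S\beta$, together with the observation that $\sev(S\beta)=\sev(\beta)$ holds on the nose, spells out the lifting of $\beta$ that the paper leaves implicit. Your closing remark about path lengths is not needed, because Lemma~\ref{lemma: Sev and SP} places no length condition on $P$, and your stated justification for it (that $\beta'$ has the same shape as $\beta$) is not the right reason; equal lengths do hold anyway, since a path whose label is $P\colon U\to V$ has length $2|P|-|U|-|V|$.
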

\begin{proof}
    Apply Lemma \ref{lemma: Sev and SP} for $P=\ev(\beta)$.
\end{proof}
Now we align paths up to congruence while preserving matching events.
\begin{lemma}\label{lemma: from ev = to ev equiv}
    Let $X$ and $Y$ be (s)precubical sets, and let $\alpha\in\Path_X$ and
$\beta\in\Path_Y$. If $\ev(\alpha)=\ev(\beta)$ then there exists a path $\gamma\simeq \alpha \in\Path_X$ such that $\gamma \equiv \beta$.
\end{lemma}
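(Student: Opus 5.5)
Let $\alpha=(x_0,d^{k_1}_{i_1},x_1,\ldots,d^{k_m}_{i_m},x_m)$ and $\beta=(y_0,d^{k'_1}_{r_1},y_1,\ldots,d^{k'_n}_{r_n},y_n)$ with $\ev(\alpha)=\ev(\beta)=:P$. Both labels are gluings of elementary starters and terminators (Definition~\ref{def: path pomset}), so each step of either path corresponds to exactly one event of $P$: a start for an up-step and a termination for a down-step. Every event of $P$ outside the source interface is started exactly once, and every event outside the target interface is terminated exactly once. Hence $m=n$, and $\alpha$ and $\beta$ determine two linearisations $L_\alpha,L_\beta$ of the same finite set of ``boundary points'' $\{e^+,e^-\}$ of $P$. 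Both linearisations are compatible with the interval order: $e<_P e'$ forces $e^-$ before $e'^+$. My approach is to transform $\alpha$ by congruence moves until $L_\alpha=L_\beta$, and then to check that equal linearisations give matching events.

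\emph{Step 1: a normal form for linearisations.} The only freedom in an interval representation of $P$ is the relative order of (i) two consecutive starts, (ii) two consecutive terminations, and (iii) a start and a termination of concurrent events. Since $\ev(\alpha)=\ev(\beta)$, the two linearisations induce the same precedence order. A termination $e^-$ precedes a start $e'^+$ in one linearisation iff $e<_P e'$, and the same then holds in the other. So the relative order of every start/termination pair already agrees. The linearisations can therefore differ only inside maximal blocks of consecutive same-polarity steps. I will prove this by induction on the position of the first disagreement: at that position the two steps must have the same polarity, since otherwise a start/termination pair would be ordered differently.

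\emph{Step 2: realise block permutations by congruence.} Within a block of consecutive up-steps (or down-steps), the adjacency rules (1) and (2) of Definition~\ref{def:adjacency} swap two neighbouring steps. The intermediate cell $x'_\ell$ is uniquely determined, and the swap does not change the label up to equality. This last point holds because each swap only renumbers interface positions via the cubical identity $d_jd_i=d_id_{j-1}$, and the glued carrier and order are unchanged. By bubble-sorting each block of $\alpha$ into the order prescribed by $L_\beta$ (using Definition~\ref{def:path-congruence}), I obtain $\gamma\simeq\alpha$ with $\ev(\gamma)=P$ and $L_\gamma=L_\beta$.

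\emph{Step 3: equal linearisations imply $\gamma\equiv\beta$.} I argue stepwise that the prefix labels $\ev(\gamma_{\le j})$ and $\ev(\beta_{\le j})$ coincide as sub-ipomsets of $P$, together with their target interfaces. The target interface is the set of started-but-unterminated events. Since equality (not merely isomorphism) of interfaces is tracked through canonical injections $\iota_i$, the index of the event started or terminated at step $j$ is its position in that interface. That position is the same for both paths, so $i_j=r_j$, and the elementary step labels coincide, which is exactly the matching-events condition.

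I expect Step 3, together with the index bookkeeping in Step 2, to be the main obstacle. ``Equality'' of labels must be read as equality of concrete ipomsets with canonical interfaces. One has to verify that swapping two same-polarity steps really leaves the label literally equal rather than just isomorphic, and that the position of an event in each intermediate interface is determined by $P$ alone. I would handle this by an explicit computation with $\iota_j\circ\iota_i=\iota_i\circ\iota_{j-1}$, mirroring the first-component argument in the surjectivity lemma for $F$.
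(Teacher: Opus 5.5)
Your Steps 1 and 2 are sound. The relative order of every start/termination pair is forced by $<_P$, and any two consecutive same-polarity steps can be exchanged by rule (1) or (2) or its inverse.

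The gap is in Step 3, at the point you flagged: the claim that the position of the started or terminated event is ``the same for both paths''. The index $i_j$ is the event's position in the event order of the intermediate cell. An order-free ipomset records such a numbering only for its source and target interfaces. Once the numbering of a block's top (or bottom) cell is known, all positions inside a same-polarity block follow, because the intermediate cells are faces of it via order-preserving $\iota$'s. But at a turning point from an up-block to a down-block, the top cell's numbering is not visible in $P$.

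Here is a concrete example. Let $Z$ consist of one square $z\in Z(2,ab)$ with its faces. In $SZ$, take the two vertex-to-vertex paths that start $a$, start $b$, terminate $a$, terminate $b$:
\begin{itemize}
\item $\alpha$ passes through $(\mathrm{id},z)\in SZ(2,ab)$, with face maps $d^0_1,d^0_2,d^1_1,d^1_1$;
\item $\beta$ passes through $((12),z)\in SZ(2,ba)$, with face maps $d^0_1,d^0_1,d^1_2,d^1_1$.
\end{itemize}
Both labels are $a\parallel b$ with empty interfaces, and $L_\alpha=L_\beta$. Yet $b$ sits at position $2$ of the square in $\alpha$ and at position $1$ in $\beta$.

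Choosing $\gamma$ differently cannot repair this. Rules (1) and (2) only replace the cell between two same-polarity steps, so they never change the relative order of two co-active events. In the example, every $\gamma\simeq\alpha$ still has $(\mathrm{id},z)$ as its middle cell, so its second step has carrier $(2,ab)$, while the second step of $\beta$ has carrier $(2,ba)$. Hence no $\gamma\simeq\alpha$ satisfies $\gamma\equiv\beta$.

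So under the order-free reading of ``$=$'' that your Step 2 needs, the conclusion is not available. A reading fine enough for Step 3 either destroys the swap-invariance of the label used in Step 2 (literal equality of glued carriers), or reinstates the event order the paper deliberately discards.

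The paper's own proof does not extract the indices from $P$ either. Its two-step base case assumes $\ev(x_s)=\ev(y_s)$ for the end cells, and its inductive step presupposes that $\beta$ coincides with $\alpha$ outside a congruent middle segment.

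What your Steps 1--2 do establish is $\gamma\simeq\alpha$ with $L_\gamma=L_\beta$. For paths from the initial cell this already yields equal ST-traces, because each ST-symbol is the event's label together with the position of its start, and both are read off the linearisation. That is all Lemma~\ref{lemma:liftings} actually uses.
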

\begin{proof}
    We employ induction on the length of $\alpha$. %Recall that the ST--trace is defined only for paths in HDA, that is starting from the iniial cell.
    \begin{itemize}
        %\item If $\alpha=(x)$, then just take $\gamma=\alpha$.
        \item If $\alpha=(x_1,d_i^k,x_2,d_j^k,x_3)$ for $k=0,1$,  then since $\ev(\alpha)=\ev(\beta)$, $\beta=(y_1,d_r^k,y_2,d_l^k,y_3)$ such that $\ev(x_s)=\ev(y_s)$ for $s=1,3$. Thus, there are two cases:
        \begin{itemize}
            \item if $i=r$ and $j=l$ then take $\gamma= \alpha$.
            \item If $i=l$ and $j=r$ then take the replacement segment (exists and unique as detailed in Definition \ref{def:adjacency}) $\gamma= (y_1,d_l^k,y^{\prime}_2,d_r^k,y_3)$.
        \end{itemize}
        \item If $\alpha=\alpha_{1}*\alpha_{2}*\alpha_3$ and $\beta=\alpha_{1}*\beta_{2}*\alpha_3$ such that $\alpha_2\simeq \beta_2$. The required condition follows immediately from the induction hypothesis.\qedhere
    \end{itemize}
\end{proof}
Combining the previous constructions, ipomset equivalence can be lifted to
stepwise agreement.%, thus equality of ST--traces (Lem. \ref{lemma: new lemma3 ev then ST}), in the symmetric setting.
\begin{lemma}\label{lemma:liftings}
    Let $\mathcal{X}$ and $\mathcal{Y}$ be HDAs, and let $\alpha\in\Path_{\mathcal{X}}$ and $\beta\in\Path_{\mathcal{Y}}$. If $\sev(\alpha)\cong \sev(\beta)$, then there exist $\gamma\in\Path_{S\mathcal{X}}$ lifting of $\alpha$, $\beta'\in\Path_{S\mathcal{Y}}$ lifting of $\beta$, and $\alpha' \cong \gamma $ such that ST-trace$(\alpha')=$ST-trace$(\beta')$.
\end{lemma}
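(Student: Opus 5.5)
The proof chains the three preceding lemmas, passing to the symmetric setting only so that ``isomorphic'' can be replaced by ``equal''. Fix $\alpha\in\Path_{\mathcal X}$ and $\beta\in\Path_{\mathcal Y}$ with $\sev(\alpha)\cong\sev(\beta)$.

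First, apply Lemma~\ref{lemma: Sev and Sev}. It yields a lifting $\gamma\in\Path_{SX}$ of $\alpha$ and a lifting $\beta'\in\Path_{SY}$ of $\beta$ with $\ev(\gamma)=\ev(\beta')$, an honest equality of ipomsets. I would check that these liftings start at the right cells, so that $\gamma\in\Path_{S\mathcal X}$ and $\beta'\in\Path_{S\mathcal Y}$. The initial cell of $S\mathcal X$ is $(\id.i_X)$, a $0$-cell, and $\mathfrak S_0$ is trivial. Hence any lifting of a path starting at $i_X$ starts at $(\id.i_X)$, and likewise for $\mathcal Y$.

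Second, apply Lemma~\ref{lemma: from ev = to ev equiv} to $\gamma$ and $\beta'$ in the sprecubical sets $SX$ and $SY$. It gives a path $\alpha'\simeq\gamma$ in $\Path_{SX}$ with $\alpha'\equiv\beta'$. This $\alpha'$ is the path the statement calls ``$\alpha'\cong\gamma$'', understood as congruence of paths. Congruence replacements only act on interior segments $(\varphi_\ell,x_\ell,\varphi_{\ell+1})$ and leave the endpoints fixed. So $\alpha'$ still starts at $(\id.i_X)$, and $\alpha'\in\Path_{S\mathcal X}$.

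Third, since $\alpha'\equiv\beta'$ and both are executions of sHDAs, Lemma~\ref{lemma: new lemma3 ev then ST} gives $\text{ST-trace}(\alpha')=\text{ST-trace}(\beta')$. This completes the chain. One could also note, via Proposition~\ref{lem:ST-invariant-under-S}, that $\beta'$ has the same ST-trace as $\beta$, which connects the result back to the original HDA $\mathcal Y$.

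The step I expect to be the main obstacle is the second. Lemma~\ref{lemma: from ev = to ev equiv} is stated for general (s)precubical sets, but here it must be applied to paths starting at the initial cell. Its proof only handles length-two same-polarity segments and the propagation through contexts. I would therefore first make explicit a normalisation argument. When $\ev(\gamma)=\ev(\beta')$ with the same interfaces, the two paths start and terminate the same events, in orders that may differ only by swapping consecutive same-polarity steps. The sequence of face steps of $\beta'$ can then be reached from that of $\gamma$ by a bubble-sort of adjacent same-polarity swaps, each of which is a congruence step of Definition~\ref{def:path-congruence}.

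The point needing care is a possible mismatch where $\gamma$ has a start before a termination and $\beta'$ has the reverse order. Such a swap is a subsumption, not a congruence, and it would change the precedence order. So if the ipomsets are literally equal, this mismatch cannot occur. I would argue this from the definition of gluing, which adds precedence exactly between events terminated before and events started after.
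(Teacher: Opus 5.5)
Your proposal is correct and is essentially the paper's proof. The paper fixes $\beta'=S\beta$ and takes $\gamma$ from Lemma~\ref{lemma: Sev and Sev} with $\ev(\gamma)=\sev(\beta)$; it then chains Lemma~\ref{lemma: from ev = to ev equiv} (with $\alpha'\cong\gamma$ read as congruence) and Lemma~\ref{lemma: new lemma3 ev then ST} exactly as you do, and your check that liftings and congruent paths stay anchored at $(\id.i_X)$ fills in a detail the paper leaves implicit. If you write out your bubble-sort justification, note that sorting by precedence only fixes which event is started or terminated at each step: obtaining identical face indices, so that $\alpha'\equiv\beta'$, also needs $\gamma$ to place concurrent events in the same positions as $\beta'$, which is what the choice of permutations in Lemma~\ref{lemma: Sev and SP} is for.
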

\begin{proof}
    Let $\beta'=S\beta$. By Lemma \ref{lemma: Sev and Sev}, there exists $\gamma\in\Path_{S\mathcal{X}}$ lifting of $\alpha$ such that $\ev(\gamma)=\sev(\beta)$. By Lemma \ref{lemma: from ev = to ev equiv}, there exists $\alpha' \simeq \gamma$ such that $\alpha'\equiv \beta'$. By Lemma \ref{lemma: new lemma3 ev then ST}, ST--trace$(\alpha')=$ST--trace$(\beta')$.
\end{proof}
We can now return to the original HDAs.
\begin{proposition}\label{prop: from pomset to st}
    Let $\mathcal{X}$ and $\mathcal{Y}$ be HDAs, and let $\alpha\in\Path_{\mathcal{X}}$ and $\beta\in\Path_{\mathcal{Y}}$. If $\sev(\alpha)\cong \sev(\beta)$, then there exists $\gamma\simeq \alpha$  such that ST-trace$(\gamma)=$ST-trace$(\beta)$.
\end{proposition}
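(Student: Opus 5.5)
The plan is to derive the statement from Lemma~\ref{lemma:liftings} and then move the result from the symmetrised automata back to $\mathcal{X}$ and $\mathcal{Y}$. Assume $\sev(\alpha)\cong\sev(\beta)$. Lemma~\ref{lemma:liftings} gives three paths: a lifting $\gamma_S\in\Path_{S\mathcal{X}}$ of $\alpha$, a lifting $\beta'\in\Path_{S\mathcal{Y}}$ of $\beta$ (which can be taken to be the canonical lifting $S\beta$), and a path $\alpha'\simeq\gamma_S$ in $S\mathcal{X}$ with ST-trace$(\alpha')=$ST-trace$(\beta')$. The first reduction is on the $\mathcal{Y}$ side. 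Proposition~\ref{lem:ST-invariant-under-S} gives ST-trace$(\beta')=$ST-trace$(\beta)$, so ST-trace$(\alpha')=$ST-trace$(\beta)$.

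Next I would push $\alpha'$ down to $X$. By the discussion after Lemma~\ref{lem:paths-in-SX}, every path in $SX$ lifts a unique underlying path in $X$. Let $\gamma\in\Path_{\mathcal{X}}$ be the underlying path of $\alpha'$, obtained by setting $i_j=\tau_{j-k_j}^{-1}(p_j)$. It starts at $i_X$ because $\alpha'$ starts at $(\id.i_X)$. Applying Proposition~\ref{lem:ST-invariant-under-S} again gives ST-trace$(\gamma)=$ST-trace$(\alpha')=$ST-trace$(\beta)$. This settles the trace half of the statement.

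It remains to show $\gamma\simeq\alpha$. I would prove a projection lemma: if $\delta_1\simeq\delta_2$ in $SX$, then their underlying paths are congruent in $X$. Congruence is generated by single same-polarity swaps inside a context, so it is enough to handle one elementary replacement $(d^0_i,x_\ell,d^0_j)\simeq(d^0_{j-1},x'_\ell,d^0_i)$ in $SX$ with $i<j$, and likewise for terminations. Unwinding the coherence conditions of Lemma~\ref{lem:paths-in-SX}, the two cofaces project under the relevant permutations to two cofaces $d^0_{i'}$, $d^0_{j'}$ in $X$. The original segment and the swapped segment project to the two sides of a same-polarity swap in $X$; when $i'>j'$ the roles of the indices are exchanged. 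The swap exists in $X$ because the cells are $\theta.y$ with $y\in X$, and the presheaf action of $SX$ on cofaces computes faces through $X$ (Definition~\ref{def:symmetriser}). Once this lemma is in place, $\alpha'\simeq\gamma_S$ projects to $\gamma\simeq\alpha$, since $\gamma_S$ projects to $\alpha$.

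I expect the index bookkeeping in the projection lemma to be the main obstacle. It requires checking that the cubical identity $d^\ell_jd^k_i=d^k_id^\ell_{j-1}$ is compatible with the permutation–face interchange $d^k_i\circ d_i\theta=\theta\circ d^k_{\theta^{-1}(i)}$, and that the intermediate cell $x'_\ell$ of $SX$ projects to the intermediate cell required in $X$. I would first try to reduce this to the two-step case through Proposition~\ref{prop:base-iso}: a two-step segment corresponds to one morphism of $\Xi$ with two canonical presentations, and its image under the order-forgetting functor fixes the underlying $X$-square uniquely.
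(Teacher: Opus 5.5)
Your proposal is correct and follows the paper's proof. You apply Lemma~\ref{lemma:liftings}, transfer the congruence $\alpha'\simeq\gamma_S$ in $SX$ down to a congruence $\gamma\simeq\alpha$ in $X$ with $\alpha'$ a lifting of $\gamma$, and use Proposition~\ref{lem:ST-invariant-under-S} on both the $\mathcal{X}$ and $\mathcal{Y}$ sides. The only difference is that the paper obtains this projection of congruence by citing Kahl's Prop.~5.2, whereas you prove it directly from single same-polarity swaps. That direct proof goes through as you sketch: a swap around the top cell $\theta.z$ of the segment projects to the swap in $X$ that removes the events $\theta^{-1}(i)$ and $\theta^{-1}(j)$ of $z$ in the two orders. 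So the detour through canonical presentations in $\Xi$ is not needed.
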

\begin{proof}
    By Lemma\ref{lemma:liftings}, there exist $\alpha^{\prime}\in\Path_{S\mathcal{X}}$ lifting of $\alpha$, $\beta'\in\Path_{S\mathcal{X}}$ lifting of $\beta$, and $\gamma'\cong \alpha' $ such that ST-trace$(\gamma')=$ST-trace$(\beta')$. Further, by {\cite[Prop.~5.2]{KAHL202247}}, there exists $\gamma \in \Path_{X}$ such that $\gamma \simeq\alpha$ and $\gamma^{\prime}$ lifting of $\gamma$. On the one hand we have, ST-trace$(\gamma')=$ST-trace$(\beta')$. On the other hand, by Proposition \ref{lem:ST-invariant-under-S}, ST--trace$(\gamma')=$ST--trace$(\gamma)$ and ST--trace$(\beta)=$ST--trace$(\beta')$. Thus, ST--trace$(\beta)=$ST--trace$(\gamma)$.
\end{proof}
\section{Bisimulations for Higher Dimensional Automata}\label{sec: Bis for HDA}
We reformulate history preserving and hereditary history preserving
bisimulation for HDAs inipomset terms, connecting them to open maps
\cite{JOYAL1996164} and language semantics \cite{LanguageofHDA}.
 \begin{definition} %\label{def: hp bisimulation}
 A \emph{history preserving bisimulation} (hp-bisimulation) between HDAs $\mathcal{Y}$ and $\mathcal{Z}$ is a symmetric relation $R$ between paths in $Y$ and $Z$ such that
\begin{enumerate}
\item \label{enn: h bisim.e1}initial paths $(i_Y)$ and $(i_Z)$ are related;
\item \label{enn: h bisim.e2} for all $( \rho, \sigma)\in R$, % $sev(\alpha) \cong sev(\beta)$%\safareplace{ $ST$-$trace(\rho)=ST$-$trace(\sigma)$}{$sev(\alpha) \cong sev(\beta)$};
ST--trace$(\rho)=ST$-$trace(\sigma)$; 
\item \label{enn: h bisim.e3} for all $( \rho, \sigma)\in R$ and path $\rho'$ in $Y$ such that $\rho$ and $\rho'$ may be concatenated,
  there exists a path $\sigma'$ in $Z$ such that $( \rho * \rho',
  \sigma* \sigma')\in R$;
\item \label{enn: h bisim.e4} for all $(\rho, \sigma)\in R$ and path $\rho'$ in $Y$ such that $\rho \stackrel{\ell}{\leftrightsquigarrow} \rho'$, there exists a path $\sigma'$ in $Z$ such that $\sigma \stackrel{\ell}{\leftrightsquigarrow}  \sigma'$ and $(\rho',\sigma')\in R$;
\end{enumerate}
  The relation $R$ is called \emph{hereditary history preserving bisimulation} (hhp-bisimulation) if, in addition, it satisfies:
\begin{enumerate}
\setcounter{enumi}4
\item \label{enn: h bisim.e6} for all $( \rho, \sigma)\in R$ and $\rho'$ restriction of $\rho$, there exists $\sigma'$ restriction of $\sigma$ such that $( \rho',\sigma')\in R$.
\end{enumerate}
We say that $\mathcal{X}$ and $\mathcal{Y}$ are (hereditary) \emph{history preserving bisimilar} and write $\mathcal{X}  \approx_{(h)hp} \mathcal{Y}$ if there exists a (hereditary) \emph{history- preserving bisimulation} $R$ between them; this is an equivalence relation.
\end{definition}
 ST-bisimulation is defined as the hp-bisimulation but dropping clause 4 and 5.
\begin{theorem}\label{th:st-bis main result}
    Two HDAs $\mathcal{X}$ and $\mathcal{Y}$ are ST-bisimilar \emph{iff}
there exists
\(
   R\subseteq\!\Path_{\mathcal{X}}\times\Path_{\mathcal{Y}}
\)
such that:
\begin{enumerate}
\item initial paths $(i_Y)$ and $(i_Z)$ are related;
\item  $(\alpha,\beta)\in R
   \;\Longrightarrow\;
   \sev(\alpha)\cong \sev(\beta);$
\item $R$ respects path initial inclusion: for all $( \rho, \sigma)\in R$ and path $\rho'$ in $Y$ such that $\rho$ and $\rho'$ may be concatenated,
  there exists a path $\sigma'$ in $Z$ such that $( \rho * \rho',
  \sigma* \sigma')\in R$;
  \end{enumerate}
\end{theorem}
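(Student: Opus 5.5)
The plan is to prove the two implications separately, using Lemma~\ref{lemma:pomset and st trace} for the forward direction and Proposition~\ref{prop: from pomset to st} together with congruence-closure for the converse.

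\emph{Forward direction.} Let $R$ be an ST-bisimulation between $\mathcal X$ and $\mathcal Y$, i.e.\ a relation satisfying clauses 1--3 of hp-bisimulation. I take the same relation $R$ as the witness for the theorem.
\begin{itemize}
\item Clauses 1 and 3 of the theorem are literally clauses 1 and 3 of ST-bisimulation.
\item For clause 2, a related pair $(\alpha,\beta)\in R$ has $\mathrm{ST\text{-}trace}(\alpha)=\mathrm{ST\text{-}trace}(\beta)$. Both paths start at initial cells, so Lemma~\ref{lemma:pomset and st trace} applies and gives $\sev(\alpha)\cong\sev(\beta)$.
\end{itemize}
Symmetry of $R$ is inherited as well, so this direction is immediate.

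\emph{Converse direction.} Let $R$ satisfy the three clauses of the theorem. I will build an ST-bisimulation
\[
R' := \{(\gamma,\beta)\mid \exists\,\alpha\ (\alpha,\beta)\in R,\ \gamma\simeq\alpha,\ \mathrm{ST\text{-}trace}(\gamma)=\mathrm{ST\text{-}trace}(\beta)\},
\]
and symmetrise it if needed.
\begin{itemize}
\item \textbf{Initial paths.} $(i_X)$ and $(i_Y)$ are related in $R$, and their ST-traces are both empty. Taking $\gamma=\alpha=(i_X)$ puts them in $R'$.
\item \textbf{Trace equality.} This holds by construction of $R'$.
\item \textbf{Non-emptiness.} For every $(\alpha,\beta)\in R$, Proposition~\ref{prop: from pomset to st} supplies some $\gamma\simeq\alpha$ with $\mathrm{ST\text{-}trace}(\gamma)=\mathrm{ST\text{-}trace}(\beta)$. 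So every pair of $R$ is represented in $R'$ up to congruence on the left.
\item \textbf{Extension.} Given $(\gamma,\beta)\in R'$ and an extension $\rho'$ of $\gamma$, note that $\gamma\simeq\alpha$ forces $\gamma$ and $\alpha$ to share endpoints, since congruence only replaces interior cells. Hence $\alpha*\rho'$ is defined and $\alpha*\rho'\simeq\gamma*\rho'$ by the context clause of Definition~\ref{def:path-congruence}. Clause 3 of $R$ yields $\sigma'$ with $(\alpha*\rho',\beta*\sigma')\in R$, and clause 2 gives $\sev(\alpha*\rho')\cong\sev(\beta*\sigma')$.
\end{itemize}

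\emph{Main obstacle.} Proposition~\ref{prop: from pomset to st} only produces \emph{some} path congruent to $\alpha*\rho'$ with the correct ST-trace. To land in $R'$, I need this path to be exactly $\gamma*\rho'$, i.e.\ a congruent path that keeps the prefix $\gamma$ fixed. Two approaches:
\begin{itemize}
\item \textbf{First attempt: enlarge $R'$.} Let $R'$ relate every $\gamma'\simeq\alpha$ to every $\beta'\simeq\beta$ whenever their ST-traces agree. Then compare $\gamma*\rho'$ with paths congruent to $\beta*\sigma'$, using that the congruence classes involved are determined by the ipomset. This is plausible via Lemma~\ref{lemma: from ev = to ev equiv}: equal labels allow stepwise alignment by swaps of equal polarity.
\item \textbf{Fallback: redo the proof with a fixed prefix.} Rerun the argument of Proposition~\ref{prop: from pomset to st} while keeping the prefix fixed. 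Lift to $S\mathcal X$ and $S\mathcal Y$, where Lemma~\ref{lemma: Sev and SP} lets me choose liftings with literally equal labels. Since labels are built stepwise and prefixes embed into the full label, the liftings of the prefixes can be chosen compatibly. The alignment of Lemma~\ref{lemma: from ev = to ev equiv} then only needs to permute steps inside $\rho'$ and $\sigma'$, so the prefix is untouched.
\end{itemize}
In either version, the last step is to descend back to $\mathcal X$ and $\mathcal Y$ with \cite[Prop.~5.2]{KAHL202247} and Proposition~\ref{lem:ST-invariant-under-S}, as in the proof of Proposition~\ref{prop: from pomset to st}. That gives the required extension inside $R'$ and completes the converse.
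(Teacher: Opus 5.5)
\textbf{Gap: the extension clause for $R'$ is never established.} Your forward direction is the paper's. Your $R'$ is exactly the relation the paper builds for the converse, and the paper then asserts it is an ST-bisimulation ``by construction''. You are right that the extension clause is the real content, but neither of your two routes supplies it.

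What is missing is \emph{coherence of the event matchings}. Equality of $\mathrm{ST\text{-}trace}(\gamma)$ and $\mathrm{ST\text{-}trace}(\beta)$ fixes a matching of the events of $\gamma$ with those of $\beta$, by start positions. Clauses 2 and 3 of $K$, however, only give \emph{some} isomorphism $\ev(\alpha*\rho')\cong\ev(\beta*\sigma')$. This isomorphism need not restrict on the prefix to that ST-matching; for instance, it may differ from it by an automorphism of $\ev(\gamma)$ swapping two concurrently running $a$-events. In that case $\sigma'$ terminates the ``wrong'' running event of $\beta$.

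Such a mismatch sits in the prefix, so it cannot be repaired by either of your routes.
\begin{itemize}
\item Your first route uses equal-polarity swaps inside the suffix, which cannot change the prefix. Moreover, Lemma~\ref{lemma: from ev = to ev equiv} needs literal equality of labels and realigns the whole path, prefix included.
\item Your fallback chooses liftings of the suffix. There, ``the liftings of the prefixes can be chosen compatibly'' is precisely the unproved claim. The lifting of the prefix is dictated by the restriction of the isomorphism, which need not be the lifting realising $\mathrm{ST\text{-}trace}(\gamma)=\mathrm{ST\text{-}trace}(\beta)$.
\end{itemize}

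\textbf{The step actually fails from what you use.} Your argument uses only the forward transfer of $K$ and label isomorphism, and these do not imply forward ST-transfer. Define $\mathcal X$ as follows, with the remaining faces added freely:
\begin{itemize}
\item a square $x\in X(2,aa)$ with $i_X=\delta^0_1\delta^0_1x$;
\item a square $y\in X(2,ab)$ with $\delta^0_2y=\delta^1_1x$;
\item a square $z\in X(2,ac)$ with $\delta^0_2z=\delta^1_2x$.
\end{itemize}
So terminating event $1$ of $x$ enables $b$, and terminating event $2$ enables $c$. Let $\mathcal Y$ be the same except that $\delta^0_2z=\delta^1_1x$.

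Relate each path of $\mathcal X$ to the path of $\mathcal Y$ doing the corresponding steps, except that in $\mathcal Y$ event $1$ is always the first $a$ to start and the first to terminate. This relation satisfies clauses 1--3. Indeed, any path ending in $x$ has the symmetric label $a\parallel a$, so swapping the two $a$'s at the first termination keeps labels isomorphic.

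Yet no forward ST-relation exists. A partner of the $\mathcal X$-path ``start event $1$, start event $2$'' must fix which $a$ of $\mathcal Y$ started first. It then cannot answer both ``terminate the first-started $a$, start $b$'' and ``terminate the second-started $a$, start $c$''.

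Hence a correct converse must use a backward transfer for $K$. The statement as written does not provide one, and ``symmetrise if needed'' does not address this. It must also contain an argument producing coherent matchings, and neither of your routes has such an idea.
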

\begin{proof}
$"\Rightarrow" $ Let $R$ be an ST--bisimulation between $\mathcal{X}$ and $\mathcal{Y}$. By Lemma~\ref{lemma:pomset and st trace}, $R$ satisfies the three conditions
above.

\emph{``$\Leftarrow$''} Let $K$ be a relation satisfying the stated conditions.
By Proposition~\ref{prop: from pomset to st}, $K$ induces a relation $R \;=\;
\{\,(\alpha',\beta) \mid (\alpha,\beta)\in K,\;
\alpha'\simeq\alpha,\;
\text{ST--trace}(\alpha')=\text{ST--trace}(\beta)\,\}.$

By construction, $R$ is an ST--bisimulation relating $\mathcal{X}$ and $\mathcal{Y}$.
\end{proof}
\begin{theorem}\label{th: hhp-bis main result}
Two HDAs $ \mathcal{X}$ and $\mathcal{Y}$ are \textup{(h)hp}–bisimilar \emph{iff}
there exists a relation
\(
   R\subseteq\!\Path_{\mathcal{X}}\times\Path_{\mathcal{Y}}
\)
that satisfies conditions \textup{1, 3, 4, (5)}
of the hhp–bisimulation definition,
\emph{and} the following replacement of clause~2: $(\alpha,\beta)\in R
   \;\Longrightarrow\;
   \sev(\alpha)\cong \sev(\beta).$
\end{theorem}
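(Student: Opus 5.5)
The proof follows the pattern of Theorem~\ref{th:st-bis main result}, with the two extra clauses handled separately. For the direction ``$\Rightarrow$'', start from a (h)hp-bisimulation $R$ between $\mathcal{X}$ and $\mathcal{Y}$ and keep the same relation. Clauses 1, 3, 4 and (5) are untouched. Clause~2, equality of ST--traces, gives $\sev(\alpha)\cong\sev(\beta)$ for every related pair by Lemma~\ref{lemma:pomset and st trace}. So $R$ satisfies the modified conditions, and this direction is immediate.

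For ``$\Leftarrow$'', let $K$ satisfy clauses 1, 3, 4, (5) together with the ipomset clause. As in Theorem~\ref{th:st-bis main result}, define
\[
R=\{(\gamma,\beta)\mid (\alpha,\beta)\in K,\ \gamma\simeq\alpha,\ \text{ST--trace}(\gamma)=\text{ST--trace}(\beta)\}.
\]
Proposition~\ref{prop: from pomset to st} guarantees that every $(\alpha,\beta)\in K$ has at least one such representative $\gamma$. We then verify the clauses for $R$ in order.
\begin{itemize}
\item Clause~1 holds because length-zero paths have identical (empty) traces.
\item Clause~2 holds by construction.
\item For clause~3, take $(\gamma,\beta)\in R$ and an extension $\rho'$ of $\gamma$. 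Since congruence is generated by local same-polarity swaps, $\alpha*\rho'\simeq\gamma*\rho'$. Clause~3 of $K$ gives $\sigma'$ with $(\alpha*\rho',\beta*\sigma')\in K$. The ipomset clause and Proposition~\ref{prop: from pomset to st} then supply some $\gamma''\simeq\alpha*\rho'$ whose trace equals that of $\beta*\sigma'$. However, we need this representative to be $\gamma*\rho'$ itself.
\end{itemize}

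I therefore expect the main obstacle to be a strengthened version of Proposition~\ref{prop: from pomset to st}. The claim needed is: if $\sev(\alpha)\cong\sev(\beta)$ and $\gamma\simeq\alpha$, then $\text{ST--trace}(\gamma)$ and $\text{ST--trace}(\beta)$ can be matched by modifying $\beta$ within its congruence class, rather than by re-choosing $\gamma$. One could instead enlarge $R$ to be closed under congruence on both sides. To justify that enlargement, I would show that congruent paths in an HDA have the same ST--trace up to the permutation of simultaneous same-polarity steps. This would use Lemma~\ref{lemma: from ev = to ev equiv} on symmetric liftings, together with Proposition~\ref{prop:glue and s commute conseq} and Kahl's Prop.~5.2 to descend back to $X$.

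With this symmetric closure in hand, the remaining clauses follow. For clause~4, adjacency steps on the $\gamma$-side are transported to $\alpha$ via the congruence. Clause~4 of $K$ gives $\sigma\stackrel{\ell}{\leftrightsquigarrow}\sigma'$, and the ipomset clause again yields traces matched up to congruence. For clause~5, a restriction of $\gamma$ corresponds to a restriction of some congruent path. Clause~5 of $K$ gives a restriction on the $\beta$-side, and the labels of prefixes embed canonically into the full label, so Proposition~\ref{prop: from pomset to st} applies to the restricted pair. Finally, symmetry of $R$ is obtained by also adding the converse of the construction, exchanging the roles of $\mathcal{X}$ and $\mathcal{Y}$. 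This is legitimate because ipomset isomorphism is symmetric.
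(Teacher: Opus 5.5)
Your $(\Rightarrow)$ direction is exactly the paper's. For $(\Leftarrow)$ you use the paper's relation $R$, and the obstacle you isolate at clause~3 is real. The paper argues only clause~4 in detail, by induction on the congruence chain from $\rho$ to $\alpha$, and treats clauses~3 and~5 as holding by construction, so it does not resolve the obstacle either.

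In fact $R$ can violate clause~3. Let $\mathcal X=\mathcal Y$ be a single square $x\in X(2,aa)$, and let $K$ relate every path to its image under exchanging the two events. This $K$ satisfies all the hypotheses; it even has equal ST--traces. Take $\gamma$ = ``start~1, start~2'' and $\alpha$ = ``start~2, start~1'' in $\mathcal X$, and $\beta$ = ``start~1, start~2'' in $\mathcal Y$; then $(\gamma,\beta)\in R$. Extending $\gamma$ by terminating event~1 gives trace $a^+a^+a^1$. The only extension of $\beta$ with this trace terminates event~1 of $\mathcal Y$. The unique $K$-partner of that path ends in $\delta^1_2x$ rather than $\delta^1_1x$, so it is not congruent to $\gamma*\rho'$.

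Your proposed repair does not close the gap.
\begin{itemize}
\item Closing $R$ under congruence on both sides breaks clause~2, which requires \emph{equal} ST--traces. Congruent paths such as ``start $a$, start $b$'' and ``start $b$, start $a$'' have different traces, and agreement ``up to permuting same-polarity steps'' is not what clause~2 asks.
\item Re-choosing $\beta$ inside its congruence class is also unavailable, since clause~3 demands a pair of the form $(\gamma*\rho',\beta*\sigma')$ for the given $\beta$.
\end{itemize}
So clauses~3--5 remain unproved, and your treatment of clauses~4 and~5, which leans on this closure, inherits the problem.

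The missing idea is that no re-indexing is needed. The hypotheses already force equal ST--traces on $K$ itself, so $K$ is the required (h)hp-bisimulation. A sketch:
\begin{itemize}
\item[(i)] Ipomset isomorphism fixes the polarity sequence, because the termination of $e$ precedes the start of $f\neq e$ iff $e<f$.
\item[(ii)] Moving a termination one place to the left, past another event's start or past another termination, is always possible, because the new intermediate cell is a face. Clause~4 therefore makes the partner perform the same moves at the same positions. The symmetric form of clause~4, applied at the position where the move gets stuck, shows that both terminations close the events started at the same position.
\item[(iii)] Terminate all open events using clause~3, then sequentialise using (ii). This yields chains, and their isomorphism forces equal labels at every start position.
\end{itemize}
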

\begin{proof}
$(\Rightarrow)$
Let $R$ be an (h)hp--bisimulation between $\mathcal{X}$ and $\mathcal{Y}$.
Arguing as in the proof of Th~\ref{th:st-bis main result}, $R$ satisfies
the stated conditions, with ipomset isomorphism replacing equality of
ST--traces.

$(\Leftarrow)$
Let $K$ be a relation satisfying the stated conditions.
Construct the relation $R$ as in the proof of
Th~\ref{th:st-bis main result}.
%By Lemma~\ref{lem: cong and adj}, 
For clause~4 of hp--bisimulation, Since $(\rho,\sigma)\in R$, there exists $\alpha$ such that
$\rho\simeq\alpha$, $(\alpha,\sigma)\in K$, and
$\mathrm{ST\text{-}trace}(\rho)=\mathrm{ST\text{-}trace}(\sigma)$.
By Definition~\ref{def:path-congruence}, the congruence $\rho\simeq\alpha$
is generated exclusively by the reversible rules~(1) and~(2) of
Definition~\ref{def:adjacency}.

\medskip
\noindent\textbf{Step 1: $\sigma$ is $\ell$-swappable.} Since $\mathrm{ST\text{-}trace}(\rho)=\mathrm{ST\text{-}trace}(\sigma)$,
the polarity of the face maps at positions $\ell$ and $\ell+1$
coincides in $\rho$ and $\sigma$. \emph{Rules~(1) and~(2).} The ST-trace equality forces $\sigma$
  to have two consecutive face maps of the same polarity at positions
  $\ell$ and $\ell+1$. Since rules~(1) and~(2) are reversible and
  apply to any two consecutive same-polarity steps regardless of index
  ordering, $\sigma$ is $\ell$-swappable, yielding $\sigma
  \stackrel{\ell}{\leftrightsquigarrow}\sigma'$. \emph{Rules~(3) and~(4).} The ST-trace records
  $\mathrm{start}(i_{\ell+1})$ explicitly at every down-step.
  Since $\mathrm{ST\text{-}trace}(\rho)=\mathrm{ST\text{-}trace}(\sigma)$,
  we have $\mathrm{start}_\rho(i_{\ell+1})=\mathrm{start}_\sigma(r_{\ell+1})$.
  Hence the directed swap condition holds for $\sigma$ at position
  $\ell$, giving $\sigma\stackrel{\ell}{\leftrightsquigarrow}\sigma'$. \noindent\textbf{Step 2: $(\rho',\sigma')\in R$.} We must find $\alpha'$ such that $\rho'\simeq\alpha'$,
$(\alpha',\sigma')\in K$, and
$\mathrm{ST\text{-}trace}(\rho')=\mathrm{ST\text{-}trace}(\sigma')$.
We proceed by induction on the length $n$ of the congruence chain
from $\rho$ to $\alpha$. \emph{Base case $n=0$, i.e.\ $\rho=\alpha$.}
Then $\alpha\stackrel{\ell}{\leftrightsquigarrow}\rho'$ is the
same swap as on $\rho$. Clause~(4) of $K$ applied to
$(\alpha,\sigma)\in K$ yields $\sigma\stackrel{\ell}
{\leftrightsquigarrow}\sigma'$ and $(\alpha',\sigma')\in K$ where
$\alpha'=\rho'$. Hence $\rho'\simeq\alpha'$ trivially and
$\mathrm{ST\text{-}trace}(\rho')=\mathrm{ST\text{-}trace}(\sigma')$
by the ST-trace analysis of Step~1. Thus $(\rho',\sigma')\in R$. \emph{Inductive step $n>0$.}
By Definition~\ref{def:path-congruence}, we can write  $ \rho = \gamma_1 * \rho_0 * \delta$ and $ \alpha = \gamma_1 * \alpha_0 * \delta,$ where $\rho_0\simeq\alpha_0$ via a congruence chain of length
$n-1$, using the third clause of Definition~\ref{def:path-congruence}.
We distinguish three sub-cases according to the position of the
$\ell$-swap.

\begin{itemize}
  \item \emph{Swap inside $\gamma_1$ or inside $\delta$.} The
  sub-segment $\rho_0$ is unaffected. The result $\rho'$ has the
  same decomposition with $\rho_0$ unchanged, so $\rho'\simeq\alpha'$
  where $\alpha'$ is obtained by applying the same swap to $\alpha$,
  and $(\alpha',\sigma')\in K$ by clause~(4) of $K$.

  \item \emph{Swap entirely inside $\rho_0$.} The induction
  hypothesis applies directly to $\rho_0$ and $\alpha_0$, yielding
  $\alpha_0'$ with $\rho_0'\simeq\alpha_0'$ and
  $(\gamma_1*\alpha_0'*\delta,\sigma')\in K$. Setting
  $\alpha':=\gamma_1*\alpha_0'*\delta$ gives $\rho'\simeq\alpha'$.

  \item \emph{Swap at the boundary between $\rho_0$ and $\delta$
  (or between $\gamma_1$ and $\rho_0$).} Absorb the two boundary
  steps into an extended segment
  $\rho_0^+ = \rho_0 * (\text{first step of }\delta)$
  and $\delta^- = \delta$ minus its first step, so that
  $\rho = \gamma_1 * \rho_0^+ * \delta^-$ and the swap at $\ell$
  now falls entirely inside $\rho_0^+$. The induction hypothesis
  applies to $\rho_0^+$ and $\alpha_0^+$ (defined analogously),
  reducing to the previous sub-case.
\end{itemize}

In all cases we obtain $\alpha'$ with $\rho'\simeq\alpha'$,
$(\alpha',\sigma')\in K$, and
$\mathrm{ST\text{-}trace}(\rho')=\mathrm{ST\text{-}trace}(\sigma')$,
so $(\rho',\sigma')\in R$. In the hereditary case, clause~5 follows directly by construction.
Hence $R$ is an (h)hp--bisimulation.
\end{proof}
\section{Conclusion}

We have developed an order-free semantic foundation for 
higher-dimensional automata. The central technical 
contributions are the categorical isomorphism between 
HDAs over the unordered base $\Xi$ and symmetric HDAs, 
the canonical assignment of interval ipomsets to execution 
paths, the formal correspondence between ipomset labels 
and ST-traces, and the characterization of ST- and 
hhp-bisimulation via ipomset isomorphism. These results 
have concrete consequences beyond what is explicitly 
developed here.

On the logical side, the order-free path-category 
structure established here provides the missing foundation 
for deriving, via the Open Maps framework, modal logics 
that canonically characterize hhp-bisimulation; the 
temporal and modal logics of 
\cite{AMRANE2025115156, amrane2025buchielgottrakhtenbrottheoremhigherdimensionalautomata, Safa} 
are natural targets for revisiting under this symmetric 
foundation. On the structural side, the elimination of 
the event order artifact opens a cleaner path toward 
systematic translations between HDAs and other models 
of concurrency such as Petri nets, where the mismatches 
documented in \cite{ulipetri, efficintconversion} 
were a direct consequence of the representational 
incompatibility resolved here.
% Manual bibliography; force numeric citations by ignoring optional author--year labels in \bibitem
\makeatletter
\let\latex@bibitem\bibitem
\renewcommand\bibitem[2][]{\latex@bibitem{#2}}
\makeatother

\bibliographystyle{ACM-Reference-Format}
%\bibliography{sample-base}

%\newpage
%\appendix

\end{document}